\documentclass[final]{IEEEtran}

\usepackage{cite}
\usepackage{bm}
\usepackage{amsmath,amsthm}
\usepackage{extarrows}
\usepackage{amssymb}
\usepackage{graphicx}
\usepackage{xcolor}
\usepackage{subfig}
\usepackage{hyperref}
\usepackage{algorithmic,algorithm}

\newcommand{\mb}{\bm}
\newcommand{\mr}{\mathrm}

\newcommand{\BE}{\begin{equation}}
\newcommand{\EE}{\end{equation}}
\newcommand{\BS}{\begin{subequations}}
\newcommand{\ES}{\end{subequations}}
\newcommand{\UH}{\mathsf{H}}   
\newcommand{\UT}{\mathsf{T}}   

\newcommand{\df}{\mathsf{df}}



\newcommand{\Mydef}{\overset{  \scriptscriptstyle \Delta  }{=}}

\newtheorem{theorem}{Theorem}

\newtheorem{assumption}{Assumption}

\newtheorem{remark}{Remark}
\newtheorem{lemma}{Lemma}

\newtheorem{claim}{Claim}

\newcommand{\opt}{\star}
\newcommand{\G}{\langle\bm{G}\rangle} 
\newcommand{\Gs}{\bar{G}} 

\newcommand{\T}{\mathcal{T}}
\newcommand{\Tr}{\mathsf{Tr}}

\newcommand{\Alg}{\textsf{PCA-EP }}

\graphicspath{{figures/}}

\newcommand{\nobracket}{}

\newcommand{\tmmathbf}[1]{\ensuremath{\boldsymbol{#1}}}
\newcommand{\tmop}[1]{\ensuremath{\operatorname{#1}}}

\begin{document}

\title{Spectral Method for Phase Retrieval: an Expectation Propagation Perspective}
\author{Junjie Ma,~Rishabh Dudeja,~Ji Xu,~Arian Maleki,~Xiaodong~Wang

\thanks{This work was supported in part by National Science Foundation (NSF) under grant CCF 1814803 and Office of Naval Research (ONR) under grant N000141712827.}

\thanks{J.~Ma was with the Department of Statistics and the Department of Electrical Engineering of Columbia University, New York, USA. He is now with the Institute of Computational Mathematics and Scientific/Engineering Computing, Academy of
Mathematics and Systems Science, Chinese Academy of Sciences, Beijing, China. (e-mail:majunjie@lsec.cc.ac.cn)}

\thanks{R.~Dudeja and Arian Maleki are with the Department of Statistics, Columbia University, New York, USA. (e-mail: rd2714@columbia.edu; arian@stat.columbia.edu)}

\thanks{J.~Xu is with the Department of Computer Science, Columbia University, New York, USA. (e-mail: jixu@cs.columbia.edu)}

\thanks{X.~Wang is with the Department of Electrical Engineering, Columbia University, New York, USA. (e-mail: xw2008@columbia.edu)}

}

\maketitle

\begin{abstract}
Phase retrieval refers to the problem of recovering a signal $\bm{x}_{\star}\in\mathbb{C}^n$ from its phaseless measurements $y_i=|\bm{a}_i^{\UH}\bm{x}_{\star}|$, where $\{\bm{a}_i\}_{i=1}^m$ are the measurement vectors. Spectral method is widely used for initialization in many phase retrieval algorithms. The quality of spectral initialization can have a major impact on the overall algorithm. In this paper, we focus on the model where $\bm{A}=[\bm{a}_1,\ldots,\bm{a}_m]^{\UH}$ has orthonormal columns, and study the spectral initialization under the asymptotic setting $m,n\to\infty$ with $m/n\to\delta\in(1,\infty)$. We use the expectation propagation framework to characterize the performance of spectral initialization for Haar distributed matrices. Our numerical results confirm that the predictions of the EP method are accurate for not-only Haar distributed matrices, but also for realistic Fourier based models (e.g. the coded diffraction model). The main findings of this paper are the following:
\begin{itemize}
\item There exists a threshold on $\delta$ (denoted as $\delta_{\mr{weak}}$) below which the spectral method cannot produce a meaningful estimate. We show that $\delta_{\mr{weak}}=2$ for the column-orthonormal model. In contrast, previous results by Mondelli and Montanari show that $\delta_{\mr{weak}}=1$ for the i.i.d. Gaussian model.
\item The optimal design for the spectral method coincides with that for the i.i.d. Gaussian model, where the latter was recently introduced by Luo, Alghamdi and Lu.
\end{itemize}
\end{abstract}

\begin{IEEEkeywords}
Phase retrieval, spectral method, coded diffraction pattern, expectation propagation (EP), approximate message passing (AMP), state evolution, orthogonal AMP, vector AMP.
\end{IEEEkeywords}



\section{Introduction}\label{Sec:model}

In many scientific and engineering applications, it is expensive or even impossible to measure the phase of a signal due to physical limitations \cite{Shechtman15}. Phase retrieval refers to algorithmic methods for reconstructing signals from magnitude-only measurements. The measuring process for the phase retrieval problem can be modeled as
\BE
y_i=\left|(\bm{Ax}_{\star})_i\right|,\quad i=1,2\ldots,m,
\EE
where $\bm{A}\in\mathbb{C}^{m\times n}$ ($m>n$) is the measurement matrix, $\bm{x}_{\star}\in\mathbb{C}^{n\times1}$ is the signal to be recovered, and $(\bm{Ax}_{\star})_i$ denotes the $i$th entry of the vector $\bm{Ax}_{\star}$. Phase retrieval has important applications in areas ranging from X-ray crystallography, astronomical imaging, and many others \cite{Shechtman15}.

Phase retrieval has attracted a lot of research interests since the work of Candes \textit{et al} \cite{Candes2013}, where it is proved that a convex optimization based algorithm can provably recover the signal under certain randomness assumptions on $\bm{A}$. However, the high computational complexity of the PhaseLift algorithm in \cite{Candes2013} prohibits its practical applications. More recently, a lot of algorithms were proposed as low-cost iterative solvers for the following nonconvex optimization problem (or its variants) \cite{CaLiSo15,ChenCandes17,Wang2016,Zhang2016reshaped}:
\BE\label{Eqn:loss}
\underset{\bm{x}\in\mathbb{C}^{m}}{\text{argmin}} \  \frac{1}{m}\sum_{i=1}^m \big ( |y_i|^2 - |(\bm{Ax})_i|^2  \big)^2.
\EE
These algorithms typically initialize their estimates using a spectral method \cite{Eetrapalli2013} and then refine the estimates via alternative minimization \cite{Eetrapalli2013}, gradient descend (and variants) \cite{CaLiSo15,ChenCandes17,Wang2016,Zhang2016reshaped} or approximate message passing \cite{MXM18}. Since the problem in \eqref{Eqn:loss} is nonconvex, the initialization step plays a crucial role for many of these algorithms.

Spectral methods are widely used for initializing local search algorithms in many signal processing applications. In the context of phase retrieval, spectral initialization was first proposed in \cite{Eetrapalli2013} and later studied in \cite{ChenCandes17,Wang2016,gao2016phaseless,Lu17,Mondelli2017,Lu2018}. To be specific, a spectral initial estimate is given by the leading eigenvector (after proper scaling) of the following data matrix \cite{Lu17}:
\BE\label{Eqn:data_matrix}
\bm{D}\Mydef \bm{A}^{\UH}\mr{diag}\{\T(y_1),\ldots,\T(y_m)\}\bm{A},
\EE
where $\mr{diag}\{a_1,\ldots,a_m\}$ denotes a diagonal matrix formed by $\{a_i\}_{i=1}^m$ and $\mathcal{T}:\mathbb{R}_+\mapsto\mathbb{R}$ is a nonlinear processing function. A natural measure of the quality of the spectral initialization is the cosine similarity \cite{Lu17} between the estimate and the true signal vector:
\BE\label{Eqn:rho_def}
P^2_\T(\bm{x}_{\star},\hat{\bm{x}})\Mydef \frac{|\bm{x}_{\star}^{\UH}\hat{\bm{x}}|^2}{\|\bm{x}_{\star}\|^2\|\hat{\bm{x}}\|^2},
\EE
where $\hat{\bm{x}}$ denotes the spectral estimate. The performance of the spectral initialization highly depends on the processing function $\T$. Popular choices of $\T$ include the ``trimming'' function $\T_{\mr{trim}}$ proposed in \cite{ChenCandes17} (the name follows \cite{Lu17}), the ``subset method'' in \cite{Wang2016}, $\T_{\mr{MM}}(y)$ proposed by Mondelli and Montanari in \cite{Mondelli2017}, and $\T_{\opt}$ recently proposed in \cite{Lu2018}:
\BS\label{Eqn:T_examples}
\begin{align}
\T_{\mr{trim}}(y)&=\delta{y}^2\cdot\mathbb{I}(\delta{y}^2<c_2^2),\label{Eqn:T_trunc}\\
\T_{\mr{subset}}(y)&=\mathbb{I}(\delta y^2>c_1),\\
\T_{\mr{MM}}(y)&=1-\frac{\sqrt{\delta}}{\delta y^2+\sqrt{\delta}-1},\\
\T_{\opt}(y)&=1-\frac{1}{\delta y^2}.\label{Eqn:T_Lu}
\end{align}
\ES
In the above expressions, $c_1>0$ and $c_2>0$ are tunable thresholds, and $\mathbb{I}(\cdot)$ denotes an indicator function that equals one when the condition is satisfied and zero otherwise.

The asymptotic performance of the spectral method was studied in \cite{Lu17} under the assumption that $\bm{A}$ contains i.i.d. Gaussian entries. The results of \cite{Lu17} unveil a phase transition phenomenon in the regime where $m,n\to\infty$ with $m/n=\delta\in(0,\infty)$ fixed. Specifically, there exists a threshold on the measurement ratio $\delta$: below this threshold the cosine similarity $P_{\T}^2(\hat{\bm{x}},\bm{x}_{\star})$ converges to zero (meaning that $\hat{\bm{x}}$ is not a meaningful estimate) and above it $P^2_{\T}(\hat{\bm{x}},\bm{x}_{\star})$ is strictly positive. Later, Mondelli and Montanari showed in \cite{Mondelli2017} that $\T_{\mr{MM}}$ defined in \eqref{Eqn:T_examples} minimizes the above-mentioned reconstruction threshold. Following \cite{Mondelli2017}, we will call the minimum threshold (over all possible $\T$) the \textit{weak threshold}. It is proved in \cite{Mondelli2017} that the weak thresholds are $0.5$ and $1$, respectively, for real and complex valued models. Further, \cite{Mondelli2017} showed that these thresholds are also information-theoretically optimal for weak recovery. Notice that $\T_{\mr{MM}}$ minimizes the reconstruction threshold, but  does not necessarily maximize $P^2_{\T}(\hat{\bm{x}},\bm{x}_{\star})$ when $\delta$ is larger than the weak threshold. The latter criterion is more relevant in practice, since intuitively speaking a larger $P_{\T}^2(\hat{\bm{x}},\bm{x}_{\star})$ implies better initialization, and hence the overall algorithm is more likely to succeed. This problem was recently studied by Luo, Alghamdi and Lu in \cite{Lu2018}, where it was shown that the function $\T_{\star}$ defined in \eqref{Eqn:T_examples} uniformly maximizes $P_{\T}^2(\hat{\bm{x}},\bm{x}_{\star})$ for an arbitrary $\delta$ (and hence also achieves the weak threshold).

Notice that the analyses of the spectral method in \cite{Lu17,Mondelli2017,Lu2018} are based on the assumption that $\bm{A}$ has i.i.d. Gaussian elements. This assumption is key to the random matrix theory (RMT) tools used in \cite{Lu17}. However, the measurement matrix $\bm{A}$ for many (if not all) important phase retrieval applications is a Fourier matrix\cite{millane1990phase}. In certain applications, it is possible to randomize the measuring mechanism by adding random masks \cite{Candes15_Diffraction}. Such models are usually referred to as \textit{coded diffraction patterns} (CDP) models \cite{Candes15_Diffraction}. In the CDP model, $\bm{A}$ can be expressed as
\BE\label{Eqn:CDP_model}
\bm{A}=
\begin{bmatrix}
\bm{F}\bm{P}_1\\
\bm{F}\bm{P}_2\\
\ldots\\
\bm{F}\bm{P}_L
\end{bmatrix},
\EE
where $\bm{F}^{n\times n}$ is a square discrete Fourier transform (DFT) matrix, and $\bm{P}_l=\mr{diag}\{e^{j\theta_{l,1}},\ldots,e^{j\theta_{l,n}}\}$ represents the effect of random masks. For the CDP model, $\bm{A}$ has orthonormal columns, namely,
\BE
\bm{A}^{\UH}\bm{A}=\bm{I}.
\EE

In this paper, we assume $\bm{A}$ to be an isotropically random unitary matrix (or simply Haar distributed). We study the performance of the spectral method and derive a formula to predict the cosine similarity $\rho^2_{\T}(\hat{\bm{x}},\bm{x}_{\star})$. We conjecture that our prediction is asymptotically exact as $m,n
\to\infty$ with $m/n=\delta\in(1,\infty)$ fixed. Based on this conjecture, we are able to show the following.
\begin{itemize}
\item There exists a threshold on $\delta$ (denoted as $\delta_{\mr{weak}}$) below which the spectral method cannot produce a meaningful estimate. We show that $\delta_{\mr{weak}}=2$ for the column-orthonormal model. In contrast, previous results by Mondelli and Montanari show that $\delta_{\mr{weak}}=1$ for the i.i.d. Gaussian model.
\item The optimal design for the spectral method coincides with that for the i.i.d. Gaussian model, where the latter was recently derived by Luo, Alghamdi and Lu.
\end{itemize}

Our asymptotic predictions are derived by analyzing an expectation propagation (EP) \cite{Minka2001} type message passing algorithm \cite{opper2005,ma2015turbo} that aims to find the leading eigenvector. A key tool in our analysis is a state evolution (SE) technique that has been extensively studied for the compressed sensing problem \cite{ma2015turbo,Ma2016,liu2016,Rangan17,Takeuchi2017,He2017}. Several arguments about the connection between the message passing algorithm and the spectral estimator is heuristic, and thus the results in this paper are not rigorous yet. Nevertheless, numerical results suggest that our analysis accurately predicts the actual performance of the spectral initialization under the practical CDP model. This is perhaps a surprising phenomenon, considering the fact that the sensing matrix in \eqref{Eqn:CDP_model} is still quite structured although the matrices $\{\bm{P}_l\}$ introduce certain randomness.

We point out that the predictions of this paper have been rigorously proved using RMT tools in \cite{Dudeja2020}. It is also expected that the same results can be obtained by using the replica method \cite{Takeda2006,Tulino2013,Vehkapera2014}, which is a non-rigorous, but powerful tool from statistical physics. Compared with the replica method, our method seems to be technically simpler and more flexible (e.g., we might be able to handle the case where the signal is known to be positive).

Finally, it should be noted that although the coded diffraction pattern model is much more practical than the i.i.d. Gaussian model, it is still far from practical for some important phase retrieval applications (e.g., X-ray crystallography). As pointed out in \cite{luke2017phase,elser2018benchmark}, designing random masks for X-ray crystallography would correspond to manipulating the masks at sub-nanometer resolution and thus physically infeasible. On the other hand, it has been suggested to use multiple illuminations for certain types of imaging applications (see \cite[Section 2.3]{fannjiang2020numerics} for discussions and related references). We emphasize that the results in this paper only apply to (empirically) the random-masked Fourier model, and the theory for the challenging non-random Fourier case is still open.

\textit{Notations:} $a^{\ast}$ denotes the conjugate of a complex number $a$. We use bold lower-case and upper case letters for vectors and matrices respectively. For a matrix $\bm{A}$, $\bm{A}^{\UT}$ and $\bm{A}^{\UH}$ denote the transpose of a matrix and its Hermitian respectively. $\mr{diag}\{a_1,a_2,\ldots, a_n\}$ denotes a diagonal matrix with $\{a_i\}_{i=1}^n$ being the diagonal entries. $\bm{x}\sim\mathcal{CN}(\mathbf{0},\sigma^2\bm{I})$ is circularly-symmetric Gaussian if $\mathrm{Re}(\bm{x})\sim\mathcal{N}(\mathbf{0},\sigma^2/2\bm{I})$, $\mathrm{Im}(\bm{x})\sim\mathcal{N}(\mathbf{0},\sigma^2/2\bm{I})$, and $\mathrm{Re}(\bm{x})$ is independent of $\mathrm{Im}(\bm{x})$. For $\bm{a},\bm{b}\in\mathbb{C}^m$, define $\langle\bm{a},\bm{b}\rangle=\sum_{i=1}^m a_i^{\ast}b_i$. For a Hermitian matrix $\bm{H}\in\mathbb{C}^{n\times n}$, $\lambda_1(\bm{H})$ and $\lambda_n(\bm{H})$ denote the largest and smallest eigenvalues of $\bm{H}$ respectively. For a non-Hermitian matrix $\bm{Q}$, $\lambda_1(\bm{Q})$ denotes the eigenvalue of $\bm{Q}$ that has the largest magnitude. $\Tr(\cdot)$ denotes the trace of a matrix. $\overset{\mr{P}}{\to}$ denotes convergence in probability.

\section{Asymptotic Analysis of the Spectral Method}
This section presents the main results of this paper. Our results have not been fully proved, we call them ``claims" throughout this section to avoid confusion. The rationales for our claims will be discussed in Section \ref{Sec:GEC}.
\subsection{Assumptions}\label{Sec:assumption}
In this paper, we make the following assumption on the sensing matrix $\bm{A}$.

\begin{assumption}\label{Ass:matrix}
The sensing matrix $\bm{A}\in\mathbb{C}^{m\times n}$ ($m> n$) is sampled uniformly at random from column orthogonal matrices satisfying $\bm{A}^{\UH}\bm{A}=\bm{I}$.
\end{assumption}

Assumption \ref{Ass:matrix} introduces certain randomness assumption about the measurement matrix $\bm{A}$. However, in practice, the measurement matrix is usually a Fourier matrix or some variants of it. One particular example is the coded diffraction patterns (CDP) model in \eqref{Eqn:CDP_model}. For this model, the only freedom one has is the ``random masks'' $\{\bm{P}_l\}$, and the overall matrix is still quite structured. In this regard, it is perhaps quite surprising that the predictions developed under Assumption \ref{Ass:matrix} are very accurate even for the CDP model. Please refer to Section \ref{Sec:numerical} for our numerical results.

We further make the following assumption about the processing function $\mathcal{T}:\mathbb{R}_+\mapsto\mathbb{R}$.

\begin{assumption}\label{Ass:T}
The processing function $\mathcal{T}:\mathbb{R}_+\mapsto\mathbb{R}$ satisfies $ \sup_{y\ge0}\,\mathcal{T}(y)=T_{\max}$, where $T_{\max}<\infty$.

\end{assumption}

%

As pointed out in \cite{Lu17}, the boundedness of $\T$ in Assumption \ref{Ass:T}-(I) is the key to the success of the truncated spectral initializer proposed in \cite{ChenCandes17}. (Following \cite{Lu17}, we will call it the trimming method in this paper.) Notice that we assume $T_{\max}=1$ without any loss of generality. To see this, consider the following modification of $\T$:
\[
\hat{\T}(y)\Mydef \frac{C+\T(y)}{C+T_{\max}},
\]
where $C>\max\{0,-T_{\max}\}$. It is easy to see that
\[
\begin{split}
&\bm{A}^{\UH}\mr{diag}\{\hat{\T}(y_1),\ldots,\hat{\T}(y_m)\}\bm{A}\\
&=\frac{C}{C+T_{\max}}\bm{I}+\frac{1}{C+T_{\max}}\bm{A}^{\UH}\bm{TA},
\end{split}
\]
where we used $\bm{A}^{\UH}\bm{A}=\bm{I}$. Clearly, the top eigenvector of $\bm{A}^{\UH}\bm{TA}$ is the same as that of $\bm{A}^{\UH}\mr{diag}\{\hat{\T}(y_1),\ldots,\hat{\T}(y_m)\}\bm{A}$, where for the latter matrix we have $\sup_{y\ge0}\, \hat{\T}(y)=1$.

\subsection{Asymptotic analysis}\label{Sec:main_results}
Let $\hat{\bm{x}}$ be the principal eigenvector of the following data matrix:
\BE
\bm{D}\Mydef \bm{A}^{\UH}\bm{T}\bm{A},
\EE
where $\bm{T}=\mr{diag}\{\T(y_1),\ldots,\T(y_m)\}.$
Following \cite{Lu17}, we use the squared cosine similarity defined below to measure the accuracy of $\hat{\bm{x}}$:
\BE\label{Eqn:rho_def}
P^2_\T(\hat{\bm{x}},\bm{x}_{\star})\Mydef \frac{|\bm{x}_{\star}^{\UH}\hat{\bm{x}}|^2}{\|\bm{x}_{\star}\|^2\|\hat{\bm{x}}\|^2}.
\EE

Our goal is to understand how $P_\T^2(\hat{\bm{x}},\bm{x}_{\star})$ behaves when $m,n\to\infty$ with a fixed ratio $m/n=\delta$. It seems possible to solve this problem by adapting the tools developed in \cite{Lu17}. Yet, we take a different approach which we believe to be technically simpler and more flexible. Specifically, we will derive a message passing algorithm to find the leading eigenvector of the data matrix. Central to our analysis is a deterministic recursion, called state evolution (SE), that characterizes the asymptotic behavior of the message passing algorithm. By analyzing the stationary points of the SE, we obtain certain predictions about the spectral estimator. This approach has been adopted in \cite{Bayati&Montanari12} to analyze the asymptotic performance of the LASSO estimator and in \cite{Montanari_pca} for the nonnegative PCA estimator, based on the approximate message passing (AMP) algorithm\cite{DoMaMo09,Bayati&Montanari11}. However, the SE analysis of AMP does not apply to the partial orthogonal matrix model considered in this paper.

Different from \cite{Montanari_pca}, our analysis is based on a variant of the expectation propagation (EP) algorithm \cite{Minka2001,opper2005}, called \textsf{PCA-EP} in this paper. Different from AMP, such EP-style algorithms could be analyzed via SE for a wider class of measurement matrices, including the Haar model considered here\cite{ma2015turbo,liu2016,Ma2016,fletcher2016,Rangan17,Takeuchi2017,Schniter16,He2017}. The derivations of \Alg and its SE analysis will be introduced in Section \ref{Sec:GEC}.
\vspace{5pt}

Our characterization of $P^2_\T(\hat{\bm{x}},\bm{x}_{\star})$ involves the following functions (for $\mu\in(0,1]$):
\BE\label{Eqn:psi_definitions}
\begin{split}
\psi_1(\mu) &\Mydef \frac{\mathbb{E}\left[\delta|{Z}_{\star}|^2G\right]}{\mathbb{E}[G]},\\
\psi_2(\mu) &\Mydef \frac{\mathbb{E}\left[G^2\right]}{\left(\mathbb{E}[G]\right)^2},\\
\psi_3(\mu) &\Mydef \frac{\sqrt{\mathbb{E}\left[\delta|{Z_{\star}}|^2G^2\right]}}{\mathbb{E}[G]},
\end{split}
\EE
where $Z_{\star}\sim\mathcal{CN}(0,1/\delta)$ and $G$ is a shorthand for
\BE
G(|Z_\star|,\mu)\Mydef \frac{1}{1/\mu-\T(|Z_\star|)}.
\EE
We note that $\psi_1$, $\psi_2$ and $\psi_3$ all depend on the processing function $\T$. However, to simplify notation, we will not make this dependency explicit. Claim \ref{Claim:correlation} below summarizes our asymptotic characterization of the cosine similarity $P_\T^2(\hat{\bm{x}},\bm{x}_{\star})$.

\begin{claim}[Cosine similarity]\label{Claim:correlation}
Define
\BE\label{Eqn:SNR_asym}
\rho^2_\T(\mu,\delta)\Mydef\frac{\left(\frac{\delta}{\delta-1}\right)^2-\frac{\delta}{\delta-1}\cdot\psi_2(\mu)}{\psi_3^2(\mu)-\frac{\delta}{\delta-1}\cdot\psi_2(\mu)},
\EE
and
\BE\label{Eqn:Lambda_def}
\Lambda(\mu) \Mydef \frac{1}{\mu}-\frac{\delta-1}{\delta}\cdot\frac{1}{\mathbb{E}[G(|Z_\star|,\mu)]},
\EE
where $Z_\star\sim\mathcal{CN}(0,1/\delta)$. Let
\BE\label{Eqn:mu_star_def}
\bar{\mu}(\delta)\Mydef \underset{\mu\in(0,1]}{\text{argmin}}\ \Lambda(\mu).
\EE
Then, as $m,n\to\infty$ with a fixed ratio $m/n=\delta>1$, we have
\BE\label{Eqn:PT_result}
\lim_{m\to\infty}P_\T^2(\hat{\bm{x}},\bm{x}_{\star})\overset{\mr{P}}{\rightarrow}\rho_\T^2(\delta),
\EE
where with a slight abuse of notations,
\BE\label{Eqn:SNR_asym2}
\rho^2_\T(\delta)\Mydef
\begin{cases}
\rho_\T^2\big(\hat{\mu}(\delta),\delta\big), & \text{if } \psi_1\left(\bar{\mu}(\delta)\right)\ge\frac{\delta}{\delta-1},\\
0 & \text{if } \psi_1\left(\bar{\mu}(\delta)\right)<\frac{\delta}{\delta-1},
\end{cases}
\EE
and $\hat{\mu}(\delta)$ is a solution to
\BE\label{Eqn:psi1_fixed}
\psi_1(\mu)=\frac{\delta}{\delta-1},\quad \mu\in(0,\bar{\mu}(\delta)].
\EE
\end{claim}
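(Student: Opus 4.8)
The plan is to bypass the message-passing/state-evolution derivation and attack the claim directly with random matrix theory, treating $\bm{D}=\bm{A}^{\UH}\bm{T}\bm{A}$ as a spiked free compression of the diagonal matrix $\bm{T}$. First I would exploit the rotational invariance of the Haar measure to fix $\bm{x}_{\star}=\sqrt{n}\,\bm{e}_1$, so that $\bm{z}\Mydef\bm{A}\bm{x}_{\star}=\sqrt{n}\,\bm{a}_{(1)}$, where $\bm{a}_{(1)}$ is the first column of $\bm{A}$, a uniformly random unit vector whose entries are asymptotically $\mathcal{CN}(0,1/m)$. Hence $z_i\to\mathcal{CN}(0,1/\delta)$, matching $Z_\star$, and $\bm{T}=\mr{diag}\{\T(|z_i|)\}$ becomes a deterministic function of $\bm{a}_{(1)}$. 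The cosine similarity $P_\T^2(\hat{\bm{x}},\bm{x}_\star)$ is then the squared overlap between $\bm{a}_{(1)}$ and $\bm{A}\hat{\bm{x}}$, the leading eigenvector of $\bm{T}$ restricted to the column space of $\bm{A}$.

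Second, I would block-decompose $\bm{D}$ along $\bm{e}_1$: writing $\bm{A}=[\bm{a}_{(1)},\bm{B}]$ and setting $d=\bm{a}_{(1)}^{\UH}\bm{T}\bm{a}_{(1)}$, $\bm{b}=\bm{B}^{\UH}\bm{T}\bm{a}_{(1)}$, $\bm{M}=\bm{B}^{\UH}\bm{T}\bm{B}$, the Schur-complement condition for the leading eigenvalue $\lambda$ reads $\lambda=d+\bm{b}^{\UH}(\lambda\bm{I}-\bm{M})^{-1}\bm{b}$, and the squared overlap equals $1/\big(1+\bm{b}^{\UH}(\lambda\bm{I}-\bm{M})^{-2}\bm{b}\big)$. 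Everything then reduces to the asymptotics of $d\to\delta\,\E[|Z_\star|^2\T(|Z_\star|)]$ (by the law of large numbers) and of the two resolvent quadratic forms in $\bm{b}$ and the resolvent of $\bm{M}$.

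Third, and this is the analytic heart: conditionally on $\bm{a}_{(1)}$, the matrix $\bm{M}$ is the compression of the deterministic $\bm{T}$ to a uniformly random $(n-1)$-dimensional subspace $\mathcal{W}$ of $\bm{a}_{(1)}^{\perp}$ (independent of $\bm{a}_{(1)}$), so its resolvent admits a deterministic equivalent given by the subordination relation for Haar compressions. I would identify the spectral parameter $\mu$ through $1/\mu=\zeta$, the subordination point at which $\bm{T}$'s resolvent is evaluated, so that $\E[G(|Z_\star|,\mu)]=\lim\frac1m\Tr(\zeta\bm{I}-\bm{T})^{-1}$ and the eigenvalue-to-parameter map becomes exactly $\lambda=\Lambda(\mu)$ of \eqref{Eqn:Lambda_def}; the bulk right edge is the critical value $\Lambda(\bar\mu)=\min_\mu\Lambda(\mu)$, recovering \eqref{Eqn:mu_star_def}. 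The cross vector $\bm{w}_\perp=\bm{T}\bm{a}_{(1)}-d\,\bm{a}_{(1)}$ has entries $(\T(|z_i|)-d)a_{(1),i}$, hence is aligned with the eigenbasis of $\bm{T}$; feeding this structured deterministic vector into the anisotropic local law for the compressed resolvent converts $\bm{b}^{\UH}(\lambda\bm{I}-\bm{M})^{-1}\bm{b}$ and its $\lambda$-derivative into the $\T$-weighted expectations that define $\psi_1,\psi_2,\psi_3$. Substituting back, the outlier equation collapses to $\psi_1(\mu)=\frac{\delta}{\delta-1}$ on the branch $\mu\le\bar\mu$, matching \eqref{Eqn:psi1_fixed}; the outlier detaches from the bulk iff $\psi_1(\bar\mu)\ge\frac{\delta}{\delta-1}$, which is the Baik--Ben-Arous--P\'ech\'e threshold; and the overlap identity then yields \eqref{Eqn:SNR_asym} in terms of $\psi_2$ and $\psi_3$.

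The main obstacle is precisely the correlation between the spike direction $\bm{a}_{(1)}$ and the ``noise'' matrix $\bm{T}$: because the measurements $y_i=|z_i|$ are built from $\bm{a}_{(1)}$ itself, the cross vector $\bm{w}_\perp$ is not generic relative to $\bm{T}$, so the naive isotropic resolvent law is insufficient and one must establish an anisotropic local law for Haar compressions against $\bm{T}$-aligned test vectors, while simultaneously controlling the concentration of $\bm{b}^{\UH}(\lambda\bm{I}-\bm{M})^{-k}\bm{b}$ despite $\bm{b}$ and $\bm{M}$ sharing the randomness of $\bm{B}$. Proving this anisotropic law, together with the edge universality needed to exclude spurious outliers, is the step I expect to consume most of the effort; the \Alg and its state-evolution analysis pursued in the paper are, in effect, an alternative and more elementary way of organizing the same self-consistent resolvent computation.
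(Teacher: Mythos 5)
Your plan is sound, but note that it is not a reconstruction of the paper's argument---the paper never actually proves Claim \ref{Claim:correlation}. What the paper offers is a heuristic chain: the \Alg iteration whose nonzero fixed points are eigenpairs of $\bm{D}$ (Lemma \ref{Lem:stationary}), a state-evolution description of that iteration (Claim \ref{Lem:SE}), and the non-rigorous arguments around Lemmas \ref{Lem:norm} and \ref{Lem:leading_eigen} suggesting that the relevant fixed point is the \emph{leading} eigenvector, with eigenvalue $\Lambda(\hat{\mu})$. Your route---fix $\bm{x}_\star=\sqrt{n}\,\bm{e}_1$ by Haar invariance, Schur-complement the data matrix along the spike, and drive the outlier equation $\lambda=d+\bm{b}^{\UH}(\lambda\bm{I}-\bm{M})^{-1}\bm{b}$ together with the overlap formula $1/\big(1+\bm{b}^{\UH}(\lambda\bm{I}-\bm{M})^{-2}\bm{b}\big)$ through subordination for Haar compressions---is genuinely different, and it is essentially the program that produced the rigorous proof of these predictions in \cite{Dudeja2020}, which the paper itself cites as the later, rigorous counterpart of its conjecture. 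Two remarks on your plan. First, the obstacle you single out (the spike direction $\bm{a}_{(1)}$ being correlated with the ``noise'' $\bm{T}$) is resolved by conditioning rather than by a fundamentally new anisotropic law: given $\bm{a}_{(1)}$, both $\bm{T}$ and $\bm{w}_\perp$ are deterministic while $\bm{B}$ is Haar on the frame bundle of $\bm{a}_{(1)}^{\perp}$ and independent of them, so what is required is an isotropic-type resolvent law for random compressions tested against conditionally deterministic vectors, plus uniform edge control to exclude spurious outliers; still substantial work, but standard in kind. Second, the trade-off between the two approaches: yours buys a theorem (a BBP-type detachment condition matching $\psi_1(\bar{\mu})\ge\frac{\delta}{\delta-1}$, the outlier equation \eqref{Eqn:psi1_fixed}, and the overlap \eqref{Eqn:SNR_asym} as genuine limits in probability), whereas the paper's EP/state-evolution route buys technical simplicity and flexibility (e.g., accommodating signal structure) at the price of remaining a conjecture---which is exactly the gap your proposal is designed to close.
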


Claim \ref{Claim:correlation}, which is reminiscent of Theorem 1 in \cite{Lu17}, reveals a phase transition behavior of $P^2_\T({\hat{\bm{x}}},\bm{x}_{\star})$: $P^2_\T({\hat{\bm{x}}},\bm{x}_{\star})$ is strictly positive when $\psi_1(\bar{\mu}(\delta))>\frac{\delta}{\delta-1}$ and is zero otherwise. In the former case, the spectral estimator is positively correlated with the true signal and hence provides useful information about $\bm{x}_{\star}$; whereas in the latter case $P^2_\T({\hat{\bm{x}}},\bm{x}_{\star})\to0$, meaning that the spectral estimator is asymptotically orthogonal to the true signal and hence performs no better than a random guess. Claim \ref{Claim:correlation} is derived from the analysis of an EP algorithm whose fixed points correspond to the eigenvector of the matrix $\bm{D}$. The main difficulty for making this approach rigorous is to prove the EP estimate converges to the leading eigenvector. Currently, we do not have a proof yet, but we will provide some heuristic arguments in Section \ref{Sec:GEC}.

\begin{remark}
For notational simplicity, we will write $\bar{\mu}(\delta)$ and $\hat{\mu}(\delta)$ as $\bar{\mu}$ and $\hat{\mu}$, respectively.
\end{remark}

Some remarks about the phase transition condition are given in Lemma \ref{Lem:PT_second} below. Item (i) guarantees the uniqueness of the solution to \eqref{Eqn:psi1_fixed}. Item (ii) is the actual intuition that leads to the conjectured phase transition condition. The proof of Lemma \ref{Lem:PT_second} can be found in Appendix \ref{App:proof_PT_sec}.

\begin{lemma}\label{Lem:PT_second}
(i) Eq. \eqref{Eqn:psi1_fixed} has a unique solution if $\psi_1\left(\bar{\mu}\right)>\frac{\delta}{\delta-1}$, where $\bar{\mu}$ is defined in \eqref{Eqn:mu_star_def}. (ii) $\psi_1\left(\bar{\mu}\right)>\frac{\delta}{\delta-1}$ if and only if there exists a $\hat{\mu}\in(0,1]$ such that
\[
\psi_1(\hat{\mu})=\frac{\delta}{\delta-1}\quad \text{and}\quad 0<\rho_\T^2(\hat{\mu},\delta)<1.
\]
The latter statement is equivalent to
\[
\psi_1(\hat{\mu})=\frac{\delta}{\delta-1}\quad \text{and}\quad \psi_2(\hat{\mu})<\frac{\delta}{\delta-1}.
\]
\end{lemma}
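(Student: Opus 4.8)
The plan is to reduce everything to moments of $G$ and exploit two Cauchy--Schwarz inequalities. Writing $a_k\Mydef\E[|Z_\star|^2 G^k]$ and $b_k\Mydef\E[G^k]$, the starting observation is that $G=(1/\mu-\T(|Z_\star|))^{-1}$ obeys $\partial_\mu G=G^2/\mu^2$, so differentiating under the expectation gives $a_1'=a_2/\mu^2$, $b_1'=b_2/\mu^2$, $b_2'=2b_3/\mu^2$. From these one gets the two monotonicity facts and the identity on which the whole argument turns:
\[
\psi_2'(\mu)=\frac{2}{\mu^2 b_1^3}\left(b_1 b_3-b_2^2\right)\ge 0, \qquad \psi_1'(\mu)=\frac{b_1}{\mu^2}\left(\psi_3^2(\mu)-\psi_1(\mu)\psi_2(\mu)\right),
\]
where $b_1 b_3\ge b_2^2$ is Cauchy--Schwarz, so $\psi_2$ is strictly increasing unless $G$ is a.s.\ constant. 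A second application, $a_1=\E[(|Z_\star|G)(|Z_\star|)]\le\sqrt{a_2}\sqrt{\E|Z_\star|^2}=\sqrt{a_2/\delta}$, yields $a_2\ge\delta a_1^2$ and hence the key inequality $\psi_3^2(\mu)\ge\psi_1^2(\mu)$.

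Next I would pin down $\bar\mu$. Since $\Lambda'(\mu)=\mu^{-2}\big(\tfrac{\delta-1}{\delta}\psi_2(\mu)-1\big)$ and $\psi_2$ is increasing, $\Lambda$ is unimodal on $(0,1]$; together with $\Lambda(\mu)\to+\infty$ and $\Lambda'<0$ as $\mu\to0^+$, this makes $\bar\mu$ the unique minimizer and gives the crucial fact that $\psi_2(\mu)<c$ for every $\mu\in(0,\bar\mu)$, where I abbreviate $c\Mydef\delta/(\delta-1)>1$. A short expansion ($G\approx\mu$ as $\mu\to0$) also records the boundary values $\psi_1(0^+)=\delta\,\E|Z_\star|^2=1$ and $\psi_2(0^+)=1$, both strictly below $c$.

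For part (i) I would argue uniqueness not through global monotonicity of $\psi_1$ --- which genuinely fails for non-monotone processing functions such as $\T_{\mathrm{trim}}$ --- but through the sign of $\psi_1'$ at its crossings of $c$. At any root $\hat\mu\in(0,\bar\mu)$ of $\psi_1=c$ we have $\psi_2(\hat\mu)<c$ and $\psi_3^2(\hat\mu)\ge\psi_1^2(\hat\mu)=c^2$, so $\psi_3^2(\hat\mu)-c\,\psi_2(\hat\mu)\ge c(c-\psi_2(\hat\mu))>0$; the derivative identity then forces $\psi_1'(\hat\mu)>0$. A continuous function whose derivative is positive at every level-$c$ crossing can cross $c$ at most once, since two upward crossings would sandwich a downward one where the slope would have to be nonpositive. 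When $\psi_1(\bar\mu)>c$ the point $\bar\mu$ is not itself a root, so all roots lie in $(0,\bar\mu)$; combined with $\psi_1(0^+)=1<c$ and the intermediate value theorem, there is exactly one. Recognizing that ``positive slope at every crossing'' replaces monotonicity is the main obstacle, and the inequality $\psi_3^2\ge\psi_1^2$ is exactly what makes it work.

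For part (ii), fix $\hat\mu$ with $\psi_1(\hat\mu)=c$ and write the numerator and denominator of $\rho_\T^2$ as $N=c(c-\psi_2)$ and $D=\psi_3^2-c\psi_2$. The inequality $\psi_3^2\ge c^2$ gives $D-N=\psi_3^2-c^2\ge0$, i.e.\ $D\ge N$, and a three-way sign analysis of $N/D$ then shows $0<\rho_\T^2<1\iff\psi_2(\hat\mu)<c$ (the strict bound $\rho_\T^2<1$ uses $\psi_3^2>c^2$, which holds whenever $\T$ is not a.s.\ constant); this is the stated equivalence of the two conditions. Finally, because $\psi_2$ is increasing with $\psi_2<c$ exactly on $(0,\bar\mu)$, the condition ``$\exists\,\hat\mu$ with $\psi_1(\hat\mu)=c$ and $\psi_2(\hat\mu)<c$'' coincides with ``$\exists$ a root of $\psi_1=c$ in $(0,\bar\mu)$'', and the crossing analysis of part (i) shows this is equivalent to $\psi_1(\bar\mu)>c$ (forward by the intermediate value theorem; backward because past the largest root in $(0,\bar\mu)$ the function stays above $c$). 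I would handle the degenerate case $\psi_3^2=c^2$ and the boundary case $\bar\mu=1$ (where $\Lambda$ is monotone) by separate short remarks.
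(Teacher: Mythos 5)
Your proof is correct in substance, and the raw ingredients coincide with the paper's: you get strict monotonicity of $\psi_2$ from $\E[G]\,\E[G^3]\ge(\E[G^2])^2$ (the paper's Lemma \ref{Lem:T_increase}, proved there via the association inequality of Lemma \ref{Lem:association} rather than Cauchy--Schwarz), and $\psi_3^2\ge\psi_1^2$ by Cauchy--Schwarz, which is exactly the paper's Lemma \ref{Lem:auxiliary1}. Where you genuinely diverge is the uniqueness mechanism in part (i). Writing $c=\tfrac{\delta}{\delta-1}$, the paper never differentiates $\psi_1$: it rewrites $\psi_1(\mu)=c$ as $F(\mu)=\Lambda(\mu)$ with $F(\mu)\Mydef 1/\mu-1/\E[\delta|Z_\star|^2G]$, shows $F$ is strictly increasing on $(0,1)$ --- note that $F'>0$ is your inequality $\psi_3^2>\psi_1^2$ in disguise, since $F'=\mu^{-2}\bigl(\E[\delta|Z_\star|^2G^2]-(\E[\delta|Z_\star|^2G])^2\bigr)/(\E[\delta|Z_\star|^2G])^2$ --- and that $\Lambda$ is strictly decreasing on $(0,\bar{\mu})$, so the strictly monotone difference $F-\Lambda$ vanishes at most once. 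You instead stay with $\psi_1$ itself, use the (correct, I checked it) identity $\psi_1'(\mu)=\tfrac{\E[G]}{\mu^2}\bigl(\psi_3^2(\mu)-\psi_1(\mu)\psi_2(\mu)\bigr)$, and conclude every level-$c$ crossing in $(0,\bar{\mu})$ is upward, hence unique. The two arguments carry the same content (the sign of $F-\Lambda$ is the sign of $\psi_1-c$), but each buys something: the paper's monotone-difference formulation makes the backward implication of part (ii) immediate --- a root $\hat{\mu}<\bar{\mu}$ plus monotonicity of $F$ and $\Lambda$ gives $F(\bar{\mu})>\Lambda(\bar{\mu})$, i.e.\ $\psi_1(\bar{\mu})>c$, with no need for your ``largest root'' topological argument --- while your part (ii) is more complete than the paper's forward direction, which only checks that numerator and denominator of $\rho_\T^2(\hat{\mu},\delta)$ are positive (so $\rho_\T^2>0$) and leaves the bound $\rho_\T^2<1$ implicit; your observation that $D-N=\psi_3^2-c^2>0$ closes that explicitly.

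One shared caveat, so you know it is not a gap of yours relative to the paper: the boundary case you defer ($\bar{\mu}=1$) is precisely where the stated ``if and only if'' is delicate. If $\psi_1(1)=c$ and $\psi_2(1)<c$, then $\hat{\mu}=\bar{\mu}=1$ witnesses the right-hand side while $\psi_1(\bar{\mu})=c$ is not strict; the paper's own Remark 2, with $\T(y)=1-1/(\delta y^2+\delta-2)$, sits exactly at this point. The paper's proof glosses over this in the step ``$\psi_2(\hat{\mu})<c\Longleftrightarrow\hat{\mu}<\bar{\mu}$'', which fails when $\bar{\mu}=1$ and $\psi_2(1)<c$, so your plan to dispose of this case by a separate remark is, if anything, more honest than the original.
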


 \begin{figure*}[t]
\centering
\subfloat{\includegraphics[width=.33\textwidth]{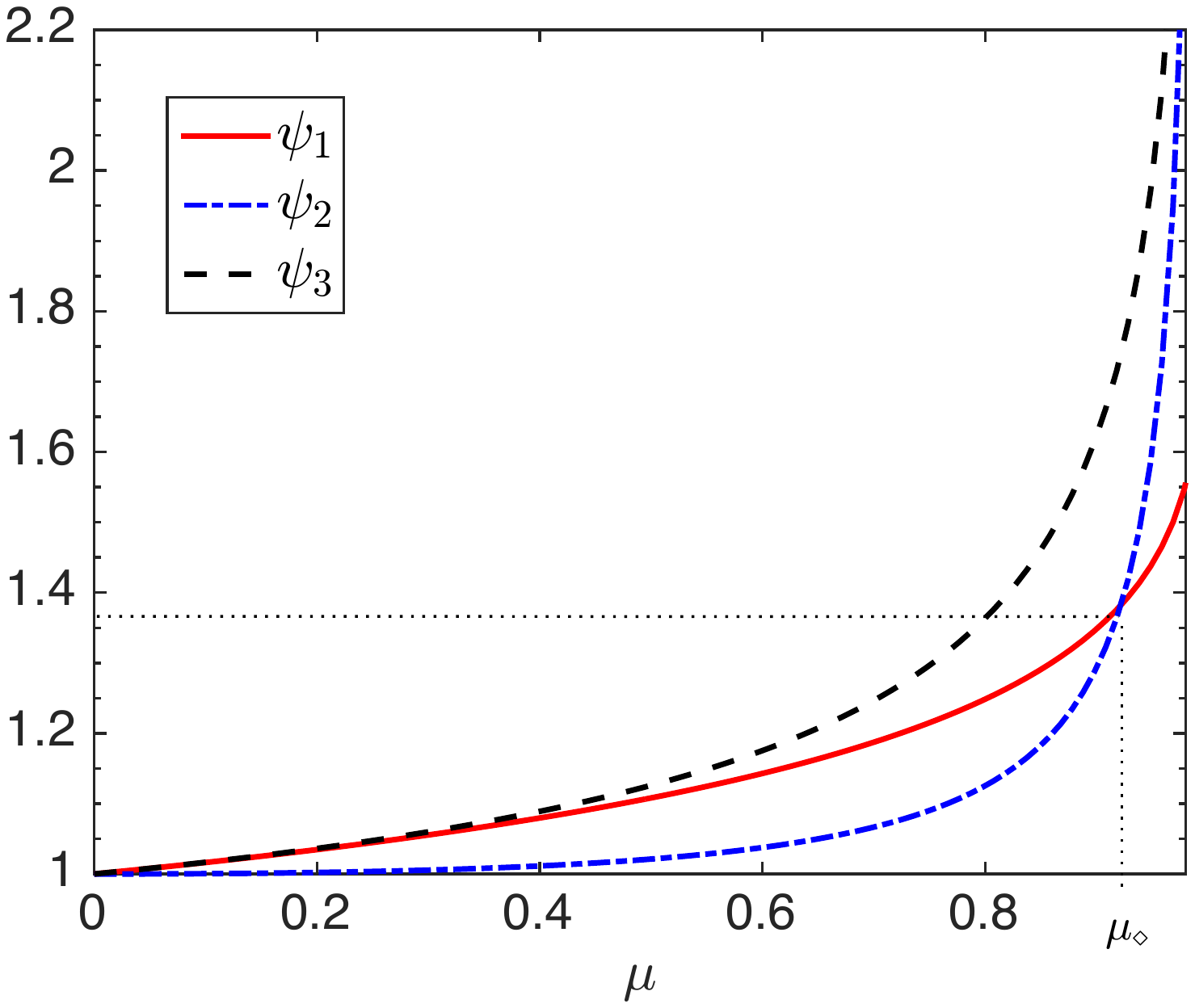}}
\subfloat{\includegraphics[width=.33\textwidth]{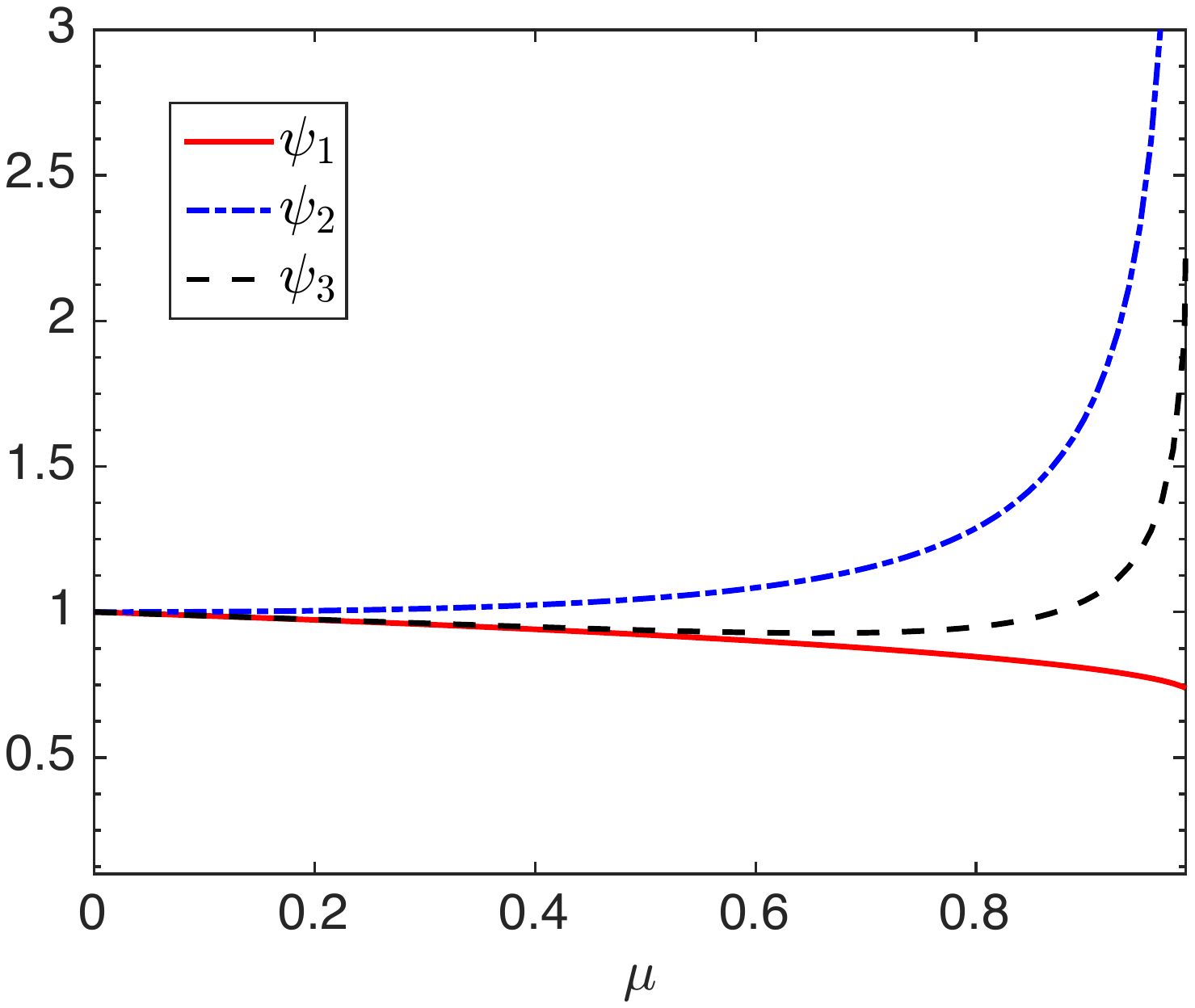}}
\subfloat{\includegraphics[width=.33\textwidth]{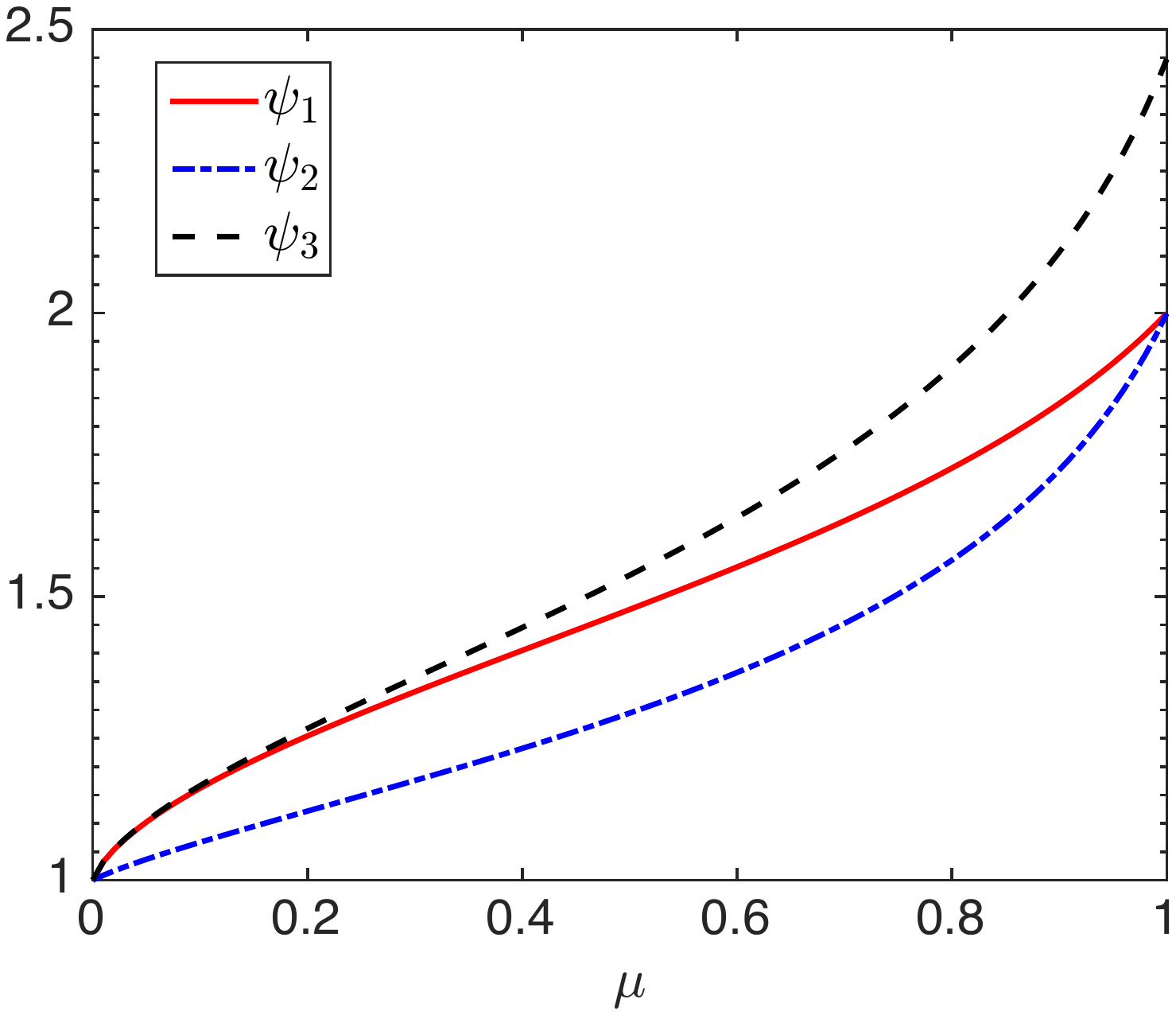}}
\caption{Examples of $\psi_1$, $\psi_2$ and $\psi_3$. \textbf{Left}: $\T=\T_{\mr{trim}}$ with $c_2=2$. \textbf{Middle:} $\T=\T_{\mr{trim}}$ with $c_2=0.8$. In both cases, we further scaled $\T_{\mr{trim}}$ so that $\sup_{y\ge0}\ \T(y)=1$. \textbf{Right:} $\T=\T_\opt$.}\label{Fig:psi_functions}
\end{figure*}

Fig.~\ref{Fig:psi_functions} plots $\psi_1$, $\psi_2$ and $\psi_3$ for various choices of $\T$. The first two subfigures employ the trimming function $\T_{\mr{trim}}$ in \eqref{Eqn:T_trunc} (with different truncation thresholds), and third subfigure uses the function $\T_\opt$ in \eqref{Eqn:T_Lu}.

Notice that for the first and third subfigures of Fig.~\ref{Fig:psi_functions}, $\psi_1$ and $\psi_2$ have  unique nonzero crossing points in $\mu\in(0,1]$, which we denote as $\mu_{\diamond}$. Further, $\psi_1>\psi_2$ for $\mu<\mu_\diamond$, and $\psi_1<\psi_2$ for $\mu>\mu_\diamond$ (for the first subfigure). Then by Lemma \ref{Lem:PT_second}-(ii), a phase transition happens when $\psi_1(\mu_{\diamond})>\frac{\delta}{\delta-1}$, or equivalently
\[
\delta> \frac{\psi_1(\mu_{\diamond})}{\psi_1(\mu_{\diamond})-1}\Mydef \delta_\T.
\]
The spectral estimator is positively correlated with the signal when $\delta>\delta_{\T}$, and perpendicular to it otherwise.

Now consider the second subfigure of Fig.~\ref{Fig:psi_functions}. Note that $\T$ only depends on $\delta$ through $\delta y^2$, and it is straightforward to check that $\psi_1$, $\psi_2$ and $\psi_3$ do not depend on $\delta$ for such $\T$. Hence, there is no solution to $\psi_1(\mu)=\frac{\delta}{\delta-1}$ for any $\delta\in[1,\infty)$. According to Lemma \ref{Lem:PT_second} and Claim \ref{Claim:correlation}, we have $P^2_T(\hat{\bm{x}},\bm{x}_\star)\to0$. The situation for $\T_\opt$ is shown in the figure on the right panel of Fig.~\ref{Fig:psi_functions}.

\subsection{Discussions}
The main finding of this paper (namely, Claim \ref{Claim:correlation}) is a characterization of the asymptotic behavior of the spectral method. As will be explained in Section \ref{Sec:GEC}, the intuition for Claim \ref{Claim:correlation} is that the fixed point of an expectation-propagation type algorithm (called \Alg in this paper) coincides with an eigenvector with a particular eigenvalue (see Lemma \ref{Lem:stationary}) of the data matrix which the spectral estimator uses. Furthermore, the asymptotic behavior \Alg could be conveniently analyzed via the state evolution (SE) platform. This motivates us to study the spectral estimator by studying the fixed point of \textsf{PCA-EP}. The difficulty here is to prove that the stationary points of the \textsf{PCA-EP} converge to the eigenvector corresponding to the largest eigenvalue. Currently, we do not have a rigorous proof for Claim \ref{Claim:correlation}.

We would like to mention two closely-related work \cite{Bayati&Montanari12,Montanari2016} where the approximate message passing (AMP) algorithm and the state evolution formalism was employed to study the high-dimensional asymptotics of the LASSO problem and the nonnegative PCA problem respectively. Unfortunately, neither approach is applicable to our problem. In \cite{Bayati&Montanari12}, the convexity of the LASSO problem is crucial to prove that the AMP estimate converges to the minimum of the LASSO problem. In contrast, the variational formulation of the eigenvalue problem is nonconvex. In \cite{Montanari2016}, the authors used a Gaussian comparison inequalities to upper bound the cosine similarity between the signal vector and the PCA solution and use the AMP estimates to lower bound the same quantity, and combining the lower and upper bound yields the desired result. For our problem, however, the sensing matrix is not Gaussian and the Gaussian comparison inequality is not immediately applicable.

Although we do not have a rigorous proof for Claim \ref{Claim:correlation} yet, we will present some heuristic arguments in Section \ref{Sec:heuristics}. Our main heuristic arguments for Claim \ref{Claim:correlation} is Lemma \ref{Lem:norm} and Lemma \ref{Lem:leading_eigen}, which establishes a connection between the extreme eigenvalues of the original matrix $\bm{D}$ (see \eqref{Eqn:data_matrix}) and the matrix $\bm{E}(\hat{\mu})$ (see definition in \eqref{Eqn:Alg_orth_final_a}) that originates from the \textsf{PCA-EP} algorithm. These two lemmas provide some interesting observations that might shed light on Claim \ref{Claim:correlation}. The detailed discussions can be found in Section \ref{Sec:heuristics}.

\section{Optimal Design of $\T$}

Claim 1 characterizes the condition under which the leading eigenvector $\hat{\bm{x}}$ is positively correlated with the signal vector $\bm{x}_{\star}$ for a given $\T$. We would like to understand how does $\T$ affect the performance of the spectral estimator. We first introduce the following definitions:
\BE\label{Eqn:delta_T}
\delta_{\T}\Mydef \inf\{\delta:\rho_\T^2(\delta)>0\},
\EE
and
\BE\label{Eqn:delta_weak}
\delta_{\mr{weak}}\Mydef \inf_{\T}\ \delta_{\T}.
\EE
Following \cite{Mondelli2017}, the latter threshold is referred to as the \textit{weak threshold}. We will answer the following questions:
\begin{itemize}
\item[(Q.1)] What is $\delta_{\mr{weak}}$ for the spectral method under a partial orthogonal model?
\item[(Q.2)] For a given $\delta>\delta_{\mr{weak}}$, what is the optimal $\T$ that maximizes $P^2_\T(\hat{\bm{x}},\bm{x}_{\star})$?
\end{itemize}

These questions have been addressed for an i.i.d. Gaussian measurement model. In \cite{Mondelli2017}, Mondelli and Montanari proved that $\delta_{\mr{weak}}=1$ for this model. They also proposed a function (i.e., $\T_{\mr{MM}}$ in \eqref{Eqn:T_examples}) that achieves the weak threshold. Very recently, \cite{Lu2018} proved that $\T_{\opt}$ given in \eqref{Eqn:T_examples} maximizes $P^2(\hat{\bm{x}},\bm{x}_{\star})$ for any fixed $\delta>\delta_{\mr{weak}}$, and is therefore \textit{uniformly optimal}. For the partial orthogonal measurement model, the above problems could also be addressed based on our asymptotic characterization of the spectral initialization. It it perhaps surprising that the same function is also uniformly optimal under the partial orthogonal sensing model considered in this paper. Note that although $\T_\star$ is optimal for both the i.i.d. Gaussian and the partial orthogonal models, the performances of the spectral method are different: for the former model $\delta_{\mr{weak}}=1$, whereas for the latter model $\delta_{\mr{weak}}=2$. Theorem \ref{Lem:optimality} below, whose proof can be found in Appendix \ref{App:optimality}, summarizes our findings.

\begin{theorem}[Optimality of $\T_\opt$]\label{Lem:optimality}
Suppose that Claim 1 is correct, then $\delta_{\mr{weak}}=2$ for the partial orthogonal measurement model. Further, for any $\delta>2$, $\T_\opt$ in \eqref{Eqn:T_examples} maximizes $\rho_{\T}^2(\delta)$ and is given by
\[
\rho_{\star}^2(\delta)=
\begin{cases}
0, &  \text{if }\delta< 2,\\
\frac{1-\hat{\mu}_\star}{1-\frac{1}{\delta}\hat{\mu}_\star}, & \text{if }\delta\ge2,
\end{cases}
\]
where $\hat{\mu}_\star$ is the unique solution to $\psi_1(\mu)=\frac{\delta}{\delta-1}$ (with $\T=\T_\opt$). Finally, $\rho_{\star}^2(\delta)$ is an increasing function of $\delta$.
\end{theorem}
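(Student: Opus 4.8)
The plan is to prove the uniform optimality of $\T_\opt$ first, and then read off the weak threshold, the closed form, and monotonicity as consequences. By Claim~\ref{Claim:correlation} and Lemma~\ref{Lem:PT_second}, whenever $\rho_\T^2(\delta)>0$ it equals $\rho_\T^2(\hat{\mu},\delta)$ at the point $\hat{\mu}$ solving $\psi_1(\hat{\mu})=\frac{\delta}{\delta-1}$. I would set $s\assign\delta|Z_\star|^2$, which is a unit-mean exponential variable since $Z_\star\sim\mathcal{CN}(0,1/\delta)$, and write $G=1/(1/\mu-\T)$. The key observation is that $\psi_1,\psi_2,\psi_3$, and hence $\rho_\T^2(\hat{\mu},\delta)$, depend on the pair $(\T,\mu)$ only through the positive ``weight profile'' $s\mapsto G$, and are invariant under the rescaling $G\mapsto\lambda G$. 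This reduces the design problem to maximizing the scale-invariant functional $\rho_\T^2$ over admissible profiles $G(s)$ subject to the single constraint $\psi_1=\frac{\delta}{\delta-1}$, the admissibility being that $G$ arise from some $\T\le T_{\max}=1$.

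Normalizing $\E[G]=1$ and writing $c\assign\frac{\delta}{\delta-1}$, the objective becomes $\rho_\T^2=\frac{c(c-\psi_2)}{\psi_3^2-c\psi_2}$ with $\psi_2=\E[G^2]$, $\psi_3^2=\E[sG^2]$, and the constraint $\E[sG]=c$; note the suggestive identities $\psi_3^2-c\psi_2=\E[(s-c)G^2]$ and $c-\psi_2=\E[(s-G)G]$. A calculus-of-variations computation (Lagrange multipliers for $\E[G]=1$ and $\E[sG]=c$, carried out pointwise in $s$) shows that the stationary profile is \emph{linear-fractional} in $s$, namely $G(s)=\frac{a_1+a_2 s}{a_3+a_4 s}$. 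The crucial point is that $\T_\opt$ induces exactly such a profile: $\T_\opt=1-1/s$ gives $1/G=(1/\mu-1)+1/s$, i.e.\ $G(s)=\frac{\mu s}{(1-\mu)s+\mu}$, a linear-fractional profile with vanishing constant term in the numerator. I would then verify that the admissible maximizer among the stationary profiles is precisely this one---checking the sign conditions that force $\rho^2\in(0,1)$ and that the realizability constraint $\T\le 1$ is met---so that $\T_\opt$ is optimal for every $\delta$. An equivalent and cleaner presentation is to exhibit a Cauchy--Schwarz inequality whose equality case is exactly the condition ``$1/G$ affine in $1/s$''; either way, the reduced problem is structurally the same optimization already settled for the i.i.d.\ Gaussian model by Luo, Alghamdi and Lu, whose equality analysis can be adapted, which is why the same $\T_\opt$ emerges.

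With $\T=\T_\opt$ I would compute $\psi_1,\psi_2,\psi_3$ as integrals of $G(s)=\frac{\mu s}{(1-\mu)s+\mu}$ against the density $e^{-s}$; these reduce to exponential-integral expressions that cancel upon forming $\rho_\T^2$, leaving the closed form $\rho_\star^2(\delta)=\frac{1-\hat{\mu}_\star}{1-\hat{\mu}_\star/\delta}$, where $\hat{\mu}_\star$ solves $\psi_1(\mu)=\frac{\delta}{\delta-1}$. For the threshold, at $\mu=1$ the profile collapses to $G=s$, so $\psi_1(1)=\E[s^2]/\E[s]=2$; hence $\psi_1(\mu)=\frac{\delta}{\delta-1}$ has a solution $\hat{\mu}_\star\in(0,1]$ exactly when $\frac{\delta}{\delta-1}\le 2$, i.e.\ $\delta\ge 2$, and at $\delta=2$ one gets $\hat{\mu}_\star=1$, whence $\rho_\star^2=0$. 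Since $\T_\opt$ is uniformly optimal, $\rho_\T^2(\delta)\le\rho_\star^2(\delta)=0$ for every $\delta\le 2$ and every $\T$, giving $\delta_\T\ge 2$ for all $\T$ while $\delta_{\T_\opt}=2$; therefore $\delta_{\mr{weak}}=2$. Monotonicity of $\rho_\star^2(\delta)$ then follows by differentiating the closed form together with the monotone dependence of $\hat{\mu}_\star$ on $\delta$: as $\delta$ grows, $\frac{\delta}{\delta-1}$ decreases and $\hat{\mu}_\star$ decreases from $1$ toward $0$, with $\rho_\star^2\to 1$ as $\delta\to\infty$.

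The hard part will be the \emph{global} optimality in the variational step: showing that the linear-fractional stationary profile is the global maximizer over the infinite-dimensional space of admissible profiles, not merely a critical point. The ratio structure of $\rho_\T^2$ and the boundedness constraint $\T\le 1$ (which caps $G$) block a naive convexity argument, so I expect to certify the maximizer either by a supporting-hyperplane/duality argument or by importing the equality analysis of the corresponding Gaussian-model optimization. A secondary technical point is justifying the reduction itself---confirming that every admissible profile is realized by some bounded $\T$ together with an admissible branch $\hat{\mu}\in(0,\bar{\mu}]$ of the fixed-point equation---so that optimizing over profiles is genuinely equivalent to optimizing over $\T$ in the sense required by Claim~\ref{Claim:correlation}.
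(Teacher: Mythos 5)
Your overall architecture can be made to work, and your fallback plan (a Cauchy--Schwarz inequality whose equality case is exactly ``$1/G$ affine in $1/s$'') is in fact the paper's proof; but as written the proposal has two genuine gaps. First, the crux --- \emph{global} optimality over all admissible profiles --- is deferred rather than solved. Your primary route (Lagrange stationarity producing linear-fractional profiles) cannot close it: stationarity of a ratio of moments over an infinite-dimensional cone identifies candidates but certifies nothing, as you yourself concede. The paper never argues via stationarity. Instead, for any feasible $G$ normalized so that $\E[G]=1$ and achieving value $\theta^2>0$, it rewrites the definition of $\theta^2$ as $\E[(\theta^2 s+1)G^2]=\theta^2 c^2+c$, where $s\Mydef\delta|Z_\star|^2$ and $c\Mydef\frac{\delta}{\delta-1}$, lower-bounds the left side by Cauchy--Schwarz, $\E[(\theta^2 s+1)G^2]\ge \big(\E[sG]\big)^2\big/\E\big[s^2/(\theta^2 s+1)\big]$, and rearranges to get $\E\big[s^2/(\theta^2 s+1)\big]\big/\E\big[s/(\theta^2 s+1)\big]\ge c$. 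The left side is precisely $\psi_1$ under $\T_\opt$ evaluated at $\mu=1/(1+\theta^2)$; since that function is strictly increasing in $\mu$ (the association inequality, Lemma 4) and equals $c$ at $\hat{\mu}_\opt$, this forces $\theta^2\le 1/\hat{\mu}_\opt-1=\theta_\opt^2$. That chain is the step you would have to supply; ``importing the equality analysis'' of Luo--Alghamdi--Lu amounts to reproducing it.

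Second, and more seriously as a matter of logic, your derivation of $\delta_{\mr{weak}}=2$ from uniform optimality does not go through for $\delta<2$. There, $\rho_\star^2(\delta)=0$ precisely because $\T_\opt$ has \emph{no} feasible fixed point ($\psi_1\le 2<\frac{\delta}{\delta-1}$ on $(0,1]$), so a comparison-against-the-maximizer argument is vacuous: there is no linear-fractional profile in the feasible set to compare against, and nonexistence of a maximizer within one family does not show the feasible set is empty for all $\T$. What is needed --- and what the paper proves \emph{first}, independently of any optimality statement --- is that no $\T$ admits an informative solution when $\delta<2$: if $\psi_1(\hat{\mu})=\frac{\delta}{\delta-1}$ and $\psi_2(\hat{\mu})<\psi_1(\hat{\mu})$ (required for $0<\rho^2<1$), then $\E[G^2]<\E[G]\,\E[sG]$, and Cauchy--Schwarz $(\E[sG])^2\le\E[s^2]\,\E[G^2]=2\,\E[G^2]$ forces $\psi_1(\hat{\mu})\le 2$, i.e.\ $\delta\ge 2$. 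Your proposal contains no substitute for this step. Two minor points: the closed form $\rho_\star^2=\frac{1-\hat{\mu}_\star}{1-\hat{\mu}_\star/\delta}$ needs no exponential integrals --- it drops out of the algebraic identities obtained by taking expectations of $\hat{\mu}_\opt s-(1-\hat{\mu}_\opt)sG_\opt=\hat{\mu}_\opt G_\opt$ and of that identity multiplied by $G_\opt$; and your monotonicity argument (``differentiate the closed form'') is incomplete, because $\frac{1-\hat{\mu}}{1-\hat{\mu}/\delta}$ is decreasing in $\hat{\mu}$ \emph{and} decreasing in $\delta$ separately, so the decrease of $\hat{\mu}_\star(\delta)$ competes against the explicit $\delta$-dependence and the competition must actually be resolved (the paper does this by passing to $\theta_\opt^2=1/\hat{\mu}_\opt-1$).
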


\begin{remark}
The function that can achieve the weak threshold is not unique. For instance, the following function also achieves the weak threshold:
\[
\T(y)=1-\frac{1}{\delta y^2+\delta-2}.
\]
It is straightforward to show that $\psi_1(1)=\frac{\delta}{\delta-1}$ for any $\delta\ge2$. Further, some straightforward calculations show that
\[
\rho^2_\T(1,\delta)=\frac{\delta^2-2\delta}{\delta^2-2}\in(0,1),\quad \forall \delta>2.
\]
Hence, by Lemma \ref{Lem:PT_second} and Claim \ref{Claim:correlation}, the cosine similarity is strictly positive for any $\delta>2$.

\end{remark}

\section{An EP-based Algorithm for the Spectral Method}\label{Sec:GEC}
In this section, we present the rationale behind Claim \ref{Claim:correlation}. Our reasoning is based on analyzing an expectation propagation (EP) \cite{Minka2001,opper2005,fletcher2016,He2017} based message passing algorithm that aims to find the leading eigenvector of $\bm{D}$. Since our algorithm intends to solve the eigenvector problem, we will call it \Alg throughout this paper. The key to our analysis is a state evolution (SE) tool for such EP-based algorithm,  first conjectured in \cite{ma2015turbo,liu2016} (for a partial DFT sensing matrix) and \cite{Ma2016} (for generic unitarily-invariant sensing matrices), and later proved in \cite{Rangan17,Takeuchi2017} (for general unitarily-invariant matrices). For approximate message passing (AMP)  algorithms \cite{DoMaMo09}, state evolution (SE) has proved to be a powerful tool for performance analysis in the high dimensional limit \cite{Bayati&Montanari12,Montanari_pca,Weng2016,Zheng17}.

\subsection{The \Alg algorithm}\label{Sec:GEC_algorithm}
The leading eigenvector of $\bm{A}^{\UH}\bm{T}\bm{A}$ is the solution to the following optimization problem:
\BE\label{Eqn:eigen_opt}
\underset{\|\bm{x}\|=\sqrt{n}}{\max}\ \bm{x}^{\UH}\bm{A}^{\UH}\bm{T}\bm{A}\bm{x},
\EE
where $\bm{T}\Mydef \mr{diag}\{\T(y_1),\ldots,\T(y_m)\}$. The normalization $\|\bm{x}\|=\sqrt{n}$ (instead of the commonly-used constraint $\|\bm{x}\|=1$) is introduced for convenience.
\Alg is an EP-type message passing algorithm that aims to solve \eqref{Eqn:eigen_opt}. Starting from an initial estimate $\bm{z}^0$, \Alg proceeds as follows  (for $t\ge1$):
\BS\label{Eqn:Alg_orth_final}
\BE\label{Eqn:Alg_orth_final_a}
\Alg:\qquad\bm{z}^{t+1}=\underbrace{\left(\delta\bm{AA}^{\UH}-\bm{I}\right)\left(\frac{ \bm{G}}{\G}-\bm{I}\right)}_{\bm{E}(\mu)}\bm{z}^t,
\EE
where $\bm{G}$ is a diagonal matrix defined as
\BE\label{Eqn:G_def_first}
\bm{G}\Mydef \mr{diag}\left\{\frac{1}{\mu^{-1}-\T(y_1)},\ldots,\frac{1}{\mu^{-1}-\T(y_m)}\right\},
\EE
and
\BE
\G\Mydef \frac{1}{m}\Tr(\bm{G}).
\EE
In \eqref{Eqn:G_def_first}, $\mu\in(0,1]$ is a parameter that can be tuned.
At every iteration of \Alg, the estimate for the leading eigenvector is given by
\BE\label{Eqn:x_hat_def}
{\bm{x}}^{t+1}=\frac{\sqrt{n}}{\|\bm{A}^{\UH}\bm{z}^{t+1}\|}\cdot\bm{A}^{\UH}\bm{z}^{t+1}.
\EE
\ES
The derivations of \Alg can be found in Appendix \ref{App:derivations}. Before we proceed, we would like to mention a couple of points:
\begin{itemize}

\item The \Alg algorithm is a tool for performance analysis, not an actual numerical method for finding the leading eigenvector. Further, our analysis is purely based on the iteration defined in \eqref{Eqn:Alg_orth_final}, and the heuristic behind \eqref{Eqn:Alg_orth_final} is irrelevant for our analysis.

\item The \Alg iteration has a parameter: $\mu\in(0,1]$, which does not change across iterations. To calibrate \Alg with the eigenvector problem, we need to choose the value of $\mu$ carefully. We will discuss the details in Section \ref{Sec:GEC}.

\item From this representation, \eqref{Eqn:Alg_orth_final_a} can be viewed as a power method applied to $\bm{E}(\mu)$.
This observation is crucial for our analysis. We used the notation $\bm{E}(\mu)$ to emphasize the impact of $\mu$.

\item The two matrices involved in $\bm{E}(\mu)$ satisfy the following ``zero-trace'' property (referred to as ``divergence-free'' in \cite{Ma2016}):
\BE
\frac{1}{m}\Tr\left(\frac{ \bm{G}}{\G}-
\bm{I}\right)=0\quad\text{and}\quad
\frac{1}{m}\Tr(\delta \bm{AA}^{\UH}-\bm{I})=0.
\EE
This zero-trace property is the key to the correctness of the state evolution (SE) characterization.

\end{itemize}

\subsection{State evolution analysis}\label{Sec:SE}

State evolution (SE) was first introduced in \cite{DoMaMo09,Bayati&Montanari11} to analyze the dynamic behavior of AMP. However, the original SE technique for AMP only works when the sensing matrix $\bm{A}$ has i.i.d. entries. Similar to AMP, \Alg can also be described by certain SE recursions, but the SE for \Alg works for a wider class of sensing matrices (specifically, unitarily-invariant $\bm{A}$ \cite{Ma2016,Rangan17,Takeuchi2017}) that include the random partial orthogonal matrix considered in this paper.

Denote $\bm{z}_\star=\bm{Ax}_\star$. Assume that the initial estimate $\bm{z}^0\overset{d}{=}\alpha_0\bm{z}_\star+\sigma_0\bm{w}^0$, where $\bm{w}^0\sim\mathcal{CN}(\mathbf{0},1/\delta\bm{I})$ is independent of $\bm{z}_\star$. Then, intuitively speaking, \Alg has an appealing property that $\bm{z}^{t+1}$ in \eqref{Eqn:Alg_orth_final_a} is approximately
\BE\label{Eqn:z_AWGN_first}
\bm{z}^{t+1} \approx \alpha_z^{t+1} \bm{z}_{\star}+\sigma^{t+1}\bm{w}^{t+1},
\EE
where $\alpha_t$ and $\sigma_t$ are the variables defined in \eqref{Eqn:SE_final}, and $\bm{w}^{t+1}$ is an iid Gaussian vector. Due to this property, the distribution of $\bm{z}^t$ is fully specified by $\alpha_t$ and $\sigma_t$. Further, for a given $\T$ and a fixed value of $\delta$, the sequences $\{\alpha_t\}_{t\ge1}$ and $\{\sigma^2_t\}_{t\ge1}$ can be calculated recursively from the following two-dimensional map:
\BS\label{Eqn:SE_final}
\begin{align}
\alpha_{t+1} &= (\delta-1)\cdot\alpha_t\cdot \big( \psi_1(\mu)-1 \big),\label{Eqn:SE_final_a}\\
\sigma^2_{t+1} &= (\delta-1)\cdot\left[|\alpha_t|^2\cdot\big( \psi_3^2(\mu)-\psi_1^2(\mu)\big)+\sigma_t^2\cdot\big( \psi_2(\mu)-1 \big)\right],\label{Eqn:SE_final_b}
\end{align}
\ES
where the functions $\psi_1$, $\psi_2$ and $\psi_3$ are defined in \eqref{Eqn:psi_definitions}. To gain some intuition on the SE, we present a heuristic way of deriving the SE in Appendix \ref{App:SE_heuristics}. The conditioning technique developed in \cite{Rangan17,Takeuchi2017} is promising to prove Claim \ref{Lem:SE}, but we need to generalize the results to handle complex-valued nonlinear model with possibly non-continuous $\T$. Given the fact that the connection between \textsf{PCA-EP} and the spectral estimator is already non-rigorous, we did not make such an effort.
\begin{claim}[SE prediction]\label{Lem:SE}
Consider the \Alg algorithm in \eqref{Eqn:Alg_orth_final}.
Assume that the initial estimate $\bm{z}^0\overset{d}{=}\alpha_0\bm{z}_\star+\sigma_0\bm{w}^0$ where $\bm{w}^0\sim\mathcal{CN}(\mathbf{0},1/\delta\bm{I})$ is independent of $\bm{z}_\star$.
Then, almost surely, the following hold for $t\ge1$:
\[
\lim_{n\to\infty}\frac{\langle\bm{z}_{\star},\bm{z}^t\rangle}{\|\bm{z}_{\star}\|^2} =\alpha_t\quad \text{and}\quad \lim_{n\to\infty}\frac{\|\bm{z}^t-\alpha_t\bm{z}_{\star}\|^2}{\|\bm{z}_{\star}\|^2} =\sigma^2_t,
\]
where $\alpha_t$ and $\sigma^2_t$ are defined in \eqref{Eqn:SE_final}. Furthermore, almost surely we have
\BE\label{Eqn:rho_SE}
\begin{split}
\lim_{n\to\infty}P_\T^2(\bm{x}^{t},\bm{x}) &=\lim_{n\to\infty}\frac{|\langle\bm{x}_{\star},\bm{x}^{t}\rangle|^2}{\|\bm{x}_{\star}\|^2\|\bm{x}_{t}\|^2}=\frac{|\alpha_t|^2}{|\alpha_t|^2+\frac{\delta-1}{\delta}\cdot \sigma_t^2},
\end{split}
\EE
where $\bm{x}^{t}$ is defined in \eqref{Eqn:x_hat_def}.
\end{claim}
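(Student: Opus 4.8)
The plan is to view \Alg as an instance of the orthogonal/vector-AMP recursion for unitarily invariant sensing matrices and to establish Claim~\ref{Lem:SE} by the conditioning technique of \cite{Rangan17,Takeuchi2017}. Write $\bm{\Pi}\Mydef\bm{A}\bm{A}^{\UH}$ for the orthogonal projection onto the $n$-dimensional column space $\mathcal{C}=\mr{range}(\bm{A})$, so that $\bm{z}_\star=\bm{A}\bm{x}_\star\in\mathcal{C}$ and $\bm{E}(\mu)=(\delta\bm{\Pi}-\bm{I})\bm{S}$ with $\bm{S}\Mydef\bm{G}/\langle\bm{G}\rangle-\bm{I}$ a fixed diagonal matrix. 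Each iteration then splits into an \emph{entrywise} step $\bm{r}^t\Mydef\bm{S}\bm{z}^t$ and a \emph{linear} step $\bm{z}^{t+1}=(\delta\bm{\Pi}-\bm{I})\bm{r}^t$; because $\bm{\Pi}$ acts as $(\delta-1)\bm{I}$ on $\mathcal{C}$ and as $-\bm{I}$ on $\mathcal{C}^\perp$, the linear step multiplies the two subspace components by $\delta-1$ and $-1$. I would run a joint induction on $t$ carrying two levels of bookkeeping: (i) the empirical joint law of $(\bm{z}_\star,\bm{z}^1,\dots,\bm{z}^t)$ converges, with $\bm{z}^s=\alpha_s\bm{z}_\star+\sigma_s\bm{w}^s$ and the entries of $\bm{w}^s$ asymptotically $\mathcal{CN}(0,1/\delta)$ and independent of those of $\bm{z}_\star$; and (ii) the \emph{geometric} fraction $\|\bm{\Pi}\bm{w}^s\|^2/\|\bm{w}^s\|^2$ of the noise energy that lies in $\mathcal{C}$. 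The two descriptions are compatible: a vector may be entrywise $\mathcal{CN}(0,1/\delta)$ yet globally correlated with the subspace $\mathcal{C}$. The divergence-free (zero-trace) property of both operators is what keeps the residual asymptotically Gaussian and orthogonal to the signal, with no Onsager correction.

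For the entrywise step a law of large numbers over the coordinates does the work. Since $\bm{A}$ is Haar and $\|\bm{x}_\star\|^2=n$, the entries of $\bm{z}_\star$ are asymptotically $\mathcal{CN}(0,1/\delta)$, matching the law of $Z_\star$ in \eqref{Eqn:psi_definitions}. Using hypothesis (i) I would compute the coefficient of $\bm{r}^t$ along $\bm{z}_\star$, namely $\beta_t\Mydef\langle\bm{z}_\star,\bm{r}^t\rangle/\|\bm{z}_\star\|^2\to\alpha_t(\psi_1(\mu)-1)$, and the energy of the residual $\bm{e}^t\Mydef\bm{r}^t-\beta_t\bm{z}_\star\perp\bm{z}_\star$, namely $\|\bm{e}^t\|^2/\|\bm{z}_\star\|^2\to|\alpha_t|^2(\psi_3^2(\mu)-\psi_1^2(\mu))+\sigma_t^2(\psi_2(\mu)-1)$; the three functions $\psi_1,\psi_2,\psi_3$ emerge precisely as the moments of $G$ against $\mathcal{CN}(0,1/\delta)$, with $\tfrac1m\Tr\bm{S}=0$ removing the self-term. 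The crucial structural input, supplied by the conditioning lemma, is that the freshly produced residual $\bm{e}^t$ is asymptotically isotropic orthogonal to $\bm{z}_\star$, so its column-space fraction resets to the generic value $\tfrac1m\Tr\bm{\Pi}=1/\delta$; intuitively, the diagonal (basis-aligned) map $\bm{S}$ delocalizes relative to the Haar-random subspace $\mathcal{C}$.

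Feeding this into the linear step reproduces \eqref{Eqn:SE_final}. Using $\bm{\Pi}\bm{z}_\star=\bm{z}_\star$ and $\bm{e}^t\perp\bm{z}_\star$ gives $\alpha_{t+1}=(\delta-1)\beta_t=(\delta-1)\alpha_t(\psi_1-1)$, which is \eqref{Eqn:SE_final_a}, while $\|(\delta\bm{\Pi}-\bm{I})\bm{e}^t\|^2=(\delta-1)^2\|\bm{\Pi}\bm{e}^t\|^2+\|(\bm{I}-\bm{\Pi})\bm{e}^t\|^2\to(\delta-1)\|\bm{e}^t\|^2$ — the last step using the column-space fraction $1/\delta$ — gives $\sigma_{t+1}^2=(\delta-1)\,\|\bm{e}^t\|^2/\|\bm{z}_\star\|^2$, which is \eqref{Eqn:SE_final_b}. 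The same computation propagates the geometry: a residual of $\mathcal{C}$-fraction $1/\delta$ is carried to a noise $\bm{w}^{t+1}$ of $\mathcal{C}$-fraction $\tfrac{(\delta-1)^2/\delta}{(\delta-1)^2/\delta+(\delta-1)/\delta}=\tfrac{\delta-1}{\delta}$ for every $t\ge1$. For the cosine similarity \eqref{Eqn:rho_SE} I would note that $\bm{x}^t\propto\bm{A}^{\UH}\bm{z}^t$, so that $P_\T^2(\bm{x}^t,\bm{x}_\star)=|\langle\bm{z}_\star,\bm{z}^t\rangle|^2/(\|\bm{x}_\star\|^2\|\bm{A}^{\UH}\bm{z}^t\|^2)$ after using $\bm{A}^{\UH}\bm{z}_\star=\bm{x}_\star$; with $\langle\bm{z}_\star,\bm{z}^t\rangle=\alpha_t\|\bm{z}_\star\|^2$ (noise orthogonal) and $\|\bm{A}^{\UH}\bm{z}^t\|^2=|\alpha_t|^2\|\bm{x}_\star\|^2+\sigma_t^2\|\bm{\Pi}\bm{w}^t\|^2$, inserting $\|\bm{\Pi}\bm{w}^t\|^2=\tfrac{\delta-1}{\delta}\|\bm{w}^t\|^2$ and $\|\bm{x}_\star\|^2=\|\bm{z}_\star\|^2=\|\bm{w}^t\|^2=n$ yields exactly \eqref{Eqn:rho_SE}.

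The algebra above is routine; the real difficulty is justifying the conditioning lemma that underlies the asymptotic Gaussianity and isotropy of the residuals. The existing proofs \cite{Rangan17,Takeuchi2017} require the separable nonlinearity to be componentwise Lipschitz and are written for real models, whereas here $\bm{S}$ is built from a $\T$ that may be discontinuous (e.g.\ the indicator-based subset or trimming functions) and the model is complex and circularly symmetric. The plan is therefore to (a) recompute the conditional distribution of $\bm{A}$ given the sigma-algebra of the past iterates and show that the newly revealed Haar randomness makes $\bm{e}^t$ asymptotically isotropic orthogonal to $\bm{z}_\star$, while controlling the dependence of $\bm{S}$ on $\bm{z}_\star=\bm{A}\bm{x}_\star$; and (b) weaken the regularity requirement on $\T$ from Lipschitz to bounded (Assumption~\ref{Ass:T}) with at most finitely many discontinuities, by approximating $\T$ with smooth truncations, applying the established state evolution to each, and passing to the limit via dominated convergence. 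This extension — not any step in the derivation of \eqref{Eqn:SE_final} or \eqref{Eqn:rho_SE} — is where the essential work lies, and it is precisely the gap later closed by the random-matrix argument of \cite{Dudeja2020}.
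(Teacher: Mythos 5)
Your proposal is correct and takes essentially the same route as the paper's own (explicitly heuristic) treatment in Appendix \ref{App:SE_heuristics}: the same signal-plus-orthogonal-residual decomposition, the same zero-trace/delocalization heuristic that treats the post-diagonal residual as isotropic relative to the Haar-random column space, and the same algebra recovering \eqref{Eqn:SE_final} and \eqref{Eqn:rho_SE} --- your explicit column-space-fraction bookkeeping ($1/\delta$ for residuals, $(\delta-1)/\delta$ for the noise) is an equivalent reformulation of the paper's trace computations such as $\frac{1}{m}\Tr\big((\delta\bm{A}\bm{A}^{\UH}-\bm{I})^2\big)=\delta-1$. Note that the paper itself leaves Claim \ref{Lem:SE} unproven (it supplies only this heuristic derivation and explicitly defers the extension of the conditioning technique of \cite{Rangan17,Takeuchi2017} to the complex-valued, possibly discontinuous-$\T$ setting), so your deferral of exactly that step to an unexecuted plan places your argument at the same level of rigor as the paper's own.
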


\subsection{Connection between \Alg and the PCA problem}
\label{Sec:rationale}
Lemma \ref{Lem:stationary} below shows that any nonzero stationary point of \Alg in \eqref{Eqn:Alg_orth_final} is an eigenvector of the matrix $\bm{D}=\bm{A}^{\UH}\bm{TA}$. This is our motivation for analyzing the performance of the spectral method through \textsf{PCA-EP}.
\begin{lemma}\label{Lem:stationary}
Consider the \Alg algorithm in \eqref{Eqn:Alg_orth_final} with $\mu\in(0,1]$. Let $\bm{z}^{\infty}$ be an arbitrary stationary point of \Alg. Suppose that $\bm{A}^{\UH}\bm{z}^{\infty}\neq\textbf{0}$, then $(\bm{x}^{\infty},\lambda(\mu))$ is an eigen-pair of $\bm{D}=\bm{A}^{\UH}\bm{TA}$, where the eigenvalue $\lambda(\mu)$ is given by
\BE\label{Eqn:eigen_stationary}
\lambda(\mu)=\frac{1}{\mu}-\frac{\delta-1}{\delta}\frac{1}{\G}.
\EE
\end{lemma}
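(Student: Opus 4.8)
The plan is to verify directly that $\bm{x} \Mydef \bm{A}^{\UH}\bm{z}^{\infty}$ (the unnormalized version of $\bm{x}^{\infty}$, which is nonzero by the hypothesis $\bm{A}^{\UH}\bm{z}^{\infty}\neq\bm{0}$) satisfies $\bm{D}\bm{x}=\lambda(\mu)\bm{x}$; since the eigenvector property is scale-invariant, normalization is irrelevant. Throughout I write $g\Mydef\G$ and introduce the projector $\bm{P}\Mydef\bm{A}\bm{A}^{\UH}$ onto the range of $\bm{A}$. Column-orthonormality $\bm{A}^{\UH}\bm{A}=\bm{I}$ supplies the two identities I will use repeatedly, $\bm{A}^{\UH}\bm{P}=\bm{A}^{\UH}$ and $\bm{A}\bm{x}=\bm{P}\bm{z}^{\infty}$. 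A subtlety to keep in mind is that $\bm{z}^{\infty}$ need not lie in the range of $\bm{A}$, so $\bm{P}\bm{z}^{\infty}\neq\bm{z}^{\infty}$ in general and the projector cannot be silently dropped.

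First I would unpack the stationarity condition $\bm{z}^{\infty}=\bm{E}(\mu)\bm{z}^{\infty}=(\delta\bm{P}-\bm{I})(g^{-1}\bm{G}-\bm{I})\bm{z}^{\infty}$. Setting $\bm{v}\Mydef(g^{-1}\bm{G}-\bm{I})\bm{z}^{\infty}$ and noting that $\bm{z}^{\infty}+\bm{v}=g^{-1}\bm{G}\bm{z}^{\infty}$, the fixed-point relation collapses to
\[
\bm{G}\bm{z}^{\infty}=\delta\,\bm{P}\bm{G}\bm{z}^{\infty}-\delta g\,\bm{P}\bm{z}^{\infty}.\qquad(\star)
\]
Left-multiplying $(\star)$ by $\bm{A}^{\UH}$ and using $\bm{A}^{\UH}\bm{P}=\bm{A}^{\UH}$ eliminates the projectors and yields the scalar-coefficient relation $\bm{A}^{\UH}\bm{G}\bm{z}^{\infty}=\frac{\delta g}{\delta-1}\,\bm{x}$, which serves as the workhorse of the argument.

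Next I would compute $\bm{D}\bm{x}=\bm{A}^{\UH}\bm{T}\bm{P}\bm{z}^{\infty}$, solve $(\star)$ for $\bm{P}\bm{z}^{\infty}$, and substitute, together with the diagonal identity $\bm{T}=\mu^{-1}\bm{I}-\bm{G}^{-1}$ (so that $\bm{T}\bm{G}=\mu^{-1}\bm{G}-\bm{I}$, since $\bm{T}$ and $\bm{G}$ commute). This expresses $\bm{D}\bm{x}$ as a combination of the three terms $\bm{A}^{\UH}\bm{T}\bm{P}\bm{G}\bm{z}^{\infty}$, $\bm{A}^{\UH}\bm{G}\bm{z}^{\infty}$, and $\bm{x}$; the last two are disposed of immediately by the workhorse relation. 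The crux of the proof — the step I expect to be the main obstacle — is recognizing that the remaining term closes on itself: $\bm{A}^{\UH}\bm{T}\bm{P}\bm{G}\bm{z}^{\infty}=\bm{D}\,(\bm{A}^{\UH}\bm{G}\bm{z}^{\infty})=\frac{\delta g}{\delta-1}\,\bm{D}\bm{x}$, again via the workhorse relation. This self-reference converts the vector identity into a single scalar linear equation for the unknown vector $\bm{D}\bm{x}$.

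Finally I would collect terms. The self-referential contribution carries coefficient $\frac{\delta}{\delta-1}$; moving it to the left-hand side leaves $-\frac{1}{\delta-1}\bm{D}\bm{x}=\left(\frac{1}{\delta g}-\frac{1}{\mu(\delta-1)}\right)\bm{x}$, and multiplying through by $-(\delta-1)$ gives exactly $\bm{D}\bm{x}=\left(\frac{1}{\mu}-\frac{\delta-1}{\delta}\cdot\frac{1}{g}\right)\bm{x}$, i.e. the eigenvalue $\lambda(\mu)$ of \eqref{Eqn:eigen_stationary}. The only side conditions are $g=\G\neq0$ and $\delta\neq1$: the former holds because the diagonal entries of $\bm{G}$ are positive for $\mu\in(0,1)$ under Assumption~\ref{Ass:T} (where $\mu^{-1}>1\ge\T$), and the latter holds since $\delta>1$. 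Together with $\bm{x}\neq\bm{0}$, this shows $(\bm{x}^{\infty},\lambda(\mu))$ is a genuine eigen-pair of $\bm{D}$.
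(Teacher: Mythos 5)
Your proof is correct --- I checked each step: the fixed-point relation $(\star)$, the workhorse identity $\bm{A}^{\UH}\bm{G}\bm{z}^{\infty}=\tfrac{\delta g}{\delta-1}\bm{x}$, the self-referential substitution, and the final solve all hold --- but it is organized genuinely differently from the paper's proof. The paper introduces the auxiliary variable $\bm{p}^{\infty}=(\bm{G}/\G-\bm{I})\bm{z}^{\infty}$, writes stationarity as a fixed-point equation for $\bm{p}^{\infty}$ (composing the two linear maps in the opposite order), and then clears the diagonal matrices by left-multiplying with $\bm{A}^{\UH}\G(\mu^{-1}\bm{I}-\bm{T})=\G\,\bm{A}^{\UH}\bm{G}^{-1}$; this produces the eigen-relation directly, with $\bm{A}^{\UH}\bm{p}^{\infty}$ as the eigenvector, and a short final step ($\bm{A}^{\UH}\bm{z}^{\infty}=(\delta-1)\bm{A}^{\UH}\bm{p}^{\infty}$) identifies it with the normalized $\bm{x}^{\infty}$. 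You instead pivot on $\bm{x}=\bm{A}^{\UH}\bm{z}^{\infty}$ itself, never invert $\bm{G}$ (using $\bm{T}\bm{G}=\mu^{-1}\bm{G}-\bm{I}$ instead), and reach the eigen-relation by solving the linear equation $\bm{D}\bm{x}=\tfrac{\delta}{\delta-1}\bm{D}\bm{x}+\bigl(\tfrac{1}{\delta g}-\tfrac{1}{\mu(\delta-1)}\bigr)\bm{x}$ for $\bm{D}\bm{x}$, which is solvable precisely because $\delta>1$. The two arguments are of comparable length; yours buys a one-stage identification of the eigenvector (no auxiliary $\bm{p}$, no proportionality step at the end) and makes explicit where $\delta\neq 1$ and $\G\neq 0$ enter, while the paper's buys a more direct derivation (no self-referential equation to close) and makes the role of $\bm{G}^{-1}=\mu^{-1}\bm{I}-\bm{T}$ --- the bridge between $\bm{G}$ and the data matrix --- more transparent. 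One cosmetic remark: your positivity argument for $\G$ is stated for $\mu\in(0,1)$; at the endpoint $\mu=1$ it still goes through provided $\T(y_i)<1$ for all $i$ (the supremum in Assumption \ref{Ass:T} need not be attained), which is implicitly assumed whenever $\bm{G}$ in \eqref{Eqn:G_def_first} is well defined.
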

\begin{proof}
We introduce the following auxiliary variable:
\BE\label{Eqn:p_def}
\bm{p}^t\Mydef \left(\frac{\bm{G}}{\G}-\bm{I}\right)\bm{z}^t.
\EE
When $\bm{z}^{\infty}$ is a stationary point of \eqref{Eqn:Alg_orth_final_a}, we have
\BE\label{Eqn:GEC_fixed_orth}
\begin{split}
\bm{p}^{\infty}&=\left(\frac{1}{\G}(1/\mu\bm{I}-\bm{T})^{-1}-\bm{I}\right)(\delta\bm{AA}^{\UH}-\bm{I})\bm{p}^{\infty}.
\end{split}
\EE
Rearranging terms, we have
\[
\begin{split}
&\left(\frac{1}{\G}(1/\mu\bm{I}-\bm{T})^{-1}\right)\bm{p}^{\infty}\\
&=\left(\frac{1}{\G}(1/\mu\bm{I}-\bm{T})^{-1}-\bm{I}\right)\delta\bm{AA}^{\UH}\bm{p}^{\infty}.
\end{split}
\]
Multiplying $\bm{A}^{\UH}\G(1/\mu\bm{I}-\bm{T})$ from both sides of the above equation yields
\[
\begin{split}
\bm{A}^{\UH}\bm{p}^{\infty}&=\bm{A}^{\UH}\left(\bm{I}-\G(1/\mu\bm{I}-\bm{T})\right)\delta\bm{AA}^{\UH}\bm{p}^{\infty}\\
&=\delta(1-\mu^{-1}\G)\bm{A}^{\UH}\bm{p}^{\infty}+\delta \G\cdot\bm{A}^{\UH}\bm{TAA}^{\UH}\bm{p}^{\infty}.
\end{split}
\]
After simple calculations, we finally obtain
\BE\label{Eqn:GEC_fixed_eigen}
\bm{A}^{\UH}\bm{T}\bm{A}\left(\bm{A}^{\UH}\bm{p}^{\infty}\right)=\left(\frac{1}{\mu}-\frac{\delta-1}{\delta}\frac{1}{\G}\right)\left(\bm{A}^{\UH}\bm{p}^{\infty}\right).
\EE
In the above, we have identified an eigen-pair for the matrix $\bm{A}^{\UH}\bm{T}\bm{A}$. To complete our proof, we note from \eqref{Eqn:Alg_orth_final_a} and \eqref{Eqn:p_def} that $\bm{z}^{\infty}=(\delta\bm{AA}^{\UH}-\bm{I})\bm{p}^{\infty}$ and so
\[
\begin{split}
\bm{A}^{\UH}\bm{z}^{\infty}&=\bm{A}^{\UH}(\delta\bm{AA}^{\UH}-\bm{I})\bm{p}^{\infty}\\
&=(\delta-1)\bm{A}^{\UH}\bm{p}^{\infty}.
\end{split}
\]
Hence,
\BE
{\bm{x}}^{\infty}\Mydef\frac{\sqrt{n}}{\|\bm{A}^{\UH}\bm{z}^{\infty}\|}\cdot\bm{A}^{\UH}\bm{z}^{\infty}=\frac{\sqrt{n}}{\|\bm{A}^{\UH}\bm{p}^{\infty}\|}\cdot\bm{A}^{\UH}\bm{p}^{\infty}
\EE
is also an eigenvector.
\end{proof}

\begin{remark}\label{Rem:empri_div}
Notice that the eigenvalue identified in \eqref{Eqn:eigen_stationary} is closely related to the function $\Lambda(\cdot)$ in \eqref{Eqn:Lambda_def}. In fact, the only difference between \eqref{Eqn:eigen_stationary} and \eqref{Eqn:Lambda_def} is that the normalized trace $\G$ in \eqref{Eqn:eigen_stationary} is replaced by $\mathbb{E}[G(|Z_\star|,\mu)]$, where $Z_\star\sim\mathcal{CN}(0,1/\delta)$. Under certain mild regularity conditions on $\T$, it is straightforward to use the weak law of large numbers to prove $\G\overset{p}{\to}\mathbb{E}[G(|Z_\star|,\mu)]$.
\end{remark}

Lemma \ref{Lem:stationary} shows that the stationary points of the \Alg algorithm are eigenvectors of $\bm{A}^{\UH}\bm{TA}$. Since the asymptotic performance of \Alg can be characterized via the SE platform, it is conceivable that the fixed points of the SE describe the asymptotic performance of the spectral estimator. However, the answers to the following questions are still unclear:
\begin{itemize}
\item Even though $\bm{x}^{\infty}$ is an eigenvector of $\bm{A}^{\UH}\bm{TA}$, does it correspond to the largest eigenvalue?
\item The eigenvalue in \eqref{Eqn:eigen_stationary} depends on $\mu$, which looks like a free parameter. How should $\mu$ be chosen?
\end{itemize}
In the following section, we will discuss these issues and provide some heuristic arguments for Claim \ref{Claim:correlation}.

\subsection{Heuristics about Claim \ref{Claim:correlation}}\label{Sec:heuristics}

The SE equations in \eqref{Eqn:SE_final} have two sets of solutions (in terms of $\alpha$, $\sigma^2$ and $\mu$):
\BS\label{Eqn:SE_fixed}
\begin{alignat}{3}
&\text{Uninformative solution:}\quad &&\alpha=0,\ \sigma^2\neq0, &\qquad\\
& &&\psi_2(\mu)=\frac{\delta}{\delta-1},&\label{Eqn:SE_uninformative}
\intertext{and}
&\text{Informative solution:}\quad &&\frac{|\alpha|^2}{\sigma^2}=\frac{\frac{\delta}{\delta-1}-\psi_2({\mu})}{\psi_3^2({\mu})-\left(\frac{\delta}{\delta-1}\right)^2},&\qquad\label{Eqn:SE_informative}\\
& &&\psi_1(\mu)=\frac{\delta}{\delta-1}.&
\end{alignat}
\ES

\begin{remark}
 Both solutions in \eqref{Eqn:SE_fixed} do not have a constraint on the norm of the estimate (which corresponds to a constraint on $|\alpha|^2+\sigma^2$). This is because we ignored the norm constraint in deriving our \Alg algorithm in Appendix \ref{App:derivations_a}. If our derivations explicitly take into account the norm constraint, we would get an additional equation that can specify the individual values of $|\alpha|$ and $\sigma$. However, only the ratio $|\alpha|^2/\sigma^2$ matters for our phase transition analysis. Hence, we have ignored the constraint on the norm of the estimate.
\end{remark}

\begin{remark}
 $\alpha=\sigma=0$ is also a valid fixed point to \eqref{Eqn:SE_final}. However, this solution corresponds to the all-zero vector and is not of interest for our purpose.
\end{remark}

For the uninformative solution, we have $\alpha=0$ and hence $P_\T^2(\bm{x}^{\infty},\bm{x}_\star)\to0$ according to \eqref{Eqn:rho_SE}. On the other hand, the informative solution (if exists) corresponds to an estimate that is positively correlated with the signal, namely, $P_\T^2({\bm{x}^{\infty}},\bm{x}_\star)>0$. Recall from \eqref{Eqn:Alg_orth_final} that $\mu\in(0,1]$ is a parameter of the \Alg algorithm. Claim \ref{Claim:correlation} is obtained based on the following heuristic argument: \textit{$P_\T^2(\hat{\bm{x}},\bm{x}_\star)>0$ if and only if there exists a $\hat{\mu}\in(0,1]$ so that the informative solution is valid}. More specifically, there exists a $\hat{\mu}$ so that the following conditions hold:
\BE\label{Eqn:heuristic_rho2}
\frac{|\alpha|^2}{\sigma^2}=\frac{\frac{\delta}{\delta-1}-\psi_2({\hat{\mu}})}{\psi_3^2({\mu})-\left(\frac{\delta}{\delta-1}\right)^2}>0\quad\text{and}\quad \psi_1(\hat{\mu})=\frac{\delta}{\delta-1}.
\EE
{Further, we have proved that (see Lemma \ref{Lem:PT_second}) the above condition is equivalent to the phase transition presented in Claim \ref{Claim:correlation}.}
Note that $\sigma^2\ge0$ (and so $|\alpha|^2/\sigma^2\ge0$) is an implicit condition in our derivations of the SE. Hence, for a valid informative solution, we must have $|\alpha|^2/\sigma^2>0$.

We now provide some heuristic rationales about Claim \ref{Claim:correlation}. Our first observation is that if the informative solution exists and we set $\mu$ in \Alg properly, then $\alpha_{t}$ and $\sigma_{t+1}^2$ neither tend to zero or infinity. A precise statement is given in Lemma \ref{Lem:norm} below. The proof is straightforward and hence omitted.

\begin{lemma}\label{Lem:norm}
Suppose that $\psi_1(\bar{\mu})>\frac{\delta}{\delta-1}$, where $\bar{\mu}$ is defined in \eqref{Eqn:mu_star_def}. Consider the \Alg algorithm with $\mu$ set to $\hat{\mu}$, where $\hat{\mu}\in(0,\bar{\mu}]$ is the unique solution to
\[
\psi_1(\mu)=\frac{\delta}{\delta-1}.
\]
Let $|\alpha_0|<\infty$ and $0<\sigma^2_0<\infty$. Then, $\{\alpha_t\}_{t\ge1}$ and $\{\sigma^2_t\}_{t\ge1}$ generated by \eqref{Eqn:SE_final} converge and
\[
0<|\alpha_\infty|^2+\sigma^2_\infty<\infty.
\]
\end{lemma}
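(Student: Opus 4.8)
The plan is to exploit the special structure of the state evolution map \eqref{Eqn:SE_final} that arises precisely when $\mu$ is frozen at the value $\hat{\mu}$ solving $\psi_1(\mu)=\frac{\delta}{\delta-1}$: this choice decouples the two equations and linearizes the $\sigma^2$-recursion, after which convergence is a one-line contraction argument. First I would substitute $\mu=\hat{\mu}$ into \eqref{Eqn:SE_final_a}. Since $\psi_1(\hat{\mu})=\frac{\delta}{\delta-1}$, the multiplier is $(\delta-1)\big(\psi_1(\hat{\mu})-1\big)=(\delta-1)\cdot\frac{1}{\delta-1}=1$, so the recursion collapses to $\alpha_{t+1}=\alpha_t$. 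Hence $\alpha_t\equiv\alpha_0$ for all $t$, and $\{\alpha_t\}$ converges trivially to $\alpha_\infty=\alpha_0$ with $|\alpha_\infty|=|\alpha_0|<\infty$.

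With $\alpha_t$ frozen at $\alpha_0$, equation \eqref{Eqn:SE_final_b} becomes an affine (first-order linear) recursion in $\sigma_t^2$,
\[
\sigma_{t+1}^2 = a + b\,\sigma_t^2,\qquad a \Mydef (\delta-1)|\alpha_0|^2\big(\psi_3^2(\hat{\mu})-\psi_1^2(\hat{\mu})\big),\qquad b \Mydef (\delta-1)\big(\psi_2(\hat{\mu})-1\big),
\]
whose convergence hinges on $|b|<1$. I would establish $0\le b<1$ in two steps. The upper bound $b<1$ follows from $\psi_2(\hat{\mu})<\frac{\delta}{\delta-1}$, which Lemma \ref{Lem:PT_second}(ii) guarantees under the standing hypothesis $\psi_1(\bar{\mu})>\frac{\delta}{\delta-1}$, since then $\psi_2(\hat{\mu})-1<\frac{1}{\delta-1}$. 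The lower bound $b\ge0$ follows from $\psi_2(\mu)=\mathbb{E}[G^2]/(\mathbb{E}[G])^2\ge1$, which is just Cauchy--Schwarz, $(\mathbb{E}[G])^2=(\mathbb{E}[G\cdot1])^2\le\mathbb{E}[G^2]$, and is valid because $G(|Z_\star|,\mu)>0$ for $\mu\in(0,1]$ (indeed $\T\le T_{\max}=1\le 1/\mu$). In the same way I would argue $a\ge0$: Lemma \ref{Lem:PT_second}(ii) also supplies $0<\rho_\T^2(\hat{\mu},\delta)<1$, and a short computation using $\psi_1(\hat{\mu})=\frac{\delta}{\delta-1}$ shows that the numerator of \eqref{Eqn:SNR_asym} is positive while $\rho_\T^2<1$ forces the denominator to exceed it, which is exactly $\psi_3^2(\hat{\mu})>\psi_1^2(\hat{\mu})$; hence $a\ge0$, and $a>0$ whenever $\alpha_0\neq0$.

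With $0\le b<1$ in hand, the affine recursion contracts geometrically: setting $\sigma_\infty^2\Mydef a/(1-b)$ one has $\sigma_t^2-\sigma_\infty^2=b^t(\sigma_0^2-\sigma_\infty^2)\to0$, with $0\le\sigma_\infty^2<\infty$ (and the iterates stay nonnegative since $a,b\ge0$ and $\sigma_0^2>0$). This yields convergence of both $\{\alpha_t\}$ and $\{\sigma_t^2\}$ and the finiteness $|\alpha_\infty|^2+\sigma_\infty^2<\infty$. For the strict lower bound $0<|\alpha_\infty|^2+\sigma_\infty^2$ I would use that the relevant initialization carries a nonzero signal overlap, $\alpha_0\neq0$: then $|\alpha_\infty|^2=|\alpha_0|^2>0$ immediately, and moreover $a>0$ forces $\sigma_\infty^2>0$ as well.

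The proof involves no hard estimate; the one point requiring care is the bookkeeping that converts the single hypothesis $\psi_1(\bar{\mu})>\frac{\delta}{\delta-1}$, via Lemma \ref{Lem:PT_second}, into the two sign conditions $0\le b<1$ and $a\ge0$ that drive the contraction. The only genuine wrinkle is the degenerate case $\alpha_0=0$: there the iteration remains on the uninformative branch ($\alpha_t\equiv0$ and $\sigma_t^2=b^t\sigma_0^2\to0$), so the limiting sum vanishes and the strict positivity fails; I would therefore note that this case is excluded by the (implicit) requirement $\alpha_0\neq0$, which is consistent with the interpretation of $\alpha_0$ as the initial correlation between $\bm{z}^0$ and the signal $\bm{z}_\star$.
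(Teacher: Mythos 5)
Your proof is correct and is essentially the argument the paper intends: the paper omits the proof as ``straightforward,'' but the identical contraction reasoning---$\alpha_{t+1}=\alpha_t$ because $(\delta-1)\big(\psi_1(\hat{\mu})-1\big)=1$, the coefficient $(\delta-1)\big(\psi_2(\hat{\mu})-1\big)<1$ via $\psi_2(\hat{\mu})<\tfrac{\delta}{\delta-1}$ from Lemma \ref{Lem:PT_second}, and $\psi_3^2(\hat{\mu})-\psi_1^2(\hat{\mu})>0$---appears verbatim in Appendix \ref{App:Cov_convergence}, the only cosmetic difference being that you rederive $\psi_3^2(\hat{\mu})>\psi_1^2(\hat{\mu})$ from the formula for $\rho_\T^2(\hat{\mu},\delta)$ and Lemma \ref{Lem:PT_second}(ii), whereas the paper invokes its Cauchy--Schwarz result (Lemma \ref{Lem:auxiliary1}) directly. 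Your observation that strict positivity of $|\alpha_\infty|^2+\sigma^2_\infty$ fails when $\alpha_0=0$ (so the lemma implicitly requires a nonzero initial overlap with the signal) is a correct and worthwhile refinement of the statement as written.
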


From Claim \ref{Lem:SE}, we have almost surely (see Lemma \ref{Lem:norm})
\BE\label{Eqn:Lem_norm}
\lim_{t\to\infty}\lim_{m\to\infty}\frac{\|\bm{z}^t\|^2}{m} =\frac{|\alpha_\infty|^2+\sigma^2_\infty}{\delta}\in(0,\infty).
\EE
Namely, the norm $\|\bm{z}^t\|^2/m$ does not vanish or explode, under a particular limit order. Now, suppose that for a finite and deterministic problem instance we still have
\BE\label{Eqn:Z_norm_finite}
\lim_{t\to\infty}\frac{\|\bm{z}^t\|^2}{m}=C\in(0,\infty),
\EE
for some constant $C$. Recall that \Alg can be viewed as a power iteration applied to the matrix $\bm{E}(\hat{\mu})$ (see \eqref{Eqn:Alg_orth_final}). Hence, \eqref{Eqn:Z_norm_finite} implies
\[
|\lambda_1(\bm{E}(\hat{\mu}))|=1,
\]
where $|\lambda_1(\bm{E}(\hat{\mu}))|$ denotes the spectral radius of $\bm{E}(\hat{\mu})$. Based on these heuristic arguments, we conjecture that $\lim_{m\to\infty}|\lambda_1(\bm{E}(\hat{\mu})|=1$ almost surely.
Further, we show in Appendices \ref{App:w_corr} and \ref{App:Cov_convergence} that \Alg algorithm converges in this case. Clearly, this implies that $(1,\bm{z}^{\infty})$ is an eigen-pair of $\bm{E}(\hat{\mu})$, where $\bm{z}^{\infty}$ is the limit of the estimate $\bm{z}^t$. Combining the above arguments, we have the following conjecture
\BE\label{Eqn:D2_outlier}
\lim_{m\to\infty}\lambda_1(\bm{E}(\hat{\mu}))=1.
\EE

If \eqref{Eqn:D2_outlier} is indeed correct, then the following lemma shows that the largest eigenvalue of $\bm{D}=\bm{A}^{\UH}\bm{TA}$ is $\Lambda(\hat{\mu})$. A proof of the lemma can be found in Appendix \ref{App:eigen_max}.

\begin{lemma}\label{Lem:leading_eigen}
Consider the matrix $\bm{E}(\hat{\mu})$ defined in \eqref{Eqn:Alg_orth_final_a}, where $\hat{\mu}\in(0,1]$. Let $\lambda_1\left(\bm{E}(\hat{\mu})\right)$ be the eigenvalue of $\bm{E}(\hat{\mu})$ that has the largest magnitude and $\hat{\bm{z}}$ is the corresponding eigenvector. If $\lambda_1\left(\bm{E}(\hat{\mu})\right)=1$ and $\bm{A}^{\UH}\hat{\bm{z}}\neq\textbf{0}$, then
\BE\label{Eqn:lambda_D_heuristic}
\lambda_1(\bm{A}^{\UH}\bm{T}\bm{A})=\frac{1}{\hat{\mu}}-\frac{\delta-1}{\delta}\frac{1}{\G}.
\EE
\end{lemma}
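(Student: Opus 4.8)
The plan is to split the statement into two parts: that $\nu_0\Mydef\frac{1}{\hat\mu}-\frac{\delta-1}{\delta}\frac{1}{\G}$ is \emph{an} eigenvalue of $\bm D\Mydef\bm A^{\UH}\bm T\bm A$, and that it is the \emph{largest} one. The first part is immediate from Lemma \ref{Lem:stationary}. Indeed, the hypothesis $\lambda_1(\bm E(\hat\mu))=1$ means $\bm E(\hat\mu)\hat{\bm z}=\hat{\bm z}$, so $\hat{\bm z}$ is a stationary point of the \Alg iteration \eqref{Eqn:Alg_orth_final_a} with $\mu=\hat\mu$; since $\bm A^{\UH}\hat{\bm z}\neq\bm 0$, Lemma \ref{Lem:stationary} shows that $\bm A^{\UH}\hat{\bm z}$ is an eigenvector of $\bm D$ with eigenvalue exactly $\nu_0$. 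Hence $\lambda_1(\bm D)\ge\nu_0$, and it only remains to exclude the possibility $\lambda_1(\bm D)>\nu_0$.

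For the second part I would reduce the whole spectrum of $\bm E(\hat\mu)=\bm R\bm N$, with $\bm R\Mydef\delta\bm A\bm A^{\UH}-\bm I$ and $\bm N\Mydef\frac{\bm G}{\G}-\bm I$, to an $n$-dimensional secular equation. Since $\delta>1$ makes $\bm R$ invertible, and assuming for the moment that $\bm N$ is nonsingular, a nonzero $\lambda$ is an eigenvalue of $\bm E(\hat\mu)$ iff $\det(\lambda\bm N^{-1}-\bm R)=0$. Writing $\lambda\bm N^{-1}-\bm R=\bm D_\lambda-\delta\bm A\bm A^{\UH}$ with the diagonal matrix $\bm D_\lambda\Mydef\bm I+\lambda\bm N^{-1}$ and invoking Sylvester's determinant identity, the nontrivial eigenvalues are precisely the roots of $\det(\bm I_n-\delta\bm A^{\UH}\bm B(\lambda)\bm A)=0$, where $\bm B(\lambda)\Mydef\bm D_\lambda^{-1}$ is diagonal with entries $\frac{1-\G m_i}{1+(\lambda-1)\G m_i}$ and $m_i\Mydef\hat\mu^{-1}-\T(y_i)>0$.

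The crucial observation is that at $\lambda=1$ one has $\bm A^{\UH}\bm B(1)\bm A=(1-\G/\hat\mu)\bm I+\G\bm D$, whose eigenvalues are the affine images $(1-\G/\hat\mu)+\G\nu_j$ of the eigenvalues $\nu_j$ of $\bm D$; since $\G>0$ and $1/\delta=(1-\G/\hat\mu)+\G\nu_0$, the value $1/\delta$ is the \emph{top} eigenvalue of $\bm A^{\UH}\bm B(1)\bm A$ exactly when $\nu_0=\lambda_1(\bm D)$. I would then argue by contradiction. Suppose $\lambda_1(\bm D)>\nu_0$ and set $h_1(\lambda)\Mydef\lambda_{\max}(\bm A^{\UH}\bm B(\lambda)\bm A)$, which is continuous on $[1,\infty)$ because all poles of $\bm B$ sit at $1-1/(\G m_i)<1$. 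Then $h_1(1)=(1-\G/\hat\mu)+\G\lambda_1(\bm D)>1/\delta$, while $h_1(\lambda)\to0$ as $\lambda\to\infty$ since every entry $\frac{1-\G m_i}{1+(\lambda-1)\G m_i}\to0$. By the intermediate value theorem there exists $\lambda^{\star}>1$ with $h_1(\lambda^{\star})=1/\delta$, hence $\det(\bm I_n-\delta\bm A^{\UH}\bm B(\lambda^{\star})\bm A)=0$; as $\lambda^{\star}$ exceeds every pole, the reduction above makes $\lambda^{\star}$ a genuine eigenvalue of $\bm E(\hat\mu)$ of magnitude strictly larger than $1$. This contradicts $\lambda_1(\bm E(\hat\mu))=1$, forcing $\lambda_1(\bm D)=\nu_0$.

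I expect the main obstacle to be the determinant bookkeeping rather than any monotonicity (the intermediate value theorem already bypasses the latter). One must treat the degenerate diagonal entries with $\G m_i=1$, where $\bm N$ is singular and the $\bm N^{-1}$ factorization fails, and must verify that the ``trivial'' eigenvalues arising from $\det\bm D_\lambda=0$ or from the kernel of $\bm N$ do not spoil the identification of $\lambda^{\star}$; I would settle these points by a perturbation/continuity argument in $\T$. A convenient feature is that the non-Hermiticity of $\bm E(\hat\mu)$---a product of two Hermitian matrices, hence generally possessing complex eigenvalues---creates no difficulty, because the intermediate value theorem yields a \emph{real} eigenvalue exceeding $1$, which already suffices to contradict the largest-magnitude hypothesis.
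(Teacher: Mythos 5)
Your proposal is correct and is essentially the paper's own proof: the same Sylvester-type reduction of the characteristic polynomial of $\bm{E}(\hat{\mu})$ to the Hermitian pencil $\bm{I}_n-\delta\bm{A}^{\UH}\bm{B}(\lambda)\bm{A}$, the same evaluation at $\lambda=1$ identifying $\bm{A}^{\UH}\bm{B}(1)\bm{A}$ as an affine image of $\bm{A}^{\UH}\bm{T}\bm{A}$, the same continuity-plus-decay argument on the top eigenvalue over $\lambda\in[1,\infty)$ (the paper phrases it as a range argument rather than contradiction via IVT, but it is the same idea), and the same appeal to Lemma \ref{Lem:stationary} to turn the resulting upper bound into equality. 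The one refinement worth adopting is that the degenerate case $\G m_i=1$, which you propose to handle by a perturbation in $\T$, disappears entirely if, as in the paper, you factor $\det\left((\lambda-1)\bm{I}+\bm{G}/\G\right)$ directly out of $\det\left(\lambda\bm{I}-\bm{E}(\hat{\mu})\right)$ — this diagonal factor is invertible for every $\lambda>1$ because $\bm{G}$ has positive entries, so the matrix $\bm{N}$ is never inverted and no extra continuity argument is needed.
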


Recall from Lemma \ref{Lem:stationary} that if \Alg converges, then the stationary estimate is an eigenvector of $\bm{D}=\bm{A}^{\UH}\bm{TA}$, but it is unclear whether it is the leading eigenvector. Our heuristic arguments in this section and Lemma \ref{Lem:leading_eigen} imply that \Alg indeed converges to the leading eigenvector of $\bm{D}$. Therefore, we conjecture that the fixed points of the SE characterize the asymptotic behavior of the spectral estimator.

\begin{figure*}[hbpt]
\centering
\includegraphics[width=.95\textwidth]{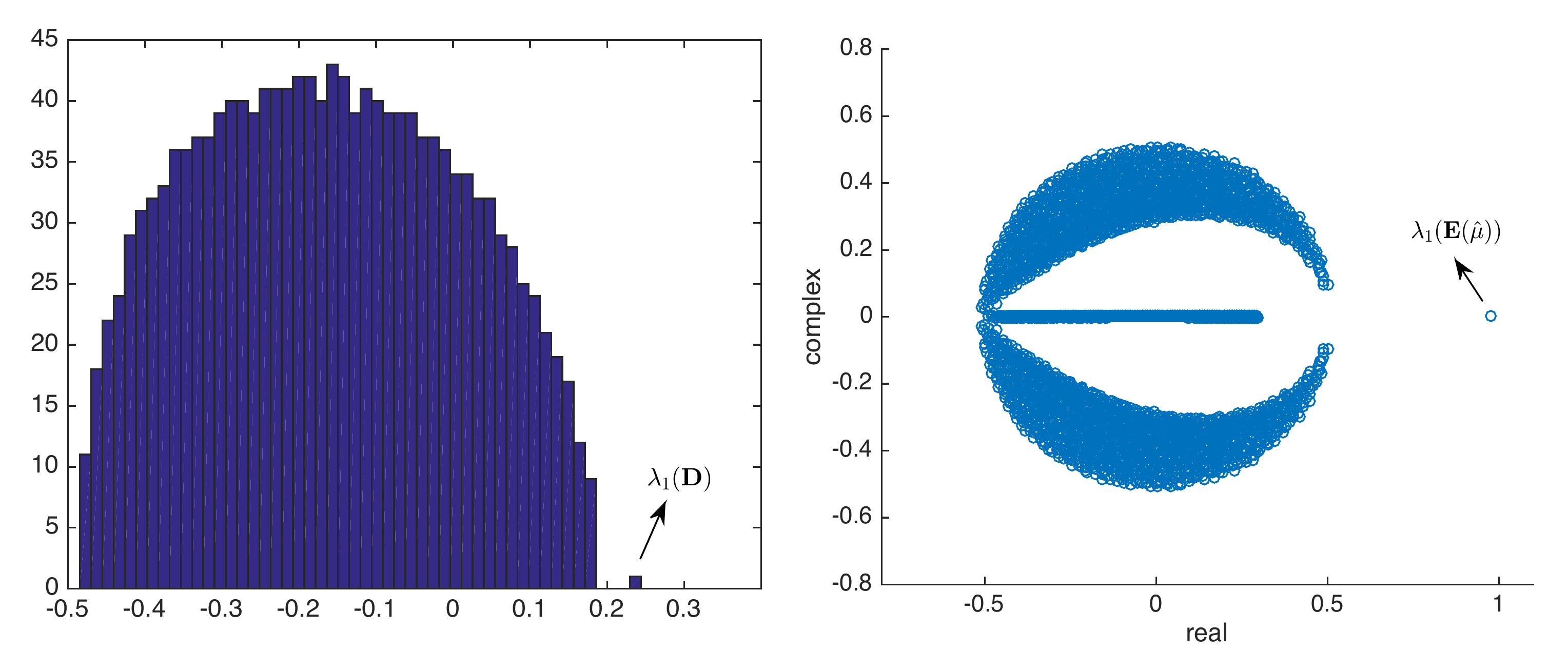}
\caption{\textbf{Left:} Histogram of the eigenvalues of $\bm{D}=\bm{A}^{\UH}\bm{TA}$. \textbf{Right:} Scatter plot of the eigenvalues of $\bm{E}(\hat{\mu})$. $\hat{\mu}$ is set to the unique solution to $\psi_1(\mu)=\frac{\delta}{\delta-1}$. $\T=\T_{\mr{MM}}$. $\delta=5$. $n=1500$. The signal $\bm{x}_\star$ is randomly generated from an i.i.d. zero Gaussian distribution.}\label{Fig:histograms}
\end{figure*}

\begin{figure*}[hbpt]
\centering
\subfloat{\includegraphics[width=.33\textwidth]{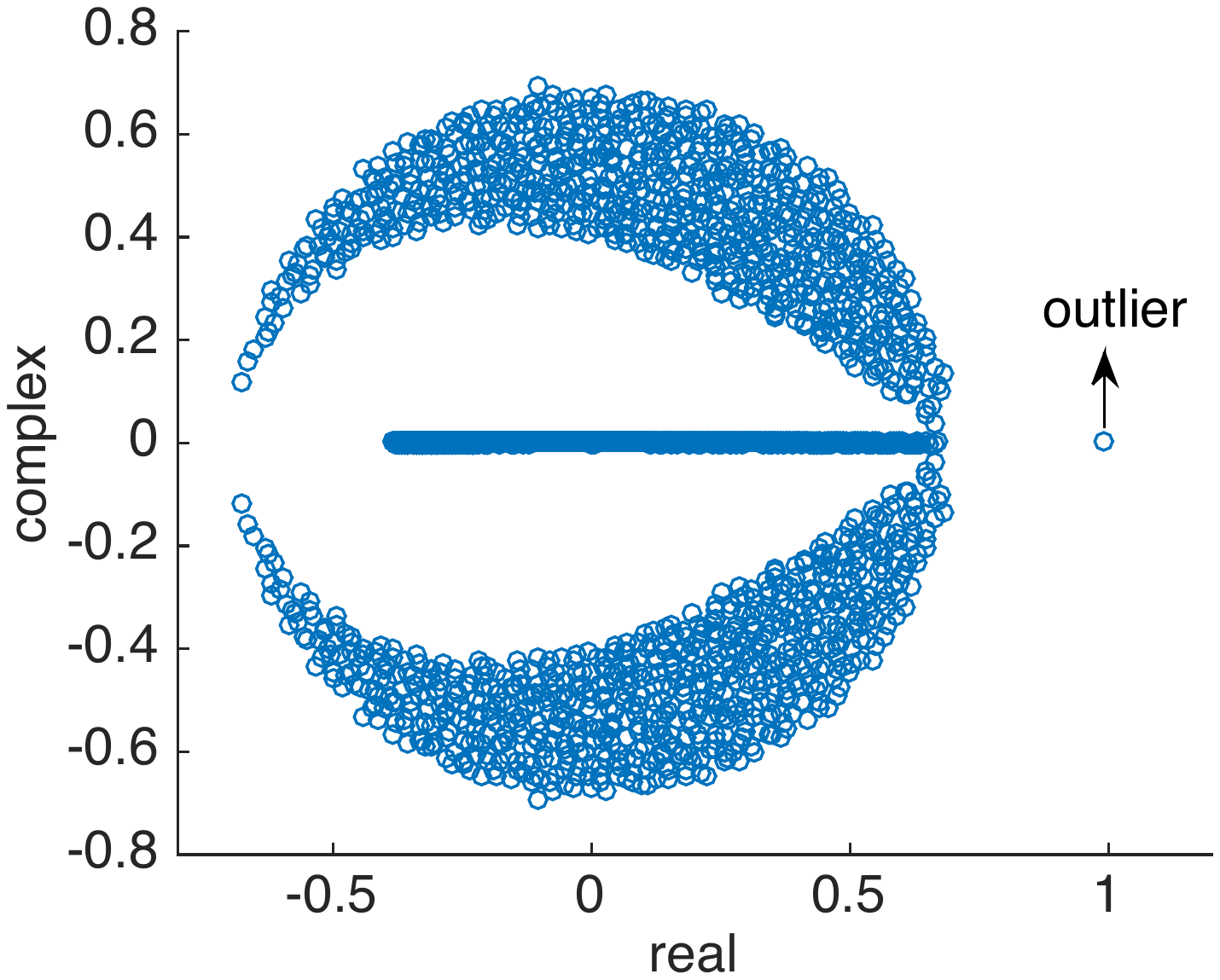}}
\hfil
\subfloat{\includegraphics[width=.33\textwidth]{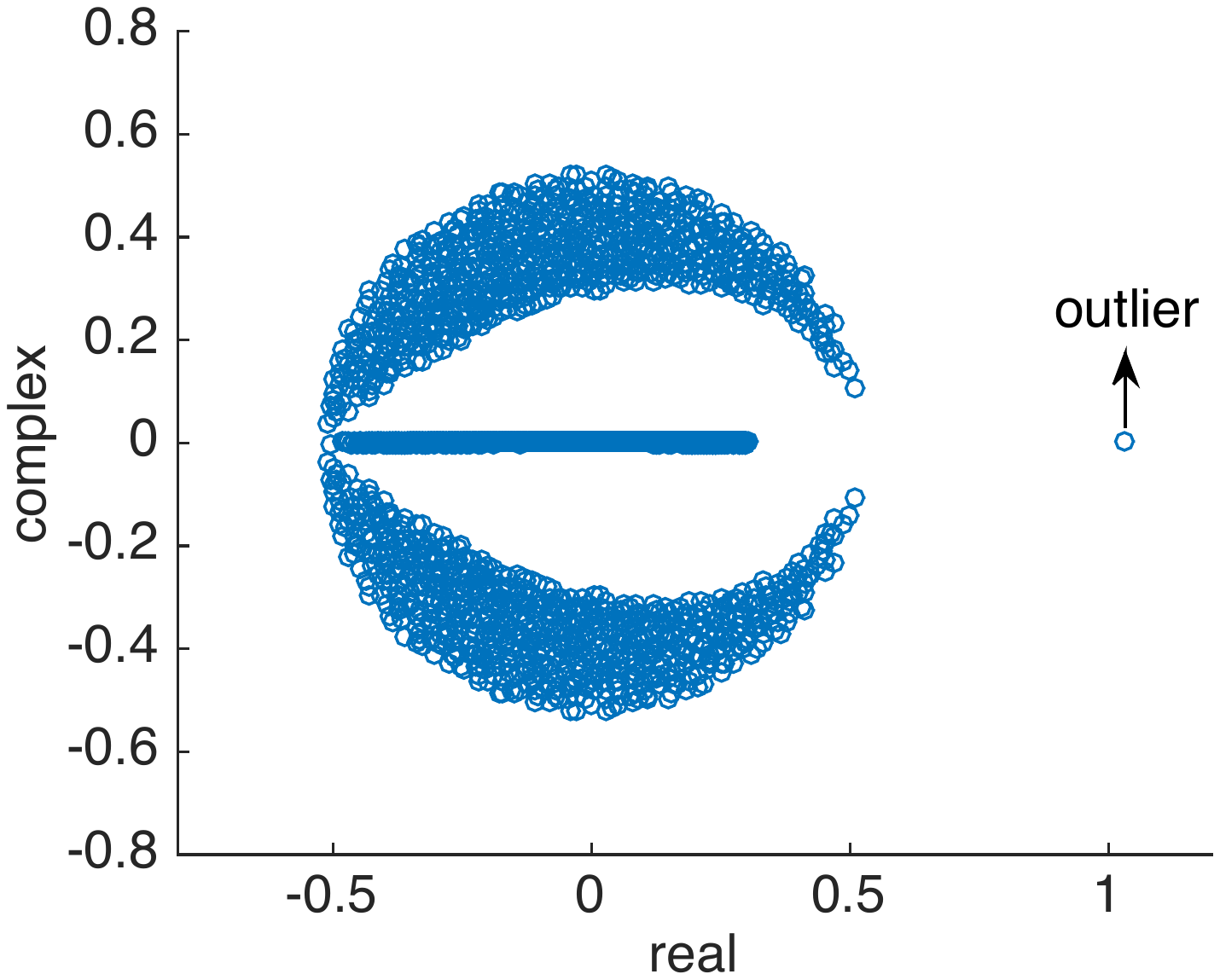}}
\hfil
\subfloat{\includegraphics[width=.33\textwidth]{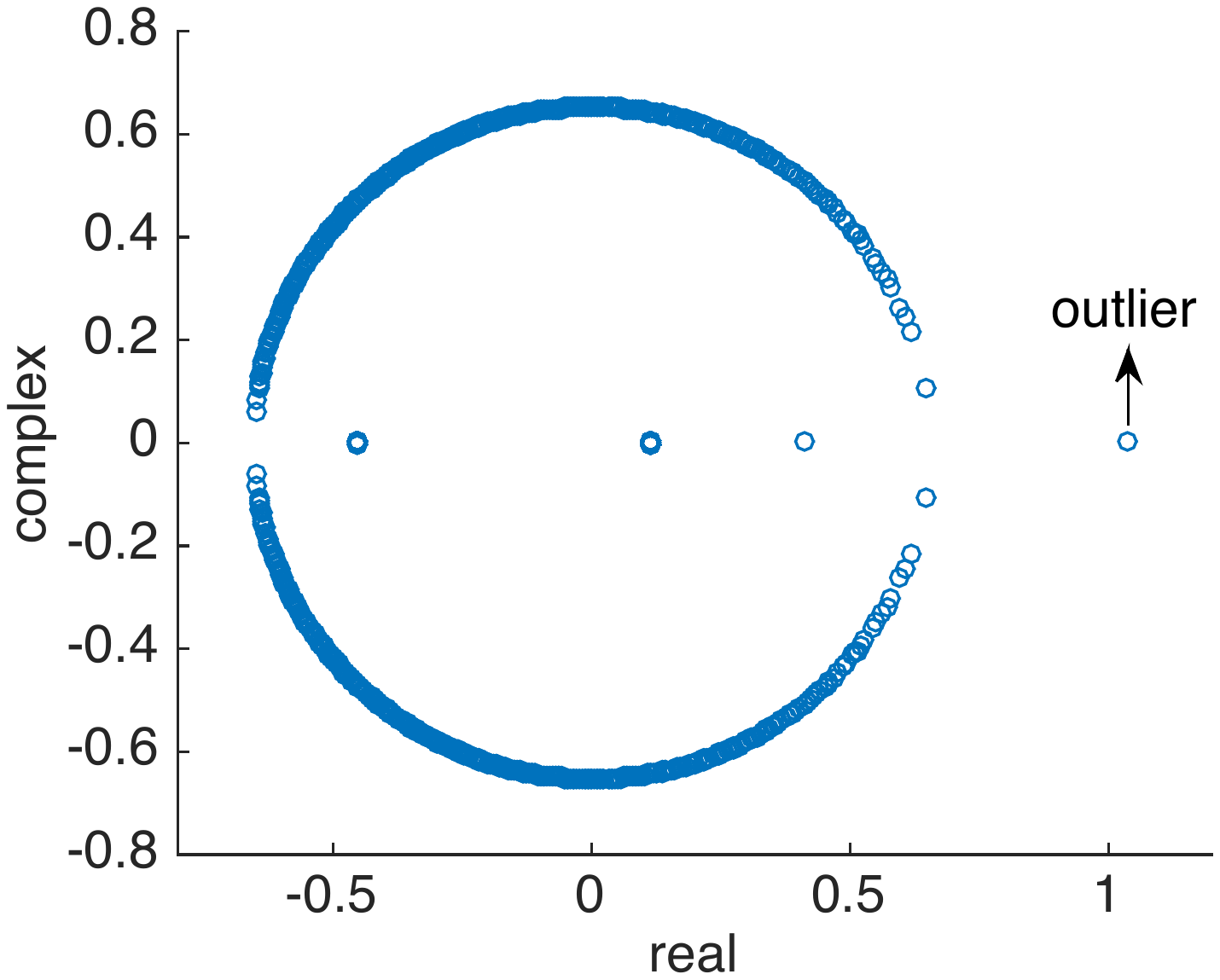}}
\caption{Scatter plot of the eigenvalues of $\bm{D}_2(\mu_1)$ in the complex plane. $\bm{x}_{\star}$ are sampled from an i.i.d. Gaussian distribution. $n=700$, $\delta=5$. \textbf{Left:} $\T_{\opt}$, \textbf{Middle:} $\T_{\mr{MM}}$, \textbf{Right:} $\T_{\mr{subset}}$ with $c_2=1.5$. }\label{Fig:scatter}
\end{figure*}

The above argument is not rigorous. However, our numerical results suggest that the conclusions are correct. An example is shown in Fig.~\ref{Fig:histograms}. Here, the figure on the right  plots the eigenvalues of a random realization of $\bm{E}(\hat{\mu})$ with $n=1500$. We can see that an outlying eigenvalue pops out of the bulk part of the spectrum. Also, the outlying eigenvalue is close to one. Fig. \ref{Fig:scatter} further plots the eigenvalues of $\bm{E}(\hat{\mu})$ for three other choices of $\T$. We see that all of the results seem to support our conjecture: there is an outlying eigenvalue (at one on the real line), although the shape of the bulk parts depends on the specific choice of $\T$.

\subsection{{On the domain of $\psi_1$, $\psi_2$, $\psi_3$, and $\Lambda$}}

In the previous discussions, we assumed that $\psi_1(\mu)$, $\psi_2(\mu)$, $\psi_3(\mu)$ and $\Lambda(\mu)$ are all defined on $(0,1]$. This assumption was implicitly used to derive the \Alg algorithm. Specifically, the Gaussian pdf for obtaining \eqref{Eqn:SEP_post_2} in the derivation of \Alg is not well defined if $\mu\notin(0,1]$. Nevertheless, the final \Alg algorithm in \eqref{Eqn:Alg_orth_final} is well defined, as long as $G(\mu)$ is well defined. We have assumed $\T(y)\le T_{\max}=1$. Let us further assume that $\T$ is bounded from below:
\[
T_{\min}\le\T(y)\le1,
\]
where $T_{\min}\in\mathbb{R}$. Under this assumption, $G(y,\mu)=\frac{1}{\mu^{-1}-\T(y)}$
is well defined as long as $\frac{1}{\mu}\notin (T_{\min},1)$. On the other hand, we only focused on the domain $\mu\in(0,1]$ (or $1/\mu\in[1,\infty)$) in our previous discussions.
In particular, we conjectured that $P_\T^2(\bm{x}_\star,\hat{\bm{x}})>0$ if and only if there exists an informative solution (see \eqref{Eqn:SE_fixed}) for $1/\mu\in[1,\infty)$. A natural question is what if the SE equations in \eqref{Eqn:SE_fixed} do not have informative solutions for  $1/\mu\in[1,\infty)$, but do have such a solution for $1/\mu\in(-\infty,T_{\min})$? To be specific, suppose that $1/\hat{\mu}\in(-\infty,T_{\min})$ satisfies the following conditions:
\[
\frac{\frac{\delta}{\delta-1}-\psi_2(\hat{\mu})}{\psi_3^2({\mu})-\left(\frac{\delta}{\delta-1}\right)^2}>0\quad\text{and}\quad \psi_1(\hat{\mu})=\frac{\delta}{\delta-1}.
\]
Then, one might ask what happens if we consider a \Alg algorithm by setting $\mu$ to such a $\hat{\mu}$?
It turns out that, based on arguments similar to those presented in Section \ref{Sec:heuristics}, we can obtain
\BE\label{Eqn:D_min}
\lambda_{n}(\bm{D})=\Lambda(\hat{\mu}),\quad 1/\hat{\mu}\in(-\infty,T_{\min}).
\EE
Namely, our method can provide a conjecture about the minimum eigenvalue of $\bm{D}$. To see this, we note that using exactly the same arguments as those in Section \ref{Sec:heuristics}, we claim (heuristically) that
\BE\label{Eqn:E1_largest}
\lim_{m\to\infty}\lambda_1(\bm{E}(\hat{\mu}))=1.
\EE
Lemma \ref{Lem:minimum_eigen} below further establishes a connection between the extremal eigenvalues of $\bm{D}$ and $\bm{E}(\hat{\mu})$. Its proof is postponed to Appendix \ref{App:eigen_min}.
\begin{lemma}\label{Lem:minimum_eigen}
Consider the matrix $\bm{E}(\hat{\mu})$ defined in \eqref{Eqn:Alg_orth_final_a}, where $1/\hat{\mu}\in(-\infty,T_{\min})$. Let $\lambda_1\left(\bm{E}(\hat{\mu})\right)$ be the eigenvalue of $\bm{E}(\hat{\mu})$ that has the largest magnitude and $\hat{\bm{z}}$ is the corresponding eigenvector. If $\lambda_1\left(\bm{E}(\hat{\mu})\right)=1$ and $\bm{A}^{\UH}\hat{\bm{z}}\neq\textbf{0}$, then
\BE\label{Eqn:lambda_D_heuristic}
\lambda_n(\bm{A}^{\UH}\bm{T}\bm{A})=\frac{1}{\hat{\mu}}-\frac{\delta-1}{\delta}\frac{1}{\G},
\EE
where $\lambda_n(\bm{D})$ denotes the minimum eigenvalue of $\bm{D}$.
\end{lemma}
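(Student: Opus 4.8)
The plan is to deduce Lemma \ref{Lem:minimum_eigen} from the already-available Lemma \ref{Lem:leading_eigen} by exploiting a sign-flip symmetry of the driving matrix $\bm{E}(\mu)$, rather than redoing the spectral argument from scratch. Introduce the reflected processing matrix $\tilde{\bm{T}}\Mydef-\bm{T}$ and the reflected parameter $\tilde\mu$ defined through $1/\tilde\mu\Mydef-1/\hat\mu$. First I would record the elementary identities $\tilde{\bm{G}}\Mydef(\tilde\mu^{-1}\bm{I}-\tilde{\bm{T}})^{-1}=(-\hat\mu^{-1}\bm{I}+\bm{T})^{-1}=-\bm{G}$ and hence $\langle\tilde{\bm{G}}\rangle=-\G$, so that the normalized factor is unchanged, $\tilde{\bm{G}}/\langle\tilde{\bm{G}}\rangle=\bm{G}/\G$. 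Consequently the matrix that drives the algorithm is invariant under the flip, $\tilde{\bm{E}}(\tilde\mu)=(\delta\bm{A}\bm{A}^{\UH}-\bm{I})(\tilde{\bm{G}}/\langle\tilde{\bm{G}}\rangle-\bm{I})=\bm{E}(\hat\mu)$. In particular the two hypotheses transfer verbatim for the same eigenvector $\hat{\bm{z}}$: $\lambda_1(\tilde{\bm{E}}(\tilde\mu))=\lambda_1(\bm{E}(\hat\mu))=1$ and $\bm{A}^{\UH}\hat{\bm{z}}\neq\bm{0}$.

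Next I would check that $\tilde\mu$ falls in the regime covered by Lemma \ref{Lem:leading_eigen} for the reflected function. Since $1/\hat\mu<T_{\min}$, we get $1/\tilde\mu=-1/\hat\mu>-T_{\min}=\tilde{T}_{\max}$, so each entry $\tilde\mu^{-1}-\tilde{\T}(y_i)=\tilde\mu^{-1}+\T(y_i)>0$ and therefore $\tilde{\bm{G}}=-\bm{G}\succ\bm{0}$ is positive definite — exactly the situation in which Lemma \ref{Lem:leading_eigen} identifies the stationary value as the \emph{largest} eigenvalue. Applying that lemma to $(\tilde{\bm{T}},\tilde\mu)$ yields $\lambda_1(\bm{A}^{\UH}\tilde{\bm{T}}\bm{A})=1/\tilde\mu-\tfrac{\delta-1}{\delta}\cdot1/\langle\tilde{\bm{G}}\rangle$. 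Substituting $1/\tilde\mu=-1/\hat\mu$ and $1/\langle\tilde{\bm{G}}\rangle=-1/\G$ on the right, and using $\bm{A}^{\UH}\tilde{\bm{T}}\bm{A}=-\bm{D}$ together with $\lambda_1(-\bm{D})=-\lambda_n(\bm{D})$ (valid because $\bm{D}$ is Hermitian) on the left, the identity collapses after cancelling an overall minus sign to $\lambda_n(\bm{D})=1/\hat\mu-\tfrac{\delta-1}{\delta}\cdot1/\G$, which is the claim.

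The one point that needs care — and what I expect to be the main obstacle — is the regime matching when invoking Lemma \ref{Lem:leading_eigen}. That lemma is stated for $\hat\mu\in(0,1]$, a parametrization presupposing the normalization $T_{\max}=1$; after the flip the reflected function has $\tilde{T}_{\max}=-T_{\min}$, which is in general different from $1$ (and may even be nonpositive, e.g.\ $T_{\min}=0$ for the subset function), so the hypothesis $\tilde\mu\in(0,1]$ cannot be imposed verbatim. The clean way around this is to note that the proof of Lemma \ref{Lem:leading_eigen} uses the condition $\hat\mu\in(0,1]$ only through the positive definiteness $\bm{G}\succ\bm{0}$ (equivalently $1/\mu>T_{\max}$), which is precisely what $\tilde{\bm{G}}\succ\bm{0}$ supplies; thus the \emph{argument}, not merely the boxed statement, applies to $(\tilde{\bm{T}},\tilde\mu)$. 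I would therefore first restate Lemma \ref{Lem:leading_eigen} in the scale-free form ``if $1/\mu>T_{\max}$ then $\lambda_1(\bm{D})=1/\mu-\tfrac{\delta-1}{\delta}/\G$'' and verify that its appendix proof never uses $T_{\max}=1$ beyond fixing the admissible interval for $\mu$.

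As a fallback that avoids any reliance on the reflected statement, I would instead mirror the appendix proof of Lemma \ref{Lem:leading_eigen} line by line with the definiteness reversed. A Weinstein--Aronszajn reduction of $\det(\bm{I}-\bm{E}(\hat\mu))=0$ shows, using that $b_i/(1+b_i)=1-\G(\hat\mu^{-1}-\T(y_i))$ is \emph{affine} in $\T(y_i)$ at $\lambda=1$, that $1$ is an eigenvalue of $\bm{E}(\hat\mu)$ precisely when $\bm{D}$ carries the eigenvalue $1/\hat\mu-\tfrac{\delta-1}{\delta}/\G$; this recovers the formula. The remaining, and genuinely substantive, content is extremality, which is inherited from Lemma \ref{Lem:leading_eigen}: every ordering inequality that placed this value at the top of the spectrum when $\bm{G}\succ\bm{0}$ reverses sign when $\bm{G}\prec\bm{0}$, so the same stationary value now sits at the bottom and equals $\lambda_n(\bm{D})$. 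I would keep the sign-flip reduction as the primary route, since it offloads this extremality step onto the established lemma, and use the direct mirror only to confirm that the scale-free restatement of Lemma \ref{Lem:leading_eigen} is legitimate.
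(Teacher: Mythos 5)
Your proposal is correct, and your primary route is genuinely different from the paper's. The paper proves this lemma exactly the way your fallback does: it reruns the proof of Lemma \ref{Lem:leading_eigen} with the sign of $\Gs$ reversed --- the key inequality $\lambda_1(\bm{B}(1))\le 1$ survives because $\bm{G}/\Gs$ stays entrywise positive even though $\bm{G}\prec\bm{0}$, but dividing by $\delta\Gs<0$ now flips the direction and yields the lower bound $\lambda_n(\bm{A}^{\UH}\bm{T}\bm{A})\ge \frac{1}{\hat{\mu}}-\frac{\delta-1}{\delta}\frac{1}{\Gs}$, after which Lemma \ref{Lem:stationary} supplies $\Lambda(\hat{\mu})$ as an actual eigenvalue of $\bm{D}$, forcing equality at the bottom of the spectrum. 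Your primary route instead exploits the reflection $(\T,1/\hat{\mu})\mapsto(-\T,-1/\hat{\mu})$, under which $\tilde{\bm{G}}=-\bm{G}$, the normalization cancels the sign so that $\bm{E}$ is literally unchanged, $\bm{D}$ maps to $-\bm{D}$, and the parameter lands in the regime $1/\tilde{\mu}>\tilde{T}_{\max}$ covered by the max-eigenvalue lemma; the claim then drops out of $\lambda_1(-\bm{D})=-\lambda_n(\bm{D})$ for Hermitian $\bm{D}$. This is a true reduction rather than a re-derivation, and it makes the max/min duality explicit. Its only cost is the point you correctly isolate: Lemma \ref{Lem:leading_eigen} must be read in scale-free form (hypothesis $1/\mu>T_{\max}$, i.e.\ $\bm{G}\succ\bm{0}$ and $\G>0$, rather than $\mu\in(0,1]$ under the normalization $T_{\max}=1$), and that reading is legitimate --- inspecting the appendix proof, the interval $(0,1]$ enters only through the positivity of the entries $G_{ii}$ and of $\Gs$ (used for the invertibility of $(t-1)\bm{I}+\bm{G}/\Gs$ and in the final division by $\delta\Gs$), while Lemma \ref{Lem:stationary} is purely algebraic and needs only $\bm{G}$ well defined and $\G\neq 0$. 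So both of your routes close: the reduction buys brevity and reuse of the established lemma, while the direct mirror (the paper's choice, and your fallback) avoids modifying the statement of Lemma \ref{Lem:leading_eigen} at the price of duplicating its argument.
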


A numerical example is shown in Fig.~\ref{Fig:histograms_min}. This figure is similar to Fig.~\ref{Fig:histograms}, but with the processing function replaced by $\T(y)=3-\T_{\mr{MM}}(y)$. Under this setting, $\psi_1(\hat{\mu})=\frac{\delta}{\delta-1}$ has a solution in the domain $1/\mu\in(-\infty,T_{\min}]$. Further, $\psi_1(\hat{\mu})>\psi_2(\hat{\mu})$. By the above heuristic arguments, we should have $\bm{E}(\hat{\mu})=1$ and $\lambda_n(\bm{D})=\Lambda(\hat{\mu})\approx0.71$. These conjectures about the extremal eigenvalues of $\bm{D}$ and $\bm{E}(\hat{\mu})$ seem to be close to the empirical eigenvalues given in Fig.~\ref{Fig:histograms_min}.

\begin{figure*}[hbpt]
\centering
\includegraphics[width=.95\textwidth]{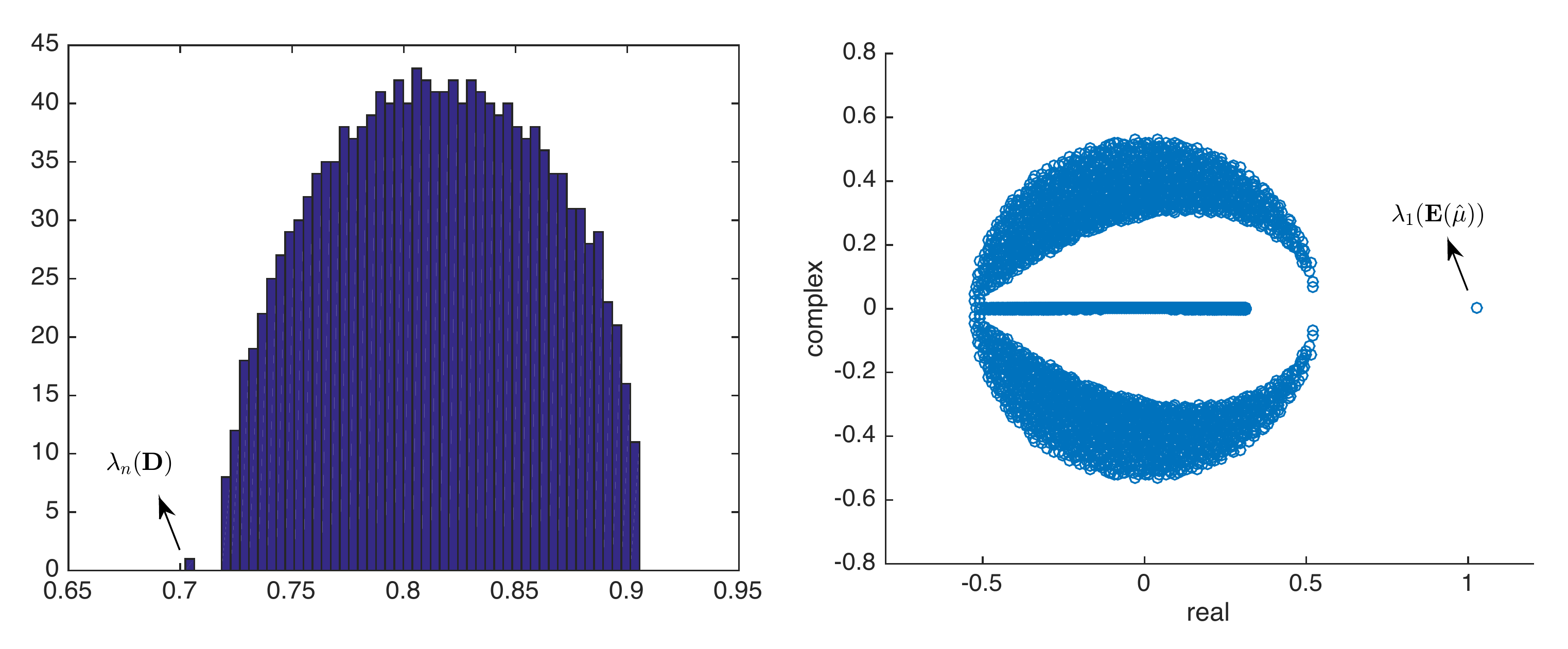}
\caption{An example where the smallest eigenvalue of $\bm{D}$ separates from the bulk spectrum. $\T(y)=3-\T_{\mr{MM}}(y)$. $\T(y)$ is further scaled by a positive constant to make it satisfy $\sup_{y\ge0}\T(y)=1$. \textbf{Left:} Histogram of the eigenvalues of $\bm{D}=\bm{A}^{\UH}\bm{TA}$. \textbf{Right:} Scatter plot of the eigenvalues of $\bm{E}(\hat{\mu})$, where $\hat{\mu}\approx 2.718$ is the solution to $\psi_1(\mu)=\frac{\delta}{\delta-1}$ in the domain $1/\mu\in(-\infty,T_{\min}]$. $n=1500$ and $\delta=5$.}\label{Fig:histograms_min}
\end{figure*}

\section{Numerical results}\label{Sec:numerical}

\subsection{Accuracy of our predictions}\label{Sec:nume_a}
We first provide some numerical results to verify the accuracy of the predictions in Claim \ref{Claim:correlation}. Following \cite{Mondelli2017}, our simulations are conducted using the image ($820  \times 1280$) shown in Fig. \ref{Fig:test_imag}. For ease of implementation, we reduced the size of the original image by a factor of $20$ (for each dimension). The length of the final signal vector is $2624$. We will compare the performances of the spectral method under the following models of $\bm{A}$:
\begin{itemize}
\item \textit{Random Haar model:} $\bm{A}$ is a subsampled Haar matrix;
\item \textit{Coded diffraction patterns (CDP):}
\[
\bm{A}=
\begin{bmatrix}
\bm{F}\bm{P}_1\\
\bm{F}\bm{P}_2\\
\ldots\\
\bm{F}\bm{P}_L
\end{bmatrix},
\]
where $\bm{F}$ is a two-dimensional DFT matrix and $\bm{P}_l=\mr{diag}\{e^{j\theta_{l,1}},\ldots,e^{j\theta_{l,n}}\}$ consists of i.i.d. uniformly random phases;
\item \textit{Partial DFT model:}
\[
\bm{A}=\bm{F}\bm{S}\bm{P},
\]
where, with slight abuse of notations, $\bm{F}\in\mathbb{C}^{m\times m}$ is now a unitary DFT matrix, and $\bm{S}\in\mathbb{R}^{m\times n}$ is a random selection matrix (which consists of randomly selected columns of the identity matrix), and finally $\bm{P}$ is a diagonal matrix comprised of i.i.d. random phases.
\end{itemize}

Fig.~\ref{Fig:emprical} plots the cosine similarity of the spectral method with various choices of $\T$. Here, the leading eigenvector is computed using a power method. In our simulations, we approximate ${\T}_{\star}$ by the function $1-\left(\delta y^2 + 0.01\right)^{-1}$. For $\T_{\mr{MM}}$ and $\T_\star$, the data matrix $\bm{A}^{\UH}\bm{TA}$ can have negative eigenvalues. To compute the largest eigenvector, we run the power method on the modified data matrix $\bm{A}^{\UH}\bm{TA}+\epsilon\bm{I}$ for a large enough $\epsilon$. In our simulations, we set $\epsilon$ to $10$ for $\T_{\mr{MM}}$ and $\epsilon$ to $50$ for $\T_\star$. The maximum number of power iterations is set to 10000. Finally, following \cite{Mondelli2017}, we measure the images from the three RGB color-bands using independent realizations of $\bm{A}$. For each of the three measurement vectors, we compute the spectral estimator $\hat{\bm{x}}$ and measure the cosine similarity $P_\T(\bm{x}_\star,\hat{\bm{x}})$ and then average the cosine similarity over the three color-bands. Finally, we further average the cosine similarity over 5 independent runs. Here, the lines show the simulation results and markers represent our predictions given in Claim \ref{Claim:correlation}.

From Fig.~\ref{Fig:emprical}, we see that the empirical cosine similarity between the spectral estimate and the signal vector match very well with our predictions, for both of the random Haar model and the two DFT matrix based models (i.e., the CDP model and the partial DFT model). Furthermore, the function $\T_\opt$ yields the best performance among the various choices of $\T$, which is consistent with our theory.

\begin{figure}[htbp]
\begin{center}
\includegraphics[width=.48\textwidth]{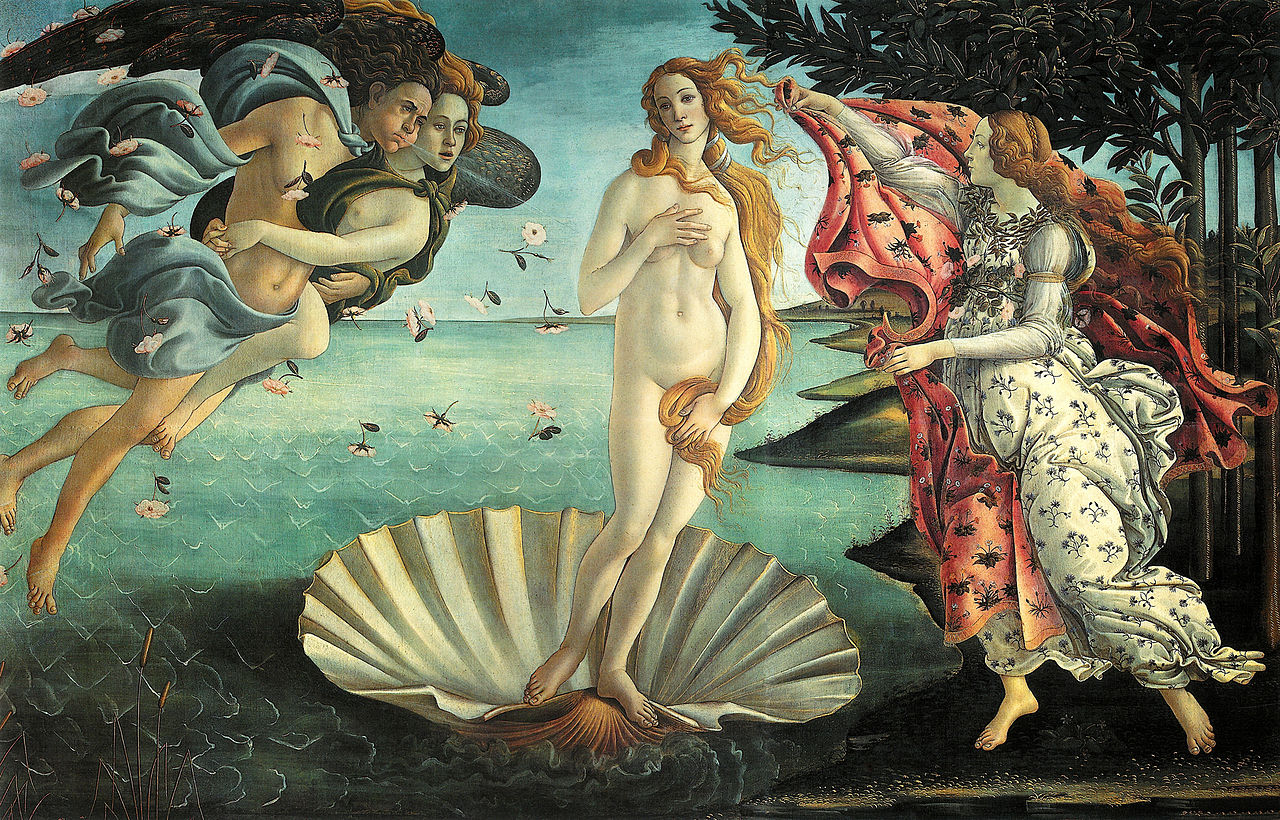}
\caption{The image from \cite{Mondelli2017}.}\label{Fig:test_imag}
\end{center}
\end{figure}

\begin{figure}[htbp]
\begin{center}
\includegraphics[width=.5\textwidth]{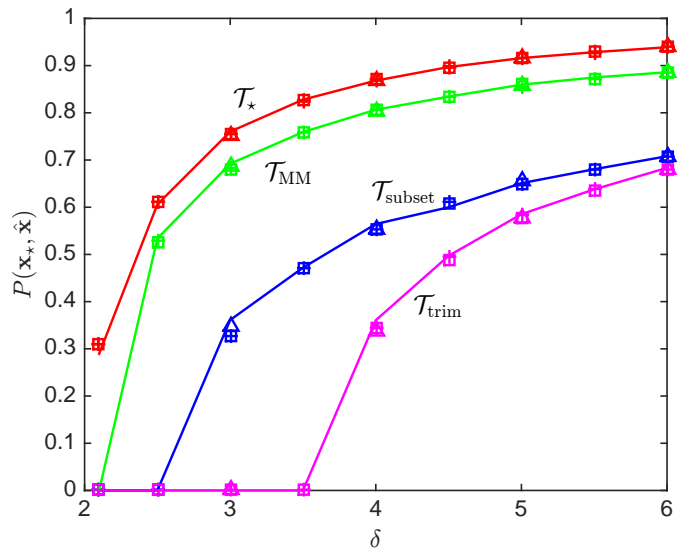}
\caption{Comparison of various spectral initialization methods. \textbf{Solid lines:} theoretical predictions. \textbf{Marker $+$: } simulation results for the partial DFT model. \textbf{Marker $\triangle$:} simulation results for the CDP model. \textbf{Marker $\square$:} simulation results for the Haar random matrix model. For simulations with the partial DFT model and the Haar model, the grid of $\delta$ is $[2.1, 2.5:0.5:6]$; for simulations with the CDP models, the grid of $\delta$ is $[3:1:6]$. The thresholds for $\T_{\mr{trim}}$ and $\T_{\mr{subset}}$ are set to $c_1=2$ and $c_2=1.5$ (under the normalization $\|\bm{x}_\star\|=\sqrt{n}$), respectively.}\label{Fig:emprical}
\end{center}
\end{figure}

\subsection{State evolution of \Alg}
Finally, we present some simulation results to show the accuracy of the state evolution characterization of the \Alg algorithm. In our simulations, we use the partial DFT matrix model introduced in Section \ref{Sec:nume_a}. The signal vector $\bm{x}_\star$ is randomly generated from an i.i.d. Gaussian distribution. The results are very similar when $\bm{x}_\star$ is replaced by the image shown in Fig.~\ref{Fig:test_imag}. We consider an \Alg algorithm with $\mu=\hat{\mu}$ where $\hat{\mu}$ is the solution to $\psi_1(\mu)=\frac{\delta}{\delta-1}$.

Fig.~\ref{Fig:SE} compares the empirical and theoretical predictions of two quantities: (i) the cosine similarity $P_\T(\bm{x}_\star,\bm{x}^t)$, where $\bm{x}^t$ is the estimate produced by \Alg (see definition in \eqref{Eqn:x_hat_def}) and $\bm{x}_\star$ is the true signal vector, and (ii) the cosine similarity between two consecutive ``noise terms'' (see \eqref{Eqn:z_AWGN_first}). The asymptotic prediction of the two quantities are given by \eqref{Eqn:rho_SE} and \eqref{Eqn:W_recursive} respectively. As can be see from Fig.~\ref{Fig:SE}, our theoretical predictions accurately characterized both quantities. Further, the correlation $P(\bm{w}^t,\bm{w}^{t+1})\to1$ as $t\to\infty$, as analyzed in Appendix \ref{App:Cov_convergence}. This implies that the estimate $\bm{x}^t$ converges; see discussions in Appendix \ref{App:w_corr}.

\begin{figure}[htbp]
\begin{center}
\includegraphics[width=.5\textwidth]{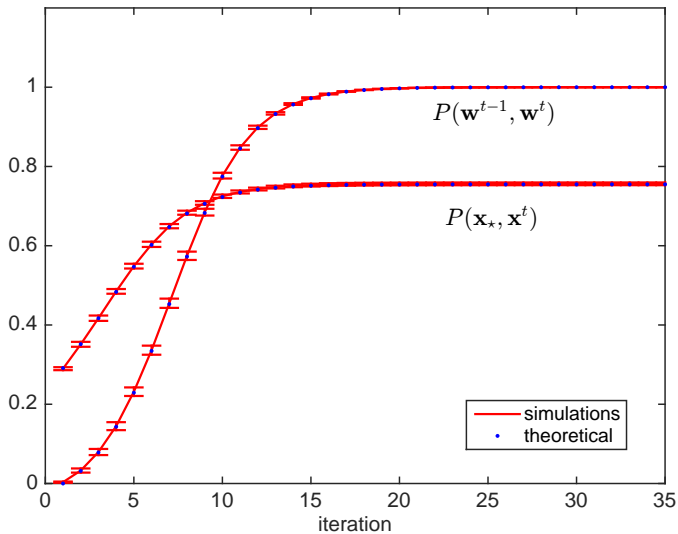}
\caption{Comparison between empirical results and theoretical predictions. $n=30000$. $\delta=3$. $\T=\T_\opt$. $\mu=\hat{\mu}$ where $\hat{\mu}$ is the unique solution to $\psi_1(\mu)=\frac{\delta}{\delta-1}$. $\alpha_0=0.2$ and $\sigma^2_0=1$. 10 independent realizations.}\label{Fig:SE}
\end{center}
\end{figure}

\begin{figure}[htbp]
\begin{center}
\includegraphics[width=.5\textwidth]{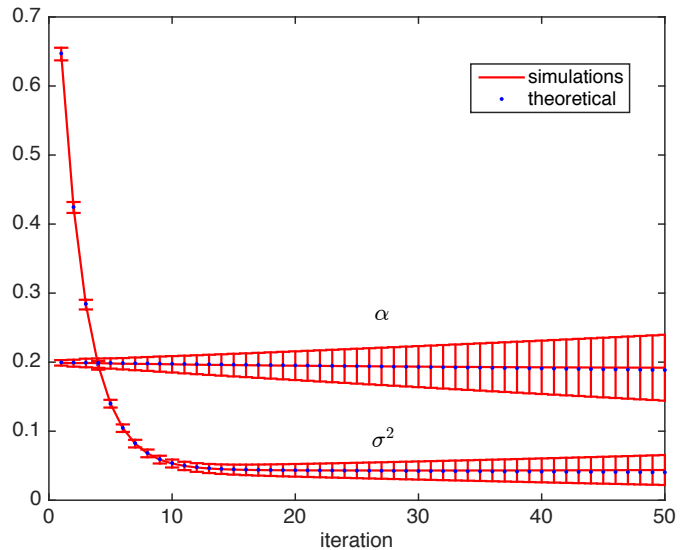}
\caption{Comparison between empirical results and theoretical predictions for $\{\alpha_t\}$ and $\{\sigma^2_t\}$. The settings are the same as those of \eqref{Fig:SE}. Notice that in theory we should have $\alpha_t=\alpha_0$ (the dotted line). However, the theoretical predictions of $\alpha_t$ shown in the figure deviates slightly from $0.2$. This is due to the numerical error incurred in computing $\hat{\mu}$ (namely, $\psi_1(\hat{\mu})$ is not exactly equal to $\delta/(\delta-1)$). This small error accumulates over iterations.}\label{Fig:SE_alpha}
\end{center}
\end{figure}

Finally, Fig.~\ref{Fig:SE_alpha} depicts the empirical and theoretical versions of $\alpha_t$ and $\sigma^2_t$ (see \eqref{Eqn:SE_final}). The SE predictions for these quantities are less accurate and still exhibit some fluctuations among different realizations. This can be explained as follows. As discussed in Section \ref{Sec:heuristics}, we conjecture that the spectral radius of $\bm{E}(\hat{\mu})$ converges to one as $m,n\to\infty$. However, for large but finite-sized instances, the spectral radius of $\bm{E}(\hat{\mu})$ will be slightly larger or smaller than one. Since \Alg can be viewed as a power method applied to the matrix $\bm{E}(\hat{\mu})$, as long as the spectral radius is not exactly one, the norm of $\bm{z}^t$ will keep shrinking or growing as $t$ increases. As a consequence, the mismatch between the predicted and simulated $\alpha_t,\sigma_t^2$ will accumulate as $t\to\infty$. Nevertheless, we believe that the characterization in Claim \ref{Lem:SE} is still correct. Namely, for any finite $t$, such mismatch vanishes as $m,n\to\infty$. As a comparison, the cosine similarity $P_\T(\bm{x}_\star,\bm{x}^t)$ in some sense normalizes such effect and matches excellently with the SE predictions.

\section{Conclusions}
In this paper, we studied a spectral method for phase retrieval under a practically-relevant partial orthogonal model. By analyzing the fixed points of an expectation propagation (EP) style algorithm, we are able to derive a formula to characterize the angle between the spectral estimator and the true signal vector. We conjecture that our prediction is exact in the asymptotic regime where $n,m=\delta n\to\infty$ and provide simulations to support our claim. Based on our asymptotic analysis, we found that the optimal $\T$ is the same as that for an i.i.d. Gaussian measurement model.

\appendices

\section{Proof of Lemma \ref{Lem:PT_second}}\label{App:proof_PT_sec}

\subsection{Proof of part (i)}\label{App:proof_PT_sec_a}
We first show that there exists at least one solution to $\psi_1(\mu)=\frac{\delta}{\delta-1}$ in $(0,\bar{\mu})$, if $\psi_1(\hat{\mu})>\frac{\delta}{\delta-1}$. Then, we prove that the solution is unique.

The existence of the solution follows from the continuity of $\psi_1$, and the following fact
\[
\psi_1(0)=\frac{\mathbb{E}[\delta|Z_\star|^2G(Y,0)]}{\mathbb{E}[G(Y,0)]}=1<\frac{\delta}{\delta-1},\quad\forall \delta>1,
\]
together with the hypothesis
\[
\psi_1(\bar{\mu})>\frac{\delta}{\delta-1}.
\]
We next prove the uniqueness of the solution. To this end, we introduce the following function
\BE\label{Eqn:f_def}
F(\mu)\Mydef \frac{1}{\mu}-\frac{1}{\mathbb{E}\left[\delta|Z_\star|^2G(\mu)\right]}.
\EE
Further, recall that  (cf. \eqref{Eqn:Lambda_def})
\BE\label{Eqn:Lambda_def2}
\Lambda(\mu) = \frac{1}{\mu}-\frac{\delta-1}{\delta}\cdot\frac{1}{\mathbb{E}\left[G(Y,\mu)\right]},
\EE
where $Y=|Z_\star|$ and $Z_\star\sim\mathcal{CN}(0,1/\delta)$.
From the definition of $\psi_1$ in \eqref{Eqn:psi_definitions}, it is straightforward to verify that
\[
\psi_1(\mu)=\frac{\delta}{\delta-1}\Longleftrightarrow F(\mu) =\Lambda(\mu),\quad\forall \mu\in(0,1],
\]
Hence, to prove that $\psi_1(\mu)=\frac{\delta}{\delta-1}$ cannot have more than one solution, it suffices to show that  $F(\mu) =\Lambda(\mu)$ cannot have more than one solution. To show this, we will prove that
\begin{itemize}
\item $\Lambda(\mu)$ is strictly decreasing on $(0,\bar{\mu})$;
\item$F(\mu)$ is strictly increasing on $(0,1)$.
\end{itemize}

We first prove the monotonicity of $\Lambda(\mu)$. From \eqref{Eqn:Lambda_def2}, we can calculate the derivative of $\Lambda(\mu)$:
\[
\Lambda'(\mu)=\frac{1}{\mu^2}\frac{\delta-1}{\delta}\left(\psi_2(\mu)-\frac{\delta}{\delta-1}\right).
\]
Recall that (cf. \eqref{Eqn:mu_star_def}) $\bar{\mu}$ is defined as
\BE\label{Eqn:mu_star_def2}
\bar{\mu}\Mydef\underset{\mu\in(0,1]}{\mr{argmin}} \quad \Lambda(\mu).
\EE
Further, Lemma \ref{Lem:T_increase} shows that $\psi_2(\mu)$ is strictly increasing on $(0,1)$. Two cases can happen: if $\psi_2(1)>\frac{\delta}{\delta-1}$, then $\bar{\mu}$ is the unique solution to $\psi_2(\mu)=\frac{\delta}{\delta-1}$; otherwise, $\bar{\mu}=1$. For both cases, it is easy to see that $\Lambda(\mu)$ is strictly decreasing on $(0,\bar{\mu})$.

It remains to prove the monotonicity of $F(\mu)$. To show this, we calculate its derivative:
\[
F'(\mu)=\frac{1}{\mu^2}\cdot\left( \frac{\mathbb{E}[\delta|Z_\star|^2G^2]-\left( \mathbb{E}[\delta|Z_\star|^2G] \right)^2}{\left( \mathbb{E}[\delta|Z_\star|^2G] \right)^2} \right).
\]
Hence, to show that $F(\mu)$ is increasing, we only need to show
\[
\mathbb{E}[\delta|Z_\star|^2G^2]> \left(\mathbb{E}[\delta|Z_\star|^2G] \right)^2.
\]
The above inequality can be proved using the association inequality in Lemma \ref{Lem:association}. To see this, we rewrite it as
\[
\mathbb{E}[\delta|Z_\star|^2G^2]\cdot\mathbb{E}[\delta|Z_\star|^2]> \mathbb{E}[\delta|Z_\star|^2G]\cdot\mathbb{E}[\delta|Z_\star|^2G].
\]
It is easy to see that the above inequality follows from Lemma \ref{Lem:association} with $B\Mydef \delta|Z_\star|^2$, $A\Mydef  \T$, $f(A)\Mydef \frac{1}{\mu^{-1}- A}$, and $g(A)\Mydef \frac{1}{\mu^{-1}- A}$. Clearly, $f(A)$ and $g(A)$ are increasing functions for $\mu\in(0,1]$, and the conditions for Lemma \ref{Lem:association} are satisfied.
\subsection{Proof of part (ii)}\label{App:proof_PT_1}
Define
\BE\label{Eqn:rho_bar}
\begin{split}
\theta^2_\T(\hat{\mu},\delta) & \Mydef \frac{\delta-1}{\delta}\cdot\frac{\rho_\T^2(\hat{\mu},\delta)}{1-\rho_\T^2(\hat{\mu},\delta)}\\
&=\frac{\frac{\delta}{\delta-1}-\psi_2(\hat{\mu})}{\psi_3^2(\hat{\mu})-\left(\frac{\delta}{\delta-1}\right)^2}
\end{split}
\EE
where the second step follows from the definition of $\rho_\T^2(\mu,\delta)$ (cf. \eqref{Eqn:SNR_asym}).
Hence, when $\delta>1$, we have
\[
0<\rho_\T^2(\mu,\delta)<1\Longleftrightarrow \theta_\T^2(\mu,\delta) >0.
\]
\textit{Hence, our problem becomes proving that $\psi_1(\bar{\mu})>\frac{\delta}{\delta-1}$ if and only if there exists at least one $\hat{\mu}\in(0,1]$ such that}
\[
\theta_\T^2(\hat{\mu},\delta)>0\quad \text{and}\quad \psi_1(\hat{\mu})=\frac{\delta}{\delta-1}.
\]

First, suppose $\psi_1(\bar{\mu})>\frac{\delta}{\delta-1}$ holds. We have proved in part (i) of this lemma that there is a unique solution to $\psi_1(\hat{\mu})=\frac{\delta}{\delta-1}$, where $\hat{\mu}<\bar{\mu}$. Further, from our discussions in part (i), the condition $\hat{\mu}<\bar{\mu}$ leads to $\psi_2(\hat{\mu})<\frac{\delta}{\delta-1}$. Also, from Lemma \ref{Lem:auxiliary1} (in Appendix \ref{App:aux}), we have
\[
\psi_3(\hat{\mu})>\psi_1(\hat{\mu})=\frac{\delta}{\delta-1}.
\]
Hence, both the numerator and denominator of $\rho_\T^2(\hat{\mu},\delta)$ are positive, and so $\rho_\T^2(\hat{\mu},\delta)>0$. This proved one direction of our claim.

To prove the other direction of the claim, suppose there exists a $\hat{\mu}$ such that $\theta_\T^2(\hat{\mu},\delta)>0$ and $ \psi_1(\hat{\mu})=\frac{\delta}{\delta-1}$. Again, by lemma \ref{Lem:auxiliary1}, the denominator of $\rho_\T(\hat{\mu},\delta)$ is positive. Hence (under the condition of $\psi_1(\hat{\mu})=\frac{\delta}{\delta-1}$), $\rho_\T^2(\hat{\mu},\delta)>0\Longleftrightarrow \psi_2(\hat{\mu})<\frac{\delta}{\delta-1}$. By the definition of $\bar{\mu}$ and connection between $\Lambda$ and $\psi_2$, we further have $\psi_2(\hat{\mu})<\frac{\delta}{\delta-1}\Longleftrightarrow\hat{\mu}<\bar{\mu}$. This means that there exists a $\hat{\mu}<\bar{\mu}$ such that $\psi_1(\hat{\mu})=\frac{\delta}{\delta-1}$, or equivalently $F(\hat{\mu})=\Lambda(\hat{\mu})$. Finally, by the monotonicity of $F(\cdot)$ and the strict monotonicity of $\Lambda(\cdot)$, and the fact that $\hat{\mu}<\bar{\mu}$, we must have $F(\bar{\mu})>\Lambda(\bar{\mu})$, or equivalently $\psi_1(\bar{\mu})>\frac{\delta}{\delta-1}$; see details in Section \ref{App:proof_PT_sec_a}.

Finally, our proof above implied that the condition
\[
\psi_1(\hat{\mu})=\frac{\delta}{\delta-1}\quad \text{and}\quad 0<\rho_\T^2(\hat{\mu},\delta)<1,
\]
is equivalent to
\[
\psi_1(\hat{\mu})=\frac{\delta}{\delta-1}\quad \text{and}\quad \psi_2(\hat{\mu})<\frac{\delta}{\delta-1}.
\]
\section{Optimality of $\T_\opt$}\label{App:optimality}
Denote the asymptotic cosine similarity achieved by $\T_{\star}$ as $\rho^2_\opt(\delta)$. We will discuss a few properties of $\rho^2_\opt(\delta)$ and then prove that no other $\T$ performs better than $\T_{\star}$. Our proof follows a similar strategy as that in \cite{Lu2018}.

We proceed in the following steps:
\begin{enumerate}
\item We first show that $\delta_{\mr{weak}}=2$, namely no $\T$ can work for $\delta<2$;
\item We further show that $\rho^2_{\star}(\delta)$ is strictly positive for any $\delta>2$;
\item Let $\mathcal{S}$ be a set of $\T$ for which the asymptotic cosine similarity is strictly positive for $\delta>2$. Clearly, we only need to consider functions in $\mathcal{S}$. We prove that the cosine similarity for any $\T\in\mathcal{S}$ cannot be larger than $\rho_{\star}^2(\delta)$. Restricting to $\mathcal{S}$ simplifies our discussions.
\end{enumerate}

\subsection{Weak threshold}\label{App:optimality_a}
\textit{We first prove that $\delta_\T$ is lower bounded by 2. }Namely, if $\delta<2$, then $\rho_\T^2(\delta)=0$ for any $\T$. According to Claim \ref{Claim:correlation}, if $\rho_\T^2(\delta)>0$, we must have $\psi_1(\bar{\mu})\ge\frac{\delta}{\delta-1}$. Further, Lemma \ref{Lem:PT_second} shows that there is a unique solution to the following equation (denoted as $\hat{\mu}$)
\[
\psi_1(\hat{\mu})=\frac{\delta}{\delta-1},\quad \hat{\mu}\in(0,\bar{\mu}].
\]
In Section \ref{App:proof_PT_sec_a} we have proved that
\[
\psi_2(\hat{\mu})<\frac{\delta}{\delta-1}\Longleftrightarrow \hat{\mu}\in(0,\bar{\mu}].
\]
Hence,
\[
\psi_1(\hat{\mu})>\psi_2(\hat{\mu}).
\]
From the definitions in \eqref{Eqn:psi_definitions} and noting $G(y,\mu)>0$ for any $\mu\in(0,1)$ and $y\ge0$, we can rewrite the condition $\psi_1(\hat{\mu})>\psi_2(\hat{\mu})$ as
\BE\label{Eqn:week_proof1}
\mathbb{E}\left[G^2_1\right]<\mathbb{E}\left[G_1\right]\cdot\mathbb{E}[\delta|Z_\star|^2G_1],
\EE
where we denoted $G_1\Mydef G(|Z_\star|,\hat{\mu})$. Further, applying the Cauchy-Schwarz inequality yields
\BE\label{Eqn:week_proof2}
\begin{split}
(\mathbb{E}[\delta|{Z}_{\star}|^2G_1])^2&\le \mathbb{E}[\delta^2|{Z}_{\star}|^4]\cdot\mathbb{E}[G^2_1]\\
&=2\cdot\mathbb{E}[G^2_1],
\end{split}
\EE
where the second step (i.e., $\mathbb{E}[\delta^2|{Z}_{\star}|^4]=2$) follows from the definition ${Z}_{\star}\sim\mathcal{CN}(0,1/\delta)$ and direct calculations of the fourth order moment of $|Z_\star|$. Combining \eqref{Eqn:week_proof1} and \eqref{Eqn:week_proof2} yields
\[
(\mathbb{E}[\delta|{Z}_{\star}|^2G_1])^2\le 2\mathbb{E}\left[G_1\right]\cdot\mathbb{E}[\delta|Z_\star|^2G_1],
\]
which further leads to
\BE\label{Eqn:week_proof3}
\frac{\mathbb{E}[\delta|{Z}_{\star}|^2G_1]}{\mathbb{E}\left[G_1\right]}\le2.
\EE
On the other hand, the condition $\psi_1(\hat{\mu})=\frac{\delta}{\delta-1}$ gives us
\BE\label{Eqn:week_proof4}
\frac{\mathbb{E}[\delta|{Z}_{\star}|^2G_1]}{\mathbb{E}\left[G_1\right]}=\frac{\delta}{\delta-1}.
\EE
Combining \eqref{Eqn:week_proof3} and \eqref{Eqn:week_proof4} leads to $\frac{\delta}{\delta-1}\le2$, and so $\delta\ge2$. This completes the proof.\vspace{5pt}

\textit{We now prove that $\delta_{\mr{}}=2$ can be achieved by $\T_{\opt}$.} When $\delta>2$, we have $\frac{\delta}{\delta-1}\in(1,2)$. Since $\T_\star$ is an increasing function, by Lemma \ref{Lem:T_increase}, both $\psi_1(\mu)$ and $\psi_2(\mu)$ are increasing functions on $\mu\in(0,1)$. Further, it is straightforward to show that $\psi_1(0)=\psi_2(0)=1$ and $\psi_1(1)=\psi_2(1)=2$. Further, Lemma \ref{Lem:psi1_psi2} (in Appendix \ref{App:aux}) shows that for $\T=\T_\star$
\[
\psi_2(\mu)<\psi_1(\mu),\quad\forall \mu\in(0,1).
\]
The above facts imply that when $\delta>2$ we have
\[
\psi_1(\bar{\mu})>\frac{\delta}{\delta-1},
\]
where $\bar{\mu}$ is the unique solution to
\[
\psi_2(\bar{\mu})=\frac{\delta}{\delta-1}.
\]
Then, using Lemma \ref{Lem:PT_second} we proved $\rho^2_\star(\delta)>0$ for $\delta>2$.
\subsection{Properties of $\rho^2_\opt(\delta)$}
Since $\delta_{\mr{weak}}=2$, we will focus on the regime $\delta>2$ in the rest of this appendix.

For notational brevity, we will put a subscript $_\star$ to a variable (e.g., $\hat{\mu}_\star$) to emphasize that it is achieved by $\T=\T_\opt$. Further, for brevity, we use the shorthand $G_\star$ for $G(|Z_\star|,\mu)$. Let $\rho^2_\opt(\delta)$ be the function value of achieved by $\T_\opt$. Further, for convenience, we also define (see \eqref{Eqn:rho_bar})
\BE\label{Eqn:Opt_8}
\begin{split}
\theta^2_\opt(\delta)&\Mydef \frac{\delta-1}{\delta}\cdot\frac{\rho^2_\star(\delta)}{1-\rho^2_\star(\delta)}\\
&=\frac{\frac{\delta}{\delta-1}-\frac{\mathbb{E}[G^2_\opt]}{\mathbb{E}^2[G_\opt]}}{\frac{\mathbb{E}[\delta|Z_\star|^2{G}_\opt^2]}{\mathbb{E}^2[G_\opt]}-\left(\frac{\delta}{\delta-1}\right)^2},
\end{split}
\EE
where the last equality is from the definition in \eqref{Eqn:SNR_asym}.
We next show that $P^2_\opt(\delta)$ can be expressed compactly as
\BE\label{Eqn:rho_optimal}
\theta^2_\opt(\delta)=\frac{1}{\hat{\mu}_\opt}-1,
\EE
where $\hat{\mu}_\opt$ is the unique solution to $\psi_1(\mu)=\frac{\delta}{\delta-1}$ in $(0,1)$. Then, from \eqref{Eqn:Opt_8}, it is straightforward to obtain
\[
{\rho}^2_\opt(\delta)=\frac{1-\hat{\mu}_\star}{1-\frac{1}{\delta}\hat{\mu}_\star}.
\]

For $\T_\opt=1-\frac{1}{\delta |Z_\star|^2}$, the function $G_\opt(|Z_\star|,\hat{\mu}_\star)$ (denoted as $G_\star$ hereafter) is given by
\BE\label{Eqn:Opt_3}
G_\opt(|Z_\star|,\hat{\mu}_\star)=\frac{1}{\hat{\mu}_\star^{-1}-\T_\opt(|Z_\star|)}=\frac{\hat{\mu}_\star\delta|Z_\star|^2}{(1-\hat{\mu}_\opt)\delta|Z_\star|^2+\hat{\mu}_\opt},
\EE
where $\hat{\mu}_\opt\in(0,1]$ is the unique solution to
\BE\label{Eqn:Opt_4}
\psi_1(\hat{\mu}_\star)=\frac{\mathbb{E}[\delta|Z_\star|^2G_\opt]}{\mathbb{E}[G_\opt]}=\frac{\delta}{\delta-1}.
\EE
The existence and uniqueness of $\hat{\mu}_\star$ (for $\delta>2$) is guaranteed by the monotonicity of $\psi_1$ under $\T_\opt$ (see Lemma \ref{Lem:T_increase}). Our first observation is that $G_{\opt}$ in \eqref{Eqn:Opt_3} satisfies the following relationship:
\BE\label{Eqn:Opt_5}
\mu_{\opt}\delta|Z_\star|^2-(1-\hat{\mu}_\opt)\delta|Z_\star|^2G_\opt=\hat{\mu}_\opt G_\opt.
\EE
Further, multiplying both sides of \eqref{Eqn:Opt_5} by $G_\star$ yields
\BE\label{Eqn:Opt_6}
\mu_{\opt}\delta|Z_\star|^2G_\opt-(1-\hat{\mu}_\opt)\delta|Z_\star|^2G^2_\opt=\hat{\mu}_\opt G^2_\opt.
\EE
Taking expectations over \eqref{Eqn:Opt_5} and \eqref{Eqn:Opt_6}, and noting $\mathbb{E}[\delta|Z_\star|^2]=1$, we obtain
\BS\label{Eqn:Opt_7}
\begin{align}
\hat{\mu}_\opt-(1-\hat{\mu}_\opt)\mathbb{E}\left[\delta|Z_\star|^2G_\opt\right]&=\hat{\mu}_\opt \mathbb{E}[G_\opt],\label{Eqn:Opt_7a}\\
\hat{\mu}_\opt\cdot\mathbb{E}\left[\delta|Z_\star|^2G_\opt\right]-(1-\hat{\mu}_\opt)\mathbb{E}\left[\delta|Z_\star|^2G^2_\opt\right]&=\hat{\mu}_\opt \mathbb{E}\left[G^2_\opt\right].\label{Eqn:Opt_7b}
\end{align}
\ES
Substituting \eqref{Eqn:Opt_7b} into \eqref{Eqn:Opt_8}, and after some calculations, we have\textcolor{red}{}
\BE\label{Eqn:Opt_9}
\begin{split}
\theta^2_\opt(\delta)&=\frac{1-\hat{\mu}_\opt}{\hat{\mu}_\opt}\cdot \frac{\frac{\mathbb{E}[\delta|Z_\star|^2G^2_\opt]}{\mathbb{E}^2[G_\opt]}-\frac{\delta}{\delta-1}\cdot\frac{\hat{\mu}_\opt}{1-\hat{\mu}_\opt}\cdot\left(\frac{1}{\mathbb{E}[G_\opt]}-1\right)}{\frac{\mathbb{E}[\delta|Z_\star|^2G^2_\opt]}{\mathbb{E}^2[G_\opt]}-\left(\frac{\delta}{\delta-1}\right)^2},
\end{split}
\EE
where we have used the identity $\psi_1(\hat{\mu}_\star)=\mathbb{E}[\delta|Z_\star|^2G_\star]/\mathbb{E}[G_\star]=\delta/(\delta-1)$.
From \eqref{Eqn:Opt_9}, to prove $\theta^2_\opt(\delta)=\hat{\mu}_\opt^{-1}-1$, we only need to prove
\BE
\frac{\hat{\mu}_\opt}{1-\hat{\mu}_\opt}\cdot\left(\frac{1}{\mathbb{E}[G_\opt]}-1\right)=\frac{\delta}{\delta-1},
\EE
which can be verified by combining \eqref{Eqn:Opt_4} and \eqref{Eqn:Opt_7a}.

Before leaving this section, we prove the monotonicity argument stated in Theorem \ref{Lem:optimality}. From \eqref{Eqn:rho_optimal}, to prove that $\rho^2_\opt(\delta)$ (or equivalently $\theta^2_\star(\delta)$) is an increasing function of $\delta$, it suffices to prove that $\hat{\mu}_\opt(\delta)$ is a decreasing function of $\delta$. This is a direct consequence of the following facts: (1) $\hat{\mu}_\opt(\delta)$ is the unique solution to $\psi_1(\mu)=\frac{\delta}{\delta-1}$ in $(0,1)$, and (2) $\psi_1(\mu)$ is an increasing function of $\mu$. The latter follows from Lemma \ref{Lem:T_increase} ($\T_\star(y)=1-\frac{1}{\delta y^2}$ is an increasing function).
\subsection{Optimality of $G_\opt$}
In the previous section, we have shown that the weak threshold is $\delta_{\mr{weak}}=2$. Consider a \textit{fixed} $\delta$ (where $\delta>2$) and our goal is to show that $\rho^2_\T(\delta)\le\rho_\star^2(\delta)$ for any $\T$.

We have proved $\rho^2_\star(\delta)>0$ (the asymptotic cosine similarity) for $\delta>2$. Hence, we only need to consider $\T$ satisfying $\rho^2_\T(\delta)>0$ (in which case we must have $\psi_1(\bar{\mu})>\frac{\delta}{\delta-1}$), since otherwise $\T$ is already worse than $\T_\star$. In Lemma \ref{Lem:PT_second}, we showed that the phase transition condition $\psi_1(\bar{\mu})>\frac{\delta}{\delta-1}$ can be equivalently reformulated as
\[
\exists \hat{\mu}\in(0,1],\ 0<\rho_\T^2(\hat{\mu},\delta)<1\quad\text{and}\quad \psi_1(\hat{\mu})=\frac{\delta}{\delta-1}.
\]
Also, from \eqref{Eqn:rho_bar} we see that
\[
0<\rho_\T^2(\hat{\mu},\delta)<1\Longleftrightarrow \theta_\T^2(\hat{\mu},\delta)>0.
\]
From the above discussions, the problem of optimally designing $\T$ can be formulated as
\BE\label{Eqn:Opt_1}
\begin{split}
\sup_{\T,\ \hat{\mu}\in(0,1)}\quad & \theta_\T^2(\delta,\hat{\mu})\\
s.t.
\quad & \theta_\T^2(\delta,\hat{\mu})>0,\\
\quad &\psi_1(\hat{\mu})=\frac{\delta}{\delta-1}.
\end{split}
\EE
In the above formulation, $\hat{\mu}\in(0,\mu)$ is treated as a variable that can be optimized. In fact, for a given $\T$, there cannot exist more than one $\hat{\mu}\in(0,1)$ such that $\psi_1(\hat{\mu})=\frac{\delta}{\delta-1}$ and $\theta_\T^2(\delta,\hat{\mu})>0$ hold simultaneously (from Lemma \ref{Lem:PT_second}). There can no such $\hat{\mu}$, though. In such cases, it is understood that $\theta_\T^2(\delta,\hat{\mu})=0$.

Substituting in \eqref{Eqn:psi_definitions} and after straightforward manipulations, we can rewrite \eqref{Eqn:Opt_1} as
\BE\label{Eqn:Opt_2}
\begin{split}
\sup_{G(\cdot)>0} \quad &\frac{\frac{\delta}{\delta-1}-\frac{\mathbb{E}[G^2]}{\mathbb{E}^2[G]}}{\frac{\mathbb{E}[\delta|Z_\star|^2{G}^2]}{\mathbb{E}^2[G]}-\left(\frac{\delta}{\delta-1}\right)^2}>0\\
s.t.\quad &\frac{\mathbb{E}[\delta|Z_\star|^2{G}]}{\mathbb{E}[G]}=\frac{\delta}{\delta-1},
\end{split}
\EE
where ${G}(y,\hat{\mu})$ is
\BE\label{Eqn:Opt_25}
{G}(y,\hat{\mu})= \frac{1}{\hat{\mu}^{-1}-\T(y)}.
\EE
Note that $G(y,\hat{\mu})\ge0$ for $\hat{\mu}\in(0,1]$.

At this point, we notice that the function to be optimized has been changed to the nonnegative function $G(\cdot)$ (with $\hat{\mu}$ being a parameter). Hence, the optimal $G(\cdot)$ is clearly not unique. In the following, we will show that the optimal value of the objective function cannot be larger than that $\rho_\star(\delta)$.

Consider $G(\cdot)$ be an arbitrary feasible function (satisfying $\mathbb{E}[G]=1$), and let $\theta^2(\delta)$ (or simply $\theta^2$) be the corresponding function value of the objective in \eqref{Eqn:Opt_2}. We now prove that $\theta(\delta)\le \theta_\opt(\delta)$ for any $\delta>2$. First, note that scaling the function $\hat{G}(\cdot)$ by a positive constant does not change the objective function and the constraint of the problem. Hence, without loss of generality and for simplicity of discussions, we assume
\[
\mathbb{E}[\hat{G}]=1.
\]
Since $\theta^2$ is the objective function value achieved by $\hat{G}$, by substituting the definition of $\hat{G}$ into \eqref{Eqn:Opt_2}, we have
\BE
\frac{\frac{\delta}{\delta-1}-\mathbb{E}[\hat{G}^2]}{\mathbb{E}[\delta|Z_\star|^2\hat{G}^2]-\left(\frac{\delta}{\delta-1}\right)^2}=\theta^2.
\EE
Some straightforward manipulations give us
\BE\label{Eqn:Opt_10}
\mathbb{E}\left[ (\theta^2\delta|Z_\star|^2+1)\hat{G}^2 \right]=\theta^2\left(\frac{\delta}{\delta-1}\right)^2
+\frac{\delta}{\delta-1}.
\EE
We assume $\theta^2>0$, since otherwise it already means that $G$ is worse than $G_\opt$ (note that $\theta_\opt^2=\hat{\mu}_\opt^{-1}-1$ is strictly positive). Hence, $\theta^2\delta|Z_\star|^2+1>0$, and we can lower bound $\mathbb{E}\left[ (\theta^2\delta|Z_\star|^2+1)\hat{G}^2 \right]$ by
\BE\label{Eqn:Opt_11}
\begin{split}
\mathbb{E}\left[ (\theta^2\delta|Z_\star|^2+1)\hat{G}^2 \right] &\ge \frac{\left(\mathbb{E}[\delta|Z_\star|^2\hat{G}]\right)^2}{ \mathbb{E}\left[\left(\frac{\delta|Z_\star|^2}{\sqrt{ \theta^2\delta|Z_\star|^2+1 }}\right)^2\right] }\\
&=\frac{\left(\frac{\delta}{\delta-1}\right)^2}{\mathbb{E}\left[\frac{\delta^2|Z_\star|^4}{\theta^2
\delta|Z_\star|^2+1}\right]},
\end{split}
\EE
where the first line follows from the Cauchy-Swarchz inequality $\mathbb{E}[X^2]\ge\mathbb{E}^2[XY]/\mathbb{E}[Y^2]$, and the second equality is due to the constraint $\mathbb{E}[|Z_\star|^2\hat{G}]=\frac{\delta}{\delta-1}$. Combining \eqref{Eqn:Opt_10} and \eqref{Eqn:Opt_11} yields
\BE\label{Eqn:Opt_12}
\mathbb{E}\left[\frac{\delta^2|Z_\star|^4}{\theta^2
\delta|Z_\star|^2+1}\right]\ge\frac{1}{\theta ^2
+\frac{\delta-1}{\delta}}.
\EE
Further, we note that
\BE\label{Enq:Opt_13}
\begin{split}
&\mathbb{E}\left[\frac{\delta^2|Z_\star|^4}{\theta^2
\delta|Z_\star|^2+1}\right]=\frac{1}{\theta^2}\left(\mathbb{E}[\delta|Z_\star|^2]-
\mathbb{E}\left[\frac{\delta|Z_\star|^2}{\theta^2
\delta|Z_\star|^2+1}\right]\right)\\
&=\frac{1}{\theta^2}\left(1-
\mathbb{E}\left[\frac{\delta|Z_\star|^2}{\theta^2
\delta|Z_\star|^2+1}\right]\right).
\end{split}
\EE
Then, substituting \eqref{Enq:Opt_13} into \eqref{Eqn:Opt_12}  gives us
\BE\label{Enq:Opt_14}
\mathbb{E}\left[\frac{\delta|Z_\star|^2}{\theta^2
\delta|Z_\star|^2+1}\right]\le\frac{\frac{\delta-1}{\delta}}{\theta^2+\frac{\delta-1}{\delta}}.
\EE
Combining \eqref{Enq:Opt_13} and \eqref{Enq:Opt_14}, we can finally get
\BE\label{Eqn:Opt_15}
\frac{\mathbb{E}\left[\frac{\delta^2|Z_\star|^4}{\theta^2
\delta|Z_\star|^2+1}\right]}{\mathbb{E}\left[\frac{\delta|Z_\star|^2}{\theta^2
\delta|Z_\star|^2+1}\right]}\ge \frac{\delta}{\delta-1}.
\EE

It remains to prove
\BE\label{Eqn:Opt_16}
\theta^2<\theta^2_\opt=\frac{1}{\hat{\mu}_\opt}-1.
\EE
To this end, we note that substituting \eqref{Eqn:Opt_3} into  \eqref{Eqn:Opt_4} yields
\BE\label{Eqn:Opt_17}
\frac{\mathbb{E}\left[\frac{\delta^2|Z_\star|^4}{(\hat{\mu}_\opt^{-1}-1)
\delta|Z_\star|^2+1}\right]}{\mathbb{E}\left[\frac{\delta|Z_\star|^2}{(\hat{\mu}_\opt^{-1}-1)
\delta|Z_\star|^2+1}\right]}= \frac{\delta}{\delta-1}.
\EE
From Lemma \ref{Lem:T_increase}, the LHS of \eqref{Eqn:Opt_15} (which is $\psi_1(1/(1+\theta^2))$ under $\T_\opt$) is a strictly decreasing function of $\theta\in(0,\infty)$. Hence, combining \eqref{Eqn:Opt_15} and \eqref{Eqn:Opt_17} proves \eqref{Eqn:Opt_16}.

\section{Derivations of the \Alg algorithm}\label{App:derivations}
In this appendix, we provide detailed derivations for the \Alg algorithm, which is an instance of the algorithm proposed in \cite{fletcher2016,He2017,Schniter16,Meng18}. The \Alg algorithm is derived based on a variant of expectation propagation \cite{Minka2001}, referred to as \textit{scalar EP} in \cite{Ccakmak2018}. The scalar EP approximation was first mentioned in \cite[pp. 2284]{opper2005} (under the name of diagonally-restricted approximation) and independently studied in \cite{ma2015turbo}\footnote{The algorithm in \cite{ma2015turbo} is equivalent to scalar EP, but derived (heuristically) in a different way.}. An appealing property of scalar EP is that its asymptotic dynamics could be characterized by a state evolution (SE) procedure under certain conditions. Such SE characterization for scalar EP was first observed in \cite{ma2015turbo,liu2016,Ma2016} and later proved in \cite{Rangan17,Takeuchi2017}. Notice that the SE actually holds for more general algorithms that might not be derived from scalar EP \cite{Ma2016,Rangan17}.

For simplicity of exposition, we will focus on the real-valued setting in this appendix. We then
generalize the \Alg algorithm to the complex-valued case in a natural way
(e.g., replacing matrix transpose to conjugate transpose, etc).
\subsection{The overall idea}\label{App:derivations_a}
The leading eigenvector of $\bm{D}$ is a solution to the following problem:
\BE
\min_{\|\bm{x}\|=\sqrt{n}}\ -\bm{x}^{\UT} \bm{D}\bm{x},
\EE
where $\bm{D}$ is defined in \eqref{Eqn:data_matrix}, and the normalization
$\|\bm{x}\|=\sqrt{n}$ (instead of $\|\bm{x}\|=1$) is imposed for discussion
convenience. By introducing a Lagrange multiplier $\lambda$, we further transform the
above problem into an
unconstrained one:
\BE\label{Eqn:PCA_unconstrain}
\min_{\bm{x}\in\mathbb{R}^n}\ -\bm{x}^{\UT} \bm{D}\bm{x}+ \lambda \|
\bm{x}\|^2.
\EE
To yield the principal
eigenvector, $\lambda$ should be set to
\BE
\lambda = \lambda_1(\bm{D}),
\EE
where $\lambda_1(\bm{D})$ denotes the maximum eigenvalue of $\bm{D}$. Note
that the
unconstrained formulation is not useful algorithmically since $\lambda_1(\bm{D})$
is not known a priori. Nevertheless, based on the unconstrained reformulation, we will derive a set of self-consistent
equations that can provide useful information
about the eigen-structure of $\bm{D}$.
Such formulation of the maximum eigenvector problem using a Lagrange multiplier
has also been adopted in \cite{Kabashima2010}.

Following \cite{fletcher2016}, we introduce an auxiliary variable $\bm{z}=\bm{Ax}$ and reformulate \eqref{Eqn:PCA_unconstrain} as
\BE\label{Eqn:optimization_final}
\min_{\bm{x}\in\mathbb{R}^n,\bm{z}\in\mathbb{R}^m}\ \underbrace{-\sum_{a=1}^m |z_a|^2\cdot
\T(y_a)}_{f_2(\bm{z})}  + \underbrace{\lambda\| \bm{x}\|^2}_{f_1(\bm{x})}+\mathbb{I}(\bm{z}=\bm{Ax}).
\EE
Our first step is to construct a joint pdf of $\bm{x}\in\mathbb{R}^n$ and $\bm{z}\in\mathbb{R}^{m}$:
\BE\label{Eqn:joint_pdf}
\ell(\bm{x},\bm{z})=\frac{1}{Z}\underbrace{\exp(-\beta\cdot f_2(\bm{z}))}_{F_2(\bm{z})}\cdot \underbrace{\exp(-\beta\cdot f_1(\bm{x}))}_{F_1(\bm{x})}\cdot \mathbb{I}(\bm{z}=\bm{Ax}),
\EE
where $Z$ is a normalizing constant and $\beta>0$ is a parameter (the inverse temperature). The factor graph corresponding to the above pdf is shown in
Fig.~\ref{Fig:factor2}.
Similar to \cite{ArianThesis}, we derive \Alg based on the following steps:
\begin{itemize}
\item Derive an EP algorithm, referred to as \textsf{PCA-EP-$\beta$}, for the factor graph shown in Fig.~\ref{Fig:factor2};
\item Obtain \Alg as the zero-temperature limit (i.e., $\beta\to\infty$) of \textsf{PCA-EP-$\beta$}.
\end{itemize}
Intuitively, as $\beta\to\infty$, the pdf $\ell(\bm{x},\bm{z})$ concentrates around the minimizer of \eqref{Eqn:optimization_final}. The \Alg algorithm is a low-cost message passing algorithm that intends to find the minimizer. Similar procedure has also been used to derive an AMP algorithm for solving an amplitude-loss based phase retrieval problem \cite[Appendix A]{MXM18}.

We would like to point out that, for the PCA problem in \eqref{Eqn:optimization_final}, the resulting \textsf{PCA-EP}-$\beta$ algorithm becomes invariant to $\beta$ (the effect of $\beta$ cancels out). This is essentially due to the Gaussianality of the factors $F_1(\bm{x})$ and $F_2(\bm{z})$ defined in \eqref{Eqn:joint_pdf}, as will be seen from the derivations in the next subsections. Note that this is also the case for the AMP.S algorithm derived in \cite{MXM18}, which is an AMP algorithm for solving \eqref{Eqn:optimization_final}.
\subsection{Derivations of \textsf{PCA-EP-$\beta$}}\label{Sec:PCA_EP_beta}
As shown in Fig.~\ref{Fig:factor2}, the factor graph has three factor nodes, represented in the figure as $F_1(\bm{x})$, $F_2(\bm{z})$ and $\bm{z}=\bm{Ax}$ respectively.

Before we proceed, we first point out that the message from node $\bm{x}$ to node $\bm{z}=\bm{Ax}$ is equal to $F_1(\bm{x})=\exp(-\lambda\|\bm{x}\|^2)$, and is invariant to its incoming message. This is essentially due to the fact that $F_1(\bm{x})$ is a Gaussian pdf with identical variances, as will be clear from the scalar EP update rule detailed below. As a consequence, we only need to update the messages exchanged between node $\bm{z}=\bm{Ax}$ and node $F_2(\bm{z})$; see Fig.~\ref{Fig:factor2}. Also, due to the Gaussian approximations adopted in EP algorithms, we only need to track the mean and variances of these messages.

In the following discussions, the means and precisions (i.e., the reciprocal of variance) of $m_{1\leftarrow 2}(\bm{z})$ and $m_{1\to 2}(\bm{z})$ are denoted as $\bm{z}_{1\leftarrow 2}$, $\beta\rho_{1\leftarrow 2}$, and $\bm{z}_{1\to 2}$, $\beta\rho_{1\to 2}$ respectively.

\begin{figure*}[hbpt]
\centering
\includegraphics[width=.6\textwidth]{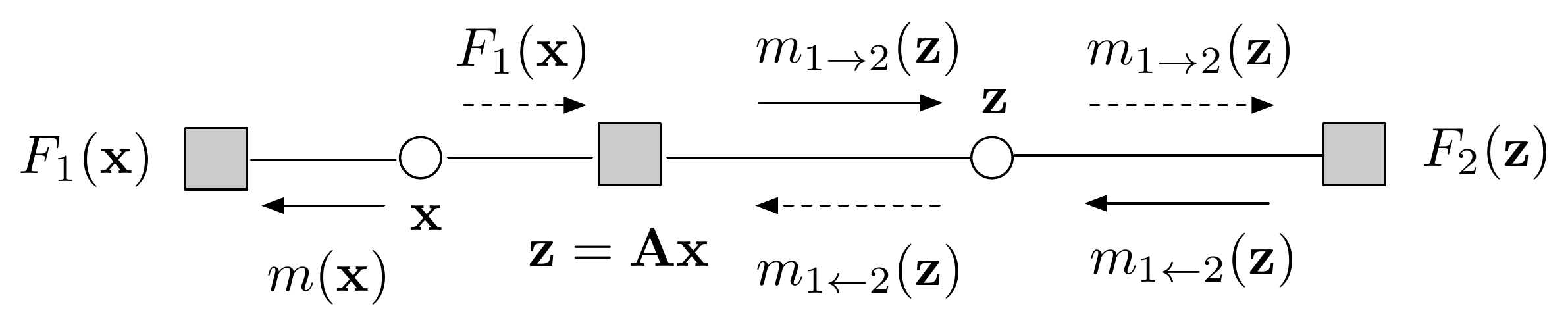}
\caption{Factor graph corresponding to the pdf in \eqref{Eqn:joint_pdf}. }\label{Fig:factor2}
\end{figure*}

\subsubsection{Message from $F_2(\mathbf{z})$ to $\mathbf{z}$}\label{Sec:EPs}

Let $m_{1\to2}^t(\bm{z})=\mathcal{N}(\bm{z}^t_{1\to2},1/(\beta\rho^t_{1\to2}\bm{I}))$ be the incoming message of node $F_2(\bm{z})$ at the $t$-th iteration. EP computes the outgoing message $m^t_{1\leftarrow2}(\bm{z})$ based on the following steps  \cite{Minka2001} (see also \cite{Rangan17}):\par\vspace{5pt}
\underline{(1) Belief approximation:}
The local belief at node $F_2(\bm{z})$ reads:
\BE
b_{2} ^t(\bm{z})\propto F_2(\bm{z})\cdot m_{1\to2}^t(\bm{z}).
\EE
For the general case where $F_2(\bm{z})$ is non-Gaussian, $b_{2}^t(\bm{z})$ is non-Gaussian. The first step of EP is to approximate $b^t_{2}(\bm{z})$ as a
Gaussian pdf based on moment matching:
\BE
 \hat{{b}}_2^t(\bm{z})=\mr{Proj}\big[ b_{2}^t(\bm{z})\big],
\EE
where, following the scalar EP approximation \cite{opper2005,ma2015turbo,liu2016,Ma2016,Rangan17,Takeuchi2017}, $\mr{Proj}\big[ b_{2}^t(\bm{z})\big]$ is given by
\BE\label{Eqn:scaler_EP_app}
\mr{Proj}\big[ b_{2}^t(\bm{z})\big]_i
= \mathcal{N}\Big(z_{i}; {z}^t_{2i},\frac{1}{m}\sum_{j=1}^m\frac{{v}^t_{2j}}{\beta}\Big),\quad \forall i=1,\ldots,m,
\EE
Here, ${z}^t_{2i}$ and ${v}^t_{2i}/\beta$ represent the marginal means and variance:
\BE
\begin{split}
{z}^t_{2i} &=\mathbb{E}[z_i],\\
{v}^t_{2i}/\beta&=\mathrm{var}[z_i],
\end{split}
\EE
where the expectations are taken w.r.t. the belief $b_2^t(\bm{z})=\frac{F_2(\bm{z})m_{1\to2}^t(\bm{z})}{\int F_2(\bm{z})m^t_{1\to2}(\bm{z}) \mr{d}\bm{z}}$. Using $F_2(\bm{z})=\exp\left(\beta \sum_{a=1}^m|z_a|^2\T(y_a)\right)$, it is straightforward to get the following closed form expressions for ${z}^t_{2i}$ and ${v}^t_{2i}$\footnote{This is under the condition that $\rho^t_{1\to2}>2\T(y_a)$, $\forall a=1,\ldots,m$. We assume that such condition is satisfied in deriving the \Alg algorithm.}
\BE\label{Eqn:SEP_post_2}
\begin{split}
{z}_{2i}^{t} &= \frac{\rho_{1\to2}^t}{\rho_{1\to2}^t-2\T(y_i)}\cdot{z}^t_{1\to2,i},\\
v_{2i}^t&= \frac{1}{\beta}\cdot\frac{1}{\rho_{1\to2}^t-2\T(y_i)}.
\end{split}
\EE

The approximation in \eqref{Eqn:scaler_EP_app} is based on the scalar EP. The difference between scalar EP and the conventional  EP will be discussed in Remark \ref{Rem:1} at the end of this subection.
\vspace{5pt}

\underline{(2) Message update:}
The outgoing message is computed as
\BE\label{Eqn:EP_outgoing}
m_{1\leftarrow 2}^{t}(\bm{z})\propto \frac{ \hat{{b}}^t_2(\bm{z})}{ m^{t}_{1\to 2}(\bm{z})}.
\EE
Since both the numerator and the denominator for the RHS of \eqref{Eqn:EP_outgoing} are Gaussian pdfs,
the resulting message is also Gaussian. The mean and
precision (i.e., the reciprocal of the variance) of $m^t_{1\to 2}(x_i)$ are respectively given by \cite{Minka2001}:
\BE\label{Eqn:EP_Gaussian_out2}
\begin{split}
\bm{z}_{1\leftarrow2}^t &= \frac{\beta \rho_2^t}{\beta \rho_2^t-\beta\rho^t_{1\to2}}\cdot{\bm{z}}^t_{2}-\frac{\beta \rho_{1\to2}^t}{\beta \rho_2^t-\beta\rho_{1\to2}^t}\cdot\bm{z}^t_{1\to2},\\
\beta\rho^t_{1\leftarrow 2} &= \beta\rho^t_{2} - \beta\rho^t_{1\to 2},
\end{split}
\EE
where
\[
\rho_2^t\Mydef\Big( \frac{1}{m}\sum_{j=1}^m v_{2j}^t \Big)^{-1}.
\]
Clearly, we can write \eqref{Eqn:EP_Gaussian_out2} into the following more compact form:
\BE
\begin{split}
\bm{z}^t_{1\leftarrow2} &= \frac{ \rho_2^t}{ \rho_2^t-\rho^t_{1\to2}}\cdot\bm{z}^t_{2}-\frac{ \rho^t_{1\to2}}{ \rho^t_2-\rho^t_{1\to2}}\cdot\bm{z}^t_{1\to2},\\
\rho_{1\leftarrow 2}^t &= \rho_{2}^t - \rho^t_{1\to 2}.
\end{split}
\EE

\subsubsection{Message from $\mathbf{z}$ to ${F}_1(\mathbf{z})$}

Let $m^t_{1\leftarrow 2}(\bm{z})=\mathcal{N}(\bm{z};\bm{z}^t_{1\leftarrow2},1/(\beta\rho^t_{1\leftarrow2}))$. The message $m^{t+1}_{1\to 2}(\bm{z})$ is calculated as \cite{Minka2001}
\[
m^{t+1}_{1\to 2}(\bm{z})\propto\frac{\text{Proj}\left[ \int_{\bm{x}} F_1(\bm{x})\mathbb{I}(\bm{z}=\bm{Ax}) m^t_{1\leftarrow2}(\bm{z})\mr{d}\bm{x} \right]}{m^t_{1\leftarrow2}(\bm{z})},
\]
where $F_1(x)=\exp(-\beta\lambda\|\bm{x}\|^2)$. In the above expression, $\text{Proj}(\cdot)$ denotes the scalar EP approximation in \eqref{Eqn:scaler_EP_app}, and to calculate the numerator we need to evaluate the following moments:
\BE
\begin{split}
{\bm{z}}^{t+1}_1&\Mydef \frac{\iint \bm{z}F_1(\bm{x})\mathbb{I}(\bm{z}=\bm{Ax})m^t_{1\leftarrow 2}(\bm{z})\mr{d}\bm{x}\mr{d}\bm{z}}{\iint F_1(\bm{x})\mathbb{I}(\bm{z}=\bm{Ax})m^t_{1\leftarrow 2}(\bm{z})\mr{d}\bm{x}\mr{d}\bm{z}},\\
\frac{{v}_1^{t+1}}{\beta}&\Mydef \frac{1}{m}\left(\frac{\iint\| \bm{z}\|^2F_1(\bm{x})\mathbb{I}(\bm{z}=\bm{Ax})m^t_{1\leftarrow 2}(\bm{z})\mr{d}\bm{x}\mr{d}\bm{z}}{\iint F_1(\bm{x})\mathbb{I}(\bm{z}=\bm{Ax})m^t_{1\leftarrow 2}(\bm{z})\mr{d}\bm{x}\mr{d}\bm{z}}-\|{\bm{z}}_1^{t+1}\|^2\right).
\end{split}
\EE
Using the definitions $F_1(x)=\exp(-\beta\lambda\|\bm{x}\|^2)$ and $m^t_{1\leftarrow 2}(\bm{z})=\mathcal{N}(\bm{z};\bm{z}^t_{1\leftarrow2},1/(\beta\rho^t_{1\leftarrow2}))$, it is straightforward to show that
\BE\label{Eqn:SEP_post_1}
\begin{split}
{\bm{z}}_1^{t+1} &=\bm{A}\left( \frac{2\lambda}{\rho^t_{1\leftarrow2}}\bm{I}+\bm{A}^{\UT}\bm{A} \right)^{-1} \bm{A}^{\UT}\bm{z}^t_{1\leftarrow2},\\
\frac{1}{\beta\rho_1^{t+1}} &= \frac{1}{m}\Tr\left(\bm{A}^{\UT}\left( 2\beta\lambda\bm{I}+\beta\rho^t_{1\leftarrow2}\bm{A}^{\UT}\bm{A} \right)^{-1}\bm{A}\right),
\end{split}
\EE
where $\rho_1^{t+1}=1/{v}_1^{t+1}$.
Finally, similar to \eqref{Eqn:EP_outgoing}, the mean/precision of the output message $m^{t+1}_{1\to2}(\bm{z})$ are given by
\BE\label{Eqn:EP_Gaussian_out1}
\begin{split}
\bm{z}_{1\to2}^{t+1} &=\frac{\rho_1^{t+1}}{\rho_1^{t+1}-\rho^t_{1\leftarrow2}}\cdot {\bm{z}}_1^{t+1}-\frac{\rho^{t}_{1\leftarrow2}}{\rho_1^{t+1}-\rho^t_{1\leftarrow2}}\cdot {\bm{z}}^t_{1\leftarrow2},\\
\rho_{1\to2}^{t+1} &=\rho_1^{t+1}-\rho^t_{1\leftarrow2}.
\end{split}
\EE

\vspace{5pt}
\begin{remark}[Difference between diagonal EP and scalar EP]\label{Rem:1}
Different from scalar EP, the conventional EP (referred to as diagonal EP in \cite{Ccakmak2018}) matches both mean and variance on the component-wise level \cite{Minka2001}, which seems to be a more natural treatment. For instance, in the diagonal EP approach, the projection operation in \eqref{Eqn:scaler_EP_app} becomes the following:
\begin{align}
\mr{Proj}\big[ b^t_{2}(\bm{z})\big]_i
= \mathcal{N}\Big(z_{i}; {z}^t_{2i},\frac{{v}^t_{2i}}{\beta}\Big).
\end{align}
For the specific problem considered in our paper, $F_2(\bm{z})=\exp\left(\beta\sum_{a=1}^m |z_a|^2\cdot \T(y_a)\right)$ can be viewed as a Gaussian message (up to constant scaling) of $\bm{z}$. In this case, the belief $b^t({\bm{z}})\propto F_2(\bm{z})\cdot\mathcal{N}(\bm{z};\bm{z}^t_{1\to2},1/(\beta\rho_{1\to2}^t)\bm{I})$ is a Gaussian pdf that has a diagonal covariance matrix. Hence, if we apply the diagonal EP approximation, then $\hat{b}^t(\bm{z})=\mathit{Proj}[b^t(\bm{z})]=b^t(\bm{z})$ and so $m^t_{1\leftarrow 2}(\bm{z})\propto \frac{ \hat{{b}}^t(\bm{z})}{ m^t_{1\to 2}(\bm{z})}=F_2(\bm{z})$. This means that the message $m^t_{1\leftarrow 2}(\bm{z})$ is invariant to the input, and hence all the messages in Fig.~\ref{Fig:factor2} remain constant.
\end{remark}

\subsubsection{Estimate of $\mathbf{x}$}
Let $m^{t+1}(\bm{x})$ be the message sent from node $\bm{z}=\bm{Ax}$ to node $\bm{x}$:
\[
m^{t+1}(\bm{x})\propto\frac{\text{Proj}\left[ \int_{\bm{z}} F_1(\bm{x})\mathbb{I}(\bm{z}=\bm{Ax}) m^t_{1\leftarrow2}(\bm{z})\mr{d}\bm{z} \right]}{F_1(\bm{x})}.
\]
The estimate of $\bm{x}$, denoted as $\bm{x}^{t+1}$, is given by the mean of the belief $b(\bm{x})$, where
\[
\begin{split}
&b(\bm{x})\propto F_1(\bm{x})\cdot m^{t+1}(\bm{x})\\
&=\text{Proj}\left[ \int_{\bm{z}} F_1(\bm{x})\mathbb{I}(\bm{z}=\bm{Ax}) m^t_{1\leftarrow2}(\bm{z})\mr{d}\bm{z} \right].
\end{split}
\]
Since $\text{Proj}(\cdot)$ is a moment-matching operation, we have
\BE
\begin{split}
\bm{x}^{t+1} &\propto \iint \bm{x} F_1(\bm{x})\mathbb{I}(\bm{z}=\bm{Ax}) m^t_{1\leftarrow2}(\bm{z})\mr{d}\bm{z}\mr{d}\bm{x}\\
&\propto\int_{\bm{x}} \bm{x}F_1(\bm{x})m^t_{1\leftarrow2}(\bm{Ax})\mr{d}\bm{x}.
\end{split}
\EE
Using $F_1(\bm{x})=\exp\left(-\beta\lambda\|\bm{x}\|^2\right)$, $m^t_{1\leftarrow2}(\bm{z})=\mathcal{N}(\bm{z};\bm{z}^t_{1\leftarrow2},1/(\beta\rho_{1\leftarrow2}^t)\bm{I})$, and after some simple calculations, we get
\BE\label{Eqn:x_final_estimate}
\bm{x}^{t+1}=\left(\frac{2\lambda}{\rho_{1\leftarrow2}^t}\bm{I}+\bm{A}^{\UT}\bm{A}\right)^{-1}\bm{A}^{\UT}\bm{z}^t_{1\leftarrow2}.
\EE

\subsection{Summary of \Alg}
For brevity, we make a few changes to our notations:
\[
\bm{z}^t_{1\to2} \Longrightarrow \bm{z}^t,\quad
\rho^t_{1\to2} \Longrightarrow \rho^t,\quad
\bm{z}^t_{1\leftarrow2} \Longrightarrow \bm{p}^t,\quad
\rho^t_{1\leftarrow 2}  \Longrightarrow \gamma^t.
\]
Combining \eqref{Eqn:SEP_post_2}, \eqref{Eqn:EP_Gaussian_out2}, \eqref{Eqn:SEP_post_1} and \eqref{Eqn:EP_Gaussian_out1}, and after some straightforward calculations, we can simplify the \Alg algorithm as follows:
\BS\label{Eqn:GEC_general}
\begin{align}
\bm{p}^{t} &= \frac{1}{1-\frac{1}{m}\Tr(\bm{G})} \left( \bm{G}- \frac{1}{m}\Tr(\bm{G})\cdot
\bm{I}\right)\bm{z}^t ,\label{Eqn:GEC_general_d}\\
\gamma^{t} &= \rho^t\cdot\left(\frac{1}{\frac{1}{m}\Tr(\bm{G})}-1\right),\label{Eqn:GEC_general_e}\\
\bm{z}^{t+1} &=\frac{1}{1-\frac{1}{m}\Tr(\bm{R})}\left( \bm{R}-\frac{1}{m}\Tr(\bm{R})\cdot\bm{I}
\right) \bm{p}^t \label{Eqn:GEC_general_a},\\
\rho^{t+1} &= \gamma^t\cdot\left( \frac{1}{\frac{1}{m}\Tr(\bm{R})}-1 \right),\label{Eqn:GEC_general_b}
\end{align}
where the matrices $\bm{G}$ and $\bm{R}$ are defined as
\begin{align}
\bm{G}&\Mydef\mr{diag}\left\{\frac{1}{1-2(\rho^t)^{-1} \T(y_1)},\ldots,\frac{1}{1-2(\rho^t)^{-1}
\T(y_m)}\right\} \label{Eqn:G_matrix_def},\\
\bm{R}&\Mydef \bm{A}\left( \frac{2\lambda}{\gamma^t}\bm{I}+\bm{A}^{\UH}\bm{A} \right)^{-1} \bm{A}^{\UH}.
\label{Eqn:R_def_origin}
\end{align}
\ES
The final output of $\bm{x}$ is given by (cf. \eqref{Eqn:x_final_estimate})
\BE\label{Eqn:x_final_estimate2}
\bm{x}^{t+1} = \left( \frac{2\lambda}{\gamma^t}\bm{I} +\bm{A}^{\UH}\bm{A}\right)^{-1}\bm{A}^{\UH}\bm{p}^t.
\EE

In the above algorithm, we did not include iteration indices for $\bm{G}$
and $\bm{R}$. The reason is that we will consider a version of \Alg where
$\rho^t$ and $\gamma^t$ are fixed during the iterative process. \vspace{5pt}

Below are a couple of points we would like to mention:
\begin{itemize}
\item
{From the above derivations, we see that the inverse temperature $\beta$ cancels out in the final \textsf{PCA-EP}-$\beta$ algorithm. This is essentially due to the Gaussianality of $F_1(\bm{x})$ and $F_2(\bm{z})$ (see \eqref{Eqn:optimization_final} and \eqref{Eqn:joint_pdf}). }
\item Although the derivations in Section \ref{Sec:PCA_EP_beta} focus on the real-valued case, the final \Alg algorithm described above is for the general complex-valued case. Notice that we generalized \Alg to the complex-valued case by replacing all matrix transpose to conjugate transpose.
\end{itemize}

\subsection{Stationary points of \Alg}
Suppose that $\bm{p}^{\infty}$, $\bm{z}^{\infty}$, $\bm{x}^{\infty}$, $\gamma^{\infty}$ and $\rho^{\infty}$ are the stationary values of the corresponding variables for the \Alg algorithm. In this section, we will show that $\bm{x}^{\infty}$ is an eigenvector of the data matrix $\bm{D}=\bm{A}^{\UH}\bm{T}\bm{A}$, where $\bm{T}\Mydef \mr{diag}\{\T(y_1),\ldots,\T(y_m)\}$. Our aim is to show
\BE\label{Eqn:eigen_to_prove}
\bm{A}^{\UH}\bm{T}\bm{A}\bm{x}^{\infty}=\lambda \bm{x}^{\infty}.
\EE

Suppose that $\gamma^{\infty}\neq0$. First, combining \eqref{Eqn:GEC_general_e} and \eqref{Eqn:GEC_general_b} yields
\BE\label{Eqn:stationary_div}
\begin{split}
&\frac{\rho^{\infty}}{\gamma^{\infty}} =\frac{1}{\frac{1}{m}\Tr(\bm{R})}-1=\left(\frac{1}{\frac{1}{m}\Tr(\bm{G})}-1\right)^{-1}\\
&\Longrightarrow
\frac{1}{m}\Tr(\bm{R})+\frac{1}{m}\Tr(\bm{G})=1.
\end{split}
\EE
Further,
\BE\label{Eqn:stationary_post_mean}
\begin{split}
\bm{Gz}^{\infty}&\overset{(a)}{=}\left(1-\frac{1}{m}\Tr(\bm{G})\right)\bm{p}^{\infty}+\frac{1}{m}\Tr(\bm{G})\bm{z}^{\infty}\\
&\overset{(b)}{=}\frac{1}{m}\Tr(\bm{R})\bm{p}^{\infty}+\left(1-\frac{1}{m}\Tr(\bm{R})\right)\bm{z}^{\infty}\\
&\overset{(c)}{=}\bm{Rp}^{\infty},
\end{split}
\EE
where step (a) follows from \eqref{Eqn:GEC_general_d}, step (b) from \eqref{Eqn:stationary_div} and step (c) from \eqref{Eqn:GEC_general_a}. We now prove \eqref{Eqn:eigen_to_prove} by showing
\BS
\begin{align}
\bm{A}^{\UH}\bm{T}\bm{A}\bm{x}^{\infty} &= \frac{\rho^{\infty}}{2}\bm{A}^{\UH}(\bm{G}-\bm{I})\bm{z}^{\infty} \label{Eqn:PCA_fix_temp1}\\
 \bm{x}^{\infty} &= \frac{\gamma^{\infty}}{2\lambda}\bm{A}^{\UH}(\bm{I}-\bm{R})\bm{p}^{\infty},\label{Eqn:PCA_fix_temp2}
\end{align}
and
\BE\label{Eqn:PCA_fix_temp3}
(\bm{G}-\bm{I})\bm{z}^{\infty}  = \frac{\gamma^{\infty}}{\rho^{\infty}}(\bm{I}-\bm{R})\bm{p}^{\infty}.
\EE
\ES

We next prove \eqref{Eqn:PCA_fix_temp1}. By using \eqref{Eqn:x_final_estimate2}, we have
\[
\begin{split}
\bm{A}^{\UH}\bm{T}\bm{A}\bm{x}^{\infty}&=\bm{A}^{\UH}\bm{T}\bm{A} \left( \frac{2\lambda}{\gamma^{\infty}}\bm{I} +\bm{A}^{\UH}\bm{A}\right)^{-1}\bm{A}^{\UH}\bm{p}^{\infty}\\
&\overset{(a)}{=}\bm{A}^{\UH}\bm{TRp}^{\infty}\\
&\overset{(b)}{=}\bm{A}^{\UH}\bm{TGz}^{\infty}\\
&\overset{(c)}{=}\frac{\rho^{\infty}}{2}\bm{A}^{\UH}(\bm{G}-\bm{I})\bm{z}^{\infty},
\end{split}
\]
where (a) is from the definition of $\bm{R}$ in \eqref{Eqn:R_def_origin}, (b) is due to \eqref{Eqn:stationary_post_mean}, and finally (c) is from the identity $\bm{TG}=\frac{\rho^{\infty}}{2}(\bm{G}-\bm{I})$ that can be verified from the definition of $\bm{G}$ in \eqref{Eqn:G_matrix_def}.

We now prove \eqref{Eqn:PCA_fix_temp2}. From \eqref{Eqn:R_def_origin}, we have
\[
\begin{split}
\bm{A}^{\UH}(\bm{I}-\bm{R})\bm{z}^{\infty} &=\left(\bm{A}^{\UH}-\bm{A}^{\UH}\bm{A}\left( \frac{2\lambda}{\gamma^{\infty}}\bm{I}+\bm{A}^{\UH}\bm{A} \right)^{-1} \bm{A}^{\UH}\right)\bm{z}^{\infty}\\
&=\left(\bm{I}-\bm{A}^{\UH}\bm{A}\left( \frac{2\lambda}{\gamma^{\infty}}\bm{I}+\bm{A}^{\UH}\bm{A} \right)^{-1}\right)\bm{A}^{\UH}\bm{z}^{\infty}\\
&=\frac{2\lambda}{\gamma^{\infty}}\left( \frac{2\lambda}{\gamma^{\infty}}\bm{I}+\bm{A}^{\UH}\bm{A} \right)^{-1}\bm{A}^{\UH}\bm{z}^{\infty}\\
&=\frac{2\lambda}{\gamma^{\infty}}\bm{x}^{\infty}
\end{split}
\]
where the last step is from the definition of $\bm{x}^{\infty}$ in \eqref{Eqn:x_final_estimate2}.

Finally, we prove \eqref{Eqn:PCA_fix_temp3}. From \eqref{Eqn:GEC_general_d} and \eqref{Eqn:GEC_general_a}, we have
\[
\begin{split}
(\bm{G}-\bm{I})\bm{z}^{\infty}&=\left(1-\frac{1}{m}\Tr(\bm{G})\right)\left(\bm{p}^{\infty}-\bm{z}^{\infty}\right),\\
(\bm{R}-\bm{I})\bm{p}^{\infty}&=\left(1-\frac{1}{m}\Tr(\bm{R})\right)\left(\bm{z}^{\infty}-\bm{p}^{\infty}\right).
\end{split}
\]
Then, \eqref{Eqn:PCA_fix_temp3} follows from these identities together with \eqref{Eqn:stationary_div}.
\subsection{\Alg with fixed tuning parameters}
We consider a version of \Alg where $\rho^t=\rho$ and $\gamma^t=\gamma$, $\forall t$, where $\rho>0$ and $\gamma\in\mathbb{R}$ are understood as tuning parameters. We further assume that $\rho$ and $\gamma$ are chosen such that the following relationship holds (cf. \ref{Eqn:stationary_div}):
\[
\frac{1}{m}\Tr(\bm{G})+\frac{1}{m}\Tr(\bm{R})=1.
\]

Under the above conditions, the \Alg algorithm in \eqref{Eqn:GEC_general} can be written into the following compact form:
\BE\label{Eqn:Alg_fixed}
\begin{split}
\bm{z}^{t+1} = \left( \frac{\bm{R}}{\frac{1}{m}\Tr(\bm{R})}-\bm{I}
\right)\left( \frac{\bm{G}}{\frac{1}{m}\Tr(\bm{G})}-
\bm{I}\right)\bm{z}^t,
\end{split}
\EE
where $\bm{G}$ and $\bm{R}$ are defined in \eqref{Eqn:G_matrix_def} and \eqref{Eqn:R_def_origin}, respectively. Further, there are three tuning parameters involved, namely, $\lambda$, $\rho$ and $\gamma$.

\subsection{\Alg for partial orthogonal matrix}

The \Alg algorithm simplifies considerably for partial orthogonal matrices satisfying $\bm{A}^{\UH}\bm{A}=\bm{I}$. To see this, note that $\bm{R}$ in \eqref{Eqn:R_def_origin} becomes (with $\gamma^t=\gamma$)
\BE
\bm{R}=\frac{1}{\frac{2\lambda}{\gamma}+1}\cdot\bm{AA}^{\UH}.
\EE
Since
\[
\frac{1}{m}\Tr(\bm{AA}^{\UH})=\frac{n}{m}=\frac{1}{\delta},
\]
we have
\[
\frac{\bm{R}}{\frac{1}{m}\Tr(\bm{R}) }= \delta\bm{AA}^{\UH}.
\]
Then, using the above identity and noting the constraint $\frac{1}{m}\Tr(\bm{R})+\frac{1}{m}\Tr(\bm{G})=1$ (cf. \eqref{Eqn:stationary_div}), we could write \eqref{Eqn:Alg_fixed} into the following update:
\BE\label{Eqn:Alg_fixed_orth}
\bm{z}^{t+1}=\left(\delta\bm{AA}^{\UH}-\bm{I}\right)\left(\frac{ \bm{G}}{\frac{1}{m}\Tr(\bm{G})}-
\bm{I}\right)\bm{z}^t,
\EE
where
\[
\bm{G}\Mydef\mr{diag}\left\{\frac{1}{1-2(\rho)^{-1} \T(y_1)},\ldots,\frac{1}{1-2(\rho)^{-1}
\T(y_m)}\right\}
\]
We will treat the parameter $\rho$ as a tunable parameter for \Alg. Finally, \textit{we note that \eqref{Eqn:Alg_fixed_orth} is invariant to a scaling of $\bm{G}$}. We re-define $\bm{G}$ into the following form:
\[
\bm{G}=\mr{diag}\left\{\frac{1}{\mu^{-1}- \T(y_1)},\ldots,\frac{1}{\mu^{-1}-
\T(y_m)}\right\},
\]
where $\mu$ is a tunable parameter. This form of $G$ is more convenient for certain parts of our discussions.

\section{Heuristic derivations of state evolution}\label{App:SE_heuristics}
The \Alg iteration is given by
\BS\label{Eqn:Alg_app}
\BE
\bm{z}^{t+1} =(\delta \bm{AA}^{\UH}-\bm{I})\left(\frac{ \bm{G}}{\G}-\bm{I}\right)\bm{z}^t,
\EE
with the initial estimate distributed as
\BE
\bm{z}^0\overset{d}{=}\alpha_0\bm{z}_\star+\sigma_0{\bm{w}}^0,
\EE
\ES
and ${\bm{w}}^0\sim\mathcal{CN}(\mathbf{0},1/\delta\mb{I})$ is independent of $\bm{z}_\star$.
An appealing property of \Alg is that $\bm{z}^t$ (for $t\ge1$) also satisfies the above ``signal plus white Gaussian noise'' property:
\[
\bm{z}^t\overset{d}{=}\alpha_t\bm{z}_\star+\sigma_0{\bm{w}}^t,
\]
where $\bm{w}^t$ is independent of $\bm{z}_\star$. In the following sections, we will present a heuristic way of deriving
\begin{itemize}
\item The mapping $(\alpha_t,\sigma_t^2)\mapsto(\alpha_{t+1},\sigma^2_{t+1})$;
\item The correlation between the estimate $\bm{x}^{t+1}$ and and true signal $\bm{x}_\star$:
\[
\rho(X_\star,X^{t+1})\Mydef\lim_{m\to\infty}\frac{\langle\bm{x}_\star,\bm{x}^{t+1}\rangle}{\|\bm{x}^{t+1}\|\|\bm{x}_\star\|};
\]
\item The correlation between two consecutive ``noise'' terms:
\[
\rho(W^t,W^{t+1})\Mydef\lim_{m\to\infty}\frac{\langle\bm{w}^t,\bm{w}^{t+1}\rangle}{\|\bm{w}^t\|\|\bm{w}^{t+1}\|}.
\]

\end{itemize}

\subsection{Derivations of $\{\alpha_t\}$ and $\{\sigma_t\}$}
We now provide a heuristic derivation for the SE recursion given in \eqref{Eqn:SE_final}. For convenience, we introduce an auxiliary variable:
\BE\label{Eqn:H_df_first}
\bm{p}^t \Mydef \underbrace{\left( \frac{\bm{G}}{\G}-
\bm{I}\right)\bm{z}^t }_{H_\df(\bm{z}^t,\bm{y};\mu)}
\EE
where $\G \Mydef \frac{1}{m}\Tr(\bm{G})$. Further, based on a heuristic concentration argument, we have
\[
\G\approx \Gs=\mathbb{E}\left[G\right],
\]
where
\[
G\Mydef \frac{1}{\mu^{-1}-\T(|Z_\star|)}
\]
and the expectation is taken over $Z_\star\sim\mathcal{CN}(0,1/\delta)$. The intuition of \Alg is that, in each iteration, $\bm{z}^t$ is approximately distributed as
\BE\label{Eqn:AWGN_t}
\bm{z}^t \overset{d}{=}\alpha_t\bm{z}_{\star}+\sigma_t\bm{w}^t,
\EE
where $\bm{w}^t\sim\mathcal{CN}(\mathbf{0},1/\delta\bm{I})$ is independent of $\bm{z}_{\star}$. Due to this property, the distribution of $\bm{z}^t$ is fully characterized by $\alpha_t$ and $\sigma_t$. State evolution (SE) refers to the map $(\alpha_t,\sigma_t^2)\mapsto(\alpha_{t+1},\sigma_{t+1}^2)$. It is possible to justify this property using the conditioning lemma developed in \cite{Rangan17,Takeuchi2017}. A rigorous proof is beyond the scope of this paper. Instead, we will only provide a heuristic way of deriving the SE maps.

We first decompose $\bm{p}^t$ into a vector parallel to $\bm{z}_{\star}$ and a vector perpendicular to it:
\BE\label{Eqn:xi_def}
\bm{p}^t = \underbrace{\frac{\langle \bm{z}_{\star},\bm{p}^t \rangle}{\|\bm{z}_{\star}\|^2}}_{\alpha_t^p} \cdot\bm{z}_{\star}+\underbrace{\left( \bm{p}^t -\frac{\langle \bm{z}_{\star},\bm{p}^t \rangle}{\|\bm{z}_{\star}\|^2}\cdot\bm{z}_{\star} \right)}_{\bm{\xi}},
\EE
where
\BE\label{Eqn:alpha_p}
\begin{split}
\alpha_t^p &\approx \delta\cdot\mathbb{E}[Z_{\star}^*P^t]\\
&=\delta\cdot\mathbb{E}\left[Z_{\star}^*\left( \frac{G}{\Gs}-1 \right)Z^t\right]\\
&\overset{(a)}{=}\delta\cdot\mathbb{E}\left[|Z_{\star}|^2\left( \frac{G}{\Gs}-1 \right)\right]\cdot\alpha_t\\
&\overset{(b)}{=}\big(\psi_1(\mu)-1\big)\cdot\alpha_t,
\end{split}
\EE
where step (a) is from our assumption $Z^t=\alpha_tZ_\star+W^t$ where $W^t$ is independent of $Z_\star$ and step (b) is from the definition of $\psi_1$ (see \eqref{Eqn:psi_definitions}).
Now, consider the update of $\bm{z}^{t+1}$ given in \eqref{Eqn:Alg_fixed}:
\BE\label{Eqn:AWGN_tplus1}
\begin{split}
\bm{z}^{t+1} &= (\delta\bm{AA}^{\UH}-\bm{I})\bm{p}^t\\
&=(\delta\bm{AA}^{\UH}-\bm{I})\left(\alpha_t^p\bm{z_{\star}} +\bm{\xi}  \right)\\
&=(\delta-1)\cdot\alpha_t^p \cdot \bm{z_{\star}} +\underbrace{ (\delta\bm{AA}^{\UH}-\bm{I})\bm{\xi} }_{\bm{w}^{t+1}},
\end{split}
\EE
where in the last step we used $\bm{AA}^{\UH}\bm{z}_\star=\bm{AA}^{\UH}\bm{A}\bm{x}_\star=\bm{Ax}_\star=\bm{z}_\star$. To derive the SE map, it remains to calculate the (average) variance of the ``effective noise'' $(\delta\bm{AA}^{\UH}-\bm{I})\bm{\xi}$:
\BE\label{Eqn:SE_heuristics}
\begin{split}
&\frac{1}{m}\mathbb{E}\|(\delta\bm{AA}^{\UH}-\bm{I})\bm{\xi}\|^2\\
&\overset{(a)}{=}\frac{\mr{tr}(\delta\bm{AA}^{\UH}-\bm{I})^2}{m}\cdot\mathbb{E}[|\Xi|^2]\\
&\overset{(b)}{=}(\delta-1)\cdot\mathbb{E}[|\Xi|^2]\\
&\overset{(c)}{=}(\delta-1)\cdot\left(\mathbb{E}\left[\left|H_{\df}\right|^2\right]-\left|\alpha_t^p\right|^2\cdot\mathbb{E}[|Z_{\star}|^2]\right)\\
&\overset{(d)}{=}(\delta-1)\cdot\left(\mathbb{E}\left[\left|H_{\df}\right|^2\right]-\frac{1}{\delta}\cdot\left(\psi_1(\mu)-1\right)^2|\alpha_t|^2\right)
\end{split}
\EE
where step (a) follows from the heuristic assumption that $\bm{\xi}$ is ``independent'' of $\bm{A}$\footnote{We do not expect them to be truly independent. However, we expect the SE maps derived under this assumption to be correct.}, step (b) is from $\bm{A}^{\UH}\bm{A}=\bm{I}$ and the definition $\delta=\frac{m}{n}$, step (c) is from the orthogonality between ${Z}_{\star}$ and $\Xi=H_{\df}-\alpha_t^p Z_{\star}$, and step (d) from \eqref{Eqn:alpha_p}.

Next, we simplify the map $(\alpha_t,\sigma^2_t)\mapsto\sigma^2_{t+1}$. Similar to step (c) of \eqref{Eqn:SE_heuristics}, we have
\BE\label{Eqn:SE_fixed3}
\begin{split}
&\mathbb{E}[|\Xi|^2]= \mathbb{E}\left[\left|H_{\df}\right|^2\right]-\left|\alpha_t^p\right|^2\cdot\mathbb{E}[|Z_{\star}|^2]\\
&\overset{(a)}{=}|\alpha_t|^2\cdot \mathbb{E}\left[ \left( \frac{G}{\Gs}-1 \right) ^2\left|Z_{\star}\right|^2 \right] +\frac{\sigma^2_t}{\delta}\cdot\mathbb{E}\left[ \left( \frac{G}{\Gs}-1 \right)^2  \right]-\frac{1}{\delta}\left| \alpha_t^p \right|^2\\
&=|\alpha_t|^2\cdot \left( \frac{\mathbb{E}\left[ |Z_{\star}|^2G^2\right]}{\Gs^2} +\frac{1}{\delta}-\frac{2\mathbb{E}\left[ |Z_{\star}|^2G \right]}{\Gs}\right)\\
&\quad+\frac{\sigma^2_t}{\delta}\left(\frac{\mathbb{E}\left[G^2\right]}{\Gs^2}-1\right)- \frac{1}{\delta}\left| \alpha_t^p \right|^2\\
&\overset{(b)}{=}|\alpha_t|^2\left(\frac{\psi_3^2(\mu)+1-2\psi_1(\mu)}{\delta}\right)+\frac{\sigma^2_t}{\delta}\left(\psi_2(\mu)-1\right) \\
&\quad-\frac{|\alpha_t|^2}{\delta}\cdot\big(\psi_1(\mu)-1\big)^2 \\
&= \frac{1}{\delta}\cdot\left[|\alpha_t|^2\cdot\big( \psi_3^2(\mu)-\psi_1^2(\mu)\big)+\sigma_t^2\cdot\big( \psi_2(\mu)-1 \big)\right],
\end{split}
\EE
where step (a) follows from the definition $Z^t =\alpha_t Z_{\star} +\sigma_t W$ with $W\sim\mathcal{CN}(0,1/\delta)$, and step (b) follows from the definitions in \eqref{Eqn:psi_definitions}, and (c) from \eqref{Eqn:psi_definitions} and \eqref{Eqn:alpha_p}.

From \eqref{Eqn:xi_def}, and \eqref{Eqn:AWGN_tplus1}-\eqref{Eqn:SE_fixed3}, and recalling our definitions of $\alpha_{t+1}$, $\sigma_{t+1}$ in \eqref{Eqn:AWGN_t}, we have
\BS\label{Eqn:SE}
\begin{align}
\alpha_{t+1} &= (\delta-1)\cdot\left(\psi_1(\mu)-1\right)\cdot\alpha_t,\label{Eqn:SE_a}\\
\sigma^2_{t+1}&=(\delta-1)\delta\cdot \mathbb{E}[|\Xi|^2]\label{Eqn:SE_b}\\
&=(\delta-1)\cdot\Big(|\alpha_t|^2\big( \psi_3^2(\mu)-\psi_1^2(\mu)\big)+\sigma_t^2\big( \psi_2(\mu)-1 \big)\Big).\nonumber
\end{align}
\ES
Notice that the extra scaling $\delta$ in \eqref{Eqn:SE_b} (compared with \eqref{Eqn:SE_heuristics}) is due to our assumption that $\mathbb{E}[|W|^2]=1/\delta$ rather than $\mathbb{E}[|W|^2]=1$. In what follows, we will express the SE maps in \eqref{Eqn:SE} using the functions $\psi_1$, $\psi_2$ and $\psi_3$ defined in \eqref{Eqn:psi_definitions}.
\subsection{Derivations of $\rho(X_\star,X^{t+1})$}

From \eqref{Eqn:AWGN_tplus1}, we have
\[
\bm{z}^{t+1} =(\delta-1)\cdot\alpha_t^p \cdot \bm{z_{\star}} + (\delta\bm{AA}^{\UH}-\bm{I})\bm{\xi},
\]
where $\bm{\xi}$ is assumed to be independent of $\bm{A}$. Recall that the final estimate $\bm{x}^{t+1}$ is defined as (cf. \eqref{Eqn:x_hat_def})
\[
\begin{split}
\bm{x}^{t+1}&\propto \bm{A}^{\UH}\bm{z}^{t+1}\\
&=(\delta-1)\cdot\alpha_t^p \cdot \bm{A}^{\UH}\bm{A}\bm{x_{\star}} + (\delta-1)\bm{A}^{\UH}\bm{\xi}\\
&=\underbrace{(\delta-1)\cdot\alpha_t^p}_{\alpha_{t+1}} \bm{x_{\star}} + (\delta-1)\bm{A}^{\UH}\bm{\xi}.
\end{split}
\]
Intuitively, $\bm{A}^{\UH}\bm{\xi}$ is a Gaussian noise term independent of $\bm{x}_\star$, and composed of i.i.d. entries. Its average variance is given by
\[
\begin{split}
\frac{1}{n}\mathbb{E}\left[|\bm{A}^{\UH}\bm{\xi}|^2\right] &=\frac{1}{n}\Tr\mathbb{E}\left[\bm{A}^{\UH}\bm{\xi\xi}^{\UH}\bm{A}\right]\\
&=\mathbb{E}[|\Xi|^2]\cdot \frac{1}{n}\Tr\mathbb{E}\left[\bm{A}^{\UH}\bm{A}\right]\\
&=\mathbb{E}[|\Xi|^2],
\end{split}
\]
where the first two steps are from our heuristic assumptions that $\bm{\xi}$ is independent of $\bm{A}$ and consists of i.i.d. entries. The cosine similarity is then given by
\[
\begin{split}
\frac{\langle\bm{x}_\star,\bm{x}^{t+1}\rangle}{\|\bm{x}^{t+1}\|\|\bm{x}_\star\|} &\approx \frac{\alpha_{t+1}}{\sqrt{|\alpha_{t+1}|^2+(\delta-1)^2\mathbb{E}[|\Xi|^2]}}\\
&= \frac{\alpha_{t+1}}{\sqrt{|\alpha_{t+1}|^2+\frac{\delta-1}{\delta} \cdot \sigma_{t+1}^2 }},
\end{split}
\]
where the last step is from \eqref{Eqn:SE_heuristics} and \eqref{Eqn:SE_b}.

\subsection{Derivations of $\rho(W^t,W^{t+1})$}\label{App:w_corr}
For notational convenience, we rewrite \eqref{Eqn:AWGN_t} as
\BE\label{Eqn:app_AWGN2}
\bm{z}^t\overset{d}{=}\alpha_t\bm{z}_\star+\tilde{\bm{w}}^t,
\EE
where we defined $\tilde{\bm{w}}^t\Mydef \sigma_t\bm{w}^t$.
\textit{In this section, we assume that $\alpha_0\in\mathbb{R}$ and so $\alpha_t\in\mathbb{R}$ for $t\ge1$}; see \eqref{Eqn:SE_final_a}. To analyze the behavior of the \Alg algorithm, we need to understand the evolution of the correlation between two consecutive estimates, or equivalently the correlation between two noise terms $\tilde{\bm{w}}^t$ and $\tilde{\bm{w}}^{t+1}$:
\BE\label{Eqn:rho_w_tilde}
\rho({\bm{w}}^t,{\bm{w}}^{t+1})=\rho(\tilde{\bm{w}}^t,\tilde{\bm{w}}^{t+1})\Mydef\frac{\langle \tilde{\bm{w}}^t,\tilde{\bm{w}}^{t+1} \rangle}{\|\tilde{\bm{w}}^t\|\|\tilde{\bm{w}}^{t+1}\|}.
\EE
We assume that this empirical correlation term converges as $m,n\to\infty$, namely,
\[
\rho(\tilde{W}^t,\tilde{W}^{t+1})\Mydef \lim_{m\to\infty}\rho(\tilde{\bm{w}}^t,\tilde{\bm{w}}^{t+1}).
\]
Suppose that $\rho(\tilde{\bm{w}}^t,\tilde{\bm{w}}^{t+1})\to1$ as $t\to\infty$, and the variables $\alpha_t$ and $\sigma^2_t$ converge to nonzero constants. Then, from \eqref{Eqn:app_AWGN2}, it is straightforward to show that $\bm{z}^t$ converges under an appropriate order, namely, $\lim_{t\to\infty}\lim_{m\to\infty}\frac{1}{m}\|\bm{z}^t-\bm{z}^{t+1}\|^2=0$. This implies $\lim_{t\to\infty}\lim_{n\to\infty}\frac{1}{n}\|\bm{x}^t-\bm{x}^{t+1}\|\to0$ (where $\bm{x}^t\propto \bm{A}^{\UH}\bm{z}^t$), since
\[
\|\bm{A}^{\UH}\bm{z}^t-\bm{A}^{\UH}\bm{z}^{t+1}\|\le \|\bm{A}\|\cdot \|\bm{z}^t-\bm{z}^{t+1}\|=\|\bm{z}^t-\bm{z}^{t+1}\|.
\]
Similarly, we can also show that $\frac{1}{n}\|\bm{x}^t-\bm{x}^{t+s}\|\to0$ for any $s\ge1$, under the above convergence order. This implies the convergence of $\{\bm{x}^t\}$.

From Claim \ref{Lem:SE}, as $m,n\to\infty$, almost surely we have
\BE\label{Eqn:W_norm_prod}
\frac{1}{m}\|\tilde{\bm{w}}^t\|\|\tilde{\bm{w}}^{t+1}\|\to \frac{\sigma_t\sigma_{t+1}}{\delta}.
\EE
Hence, to understand the evolution of $\rho(\tilde{\bm{w}}^t,\tilde{\bm{w}}^{t+1})$, it suffices to understand the evolution of $\frac{1}{m}\langle \tilde{\bm{w}}^t,\tilde{\bm{w}}^{t+1}\rangle$. We assume that this empirical correlation term converges as $m,n\to\infty$, namely,
\[
\frac{1}{m}\langle \tilde{\bm{w}}^t,\tilde{\bm{w}}^{t+1} \rangle\to\mathbb{E}\left[(\tilde{W}^t)^* \tilde{W}^{t+1}\right],
\]
where $\tilde{W}^t\sim\mathcal{CN}(0,\sigma^2_t/\delta)$ and $\tilde{W}^{t+1}\sim\mathcal{CN}(0,\sigma^2_{t+1}/\delta)$. To this end, we will derive a recursive formula for calculating $\mathbb{E}\left[(\tilde{W}^t)^* \tilde{W}^{t+1}\right]$, namely, the map $\mathbb{E}\left[(\tilde{W}^t)^* \tilde{W}^{t+1}\right]\mapsto\mathbb{E}\left[(\tilde{W}^{t+1})^* \tilde{W}^{t+2}\right].$

First, given $\mathbb{E}[(\tilde{W}^t)^* \tilde{W}^{t+1}]$, we can calculate $\mathbb{E}[(P^t)^* P^{t+1}]$ as
\BE\label{Eqn:w2p}
\begin{split}
&\mathbb{E}[(P^t)^* P^{t+1}] \overset{(a)}{=} \mathbb{E}\left[ \left(\frac{G}{\Gs}-1\right)^2(Z^t)^* Z^{t+1} \right]\\
&=\mathbb{E}\left[ \left(\frac{G}{\Gs}-1\right)^2\left(\alpha_t{Z}_\star+\tilde{W}^t\right)^*\left(\alpha_{t+1}{Z}_\star+\tilde{W}^{t+1}\right) \right]\\
&\overset{(b)}{=}\alpha_t\alpha_{t+1}\mathbb{E}\left[ \left(\frac{G}{\Gs}-1\right)^2|{Z}_\star|^2 \right] +\mathbb{E}\left[\left(\frac{G}{\Gs}-1\right)^2\right]\mathbb{E}[(\tilde{W}^t)^*\tilde{W}^{t+1}]\\
&\overset{(c)}{=}\frac{\alpha_t\alpha_{t+1}}{\delta}\left( \psi_3^2(\mu)+1-2\psi_1(\mu) \right)+\left(\psi_2(\mu)-1\right)\mathbb{E}[(\tilde{W}^t)^*\tilde{W}^{t+1}],
\end{split}
\EE
where step (a) is from \eqref{Eqn:H_df_first}, step (b) is due to the independence between $(\tilde{W}^t,\tilde{W}^{t+1})$ and $Z_\star$, and step (c) follows from the definitions of $\psi_1$, $\psi_2$ and $\psi_3$ in \eqref{Eqn:psi_definitions}.

From \eqref{Eqn:xi_def}, we have
\[
\begin{split}
P^t = \alpha_p^t Z_\star+\Xi^t,
\end{split}
\]
where $\Xi^t$ is orthogonal to $Z_\star$ (i.e., $\mathbb{E}[(Z_\star)^*\Xi]=0$).
Hence,
\BE\label{Eqn:p2xi}
\mathbb{E}[(\Xi^t)^*\Xi^{t+1}]=\mathbb{E}[(P^t)^*P^{t+1}]-\frac{\alpha_t^p\alpha_{t+1}^p}{\delta}.
\EE

Finally, we establish the relationship between $\mathbb{E}[(\Xi^t)^*\Xi^{t+1}]$ and $\mathbb{E}[(\tilde{W}^{t+1})^*\tilde{W}^{t+2}]$. This step is similar to our derivations of the SE map in \eqref{Eqn:SE_heuristics}.  Note that
\[
\tilde{\bm{w}}^{t+1}=(\delta\bm{AA}^{\UH}-\bm{I})\bm{\xi}^t.
\]
Then,
\BE\label{Eqn:xi2w}
\begin{split}
\mathbb{E}[(\tilde{W}^{t+1})^*\tilde{W}^{t+2}]&\approx\frac{1}{m}(\tilde{\bm{w}}^{t+1})^\UH\tilde{\bm{w}}^{t+2} \\
&=\frac{1}{m}\Tr\left( (\delta\bm{AA}^{\UH}-\bm{I})\bm{\xi}^{t+1}(\bm{\xi}^t)^\UH (\delta\bm{AA}^{\UH}-\bm{I})\right)\\
 &=\frac{1}{m}\Tr\left( (\delta\bm{AA}^{\UH}-\bm{I})^2\bm{\xi}^{t+1}(\bm{\xi}^t)^\UH \right)\\
 &\overset{(a)}{\approx} \frac{1}{m}\Tr\left( (\delta\bm{AA}^{\UH}-\bm{I})^2\right)\cdot\frac{1}{m}\Tr( \bm{\xi}^{t+1}(\bm{\xi}^t)^\UH)\\
 &\approx (\delta-1)\cdot\mathbb{E}[(\Xi^t)^*\Xi^{t+1}],
\end{split}
\EE
where step (a) follows from a heuristic assumption that $(\bm{\xi}^t,\bm{\xi}^{t+1})$ are ``independent of'' $(\delta\bm{AA}^\UH-\bm{I})$. Combining \eqref{Eqn:w2p}, \eqref{Eqn:p2xi} and \eqref{Eqn:xi2w} yields the following (see equation on the top of next page).
\begin{figure*}[!t]
\normalsize
\BE\label{Eqn:W_recursive}
\begin{split}
\mathbb{E}[(\tilde{W}^{t+1})^*\tilde{W}^{t+2}]&=(\delta-1)\cdot\left(\frac{\alpha_t\alpha_{t+1}}{\delta}\cdot\left( \psi_3^2(\mu)+1-2\psi_1(\mu) \right)+\left(\psi_2(\mu)-1\right)\cdot\mathbb{E}[(\tilde{W}^t)^*\tilde{W}^{t+1}]-\frac{\alpha_t^p\alpha_{t+1}^p}{\delta}  \right)\\
&=(\delta-1)\cdot\left(\frac{\alpha_t\alpha_{t+1}}{\delta}\cdot\Big(\psi_3^2(\mu)-\psi_1^2(\mu)\Big)+\left(\psi_2(\mu)-1\right)\cdot\mathbb{E}[(\tilde{W}^t)^*\tilde{W}^{t+1}]\right).
\end{split}
\EE
\hrulefill
\vspace*{4pt}
\end{figure*}

Finally, we derive the initial condition for the above recursion (i.e., $\mathbb{E}[(\tilde{W}^0)^*\tilde{W}^1]$). Recall that we assumed $\bm{z}^0=\alpha_0\bm{z}_\star+\tilde{\bm{w}}^0$, where $\tilde{\bm{w}}^0\sim\mathcal{CN}(\mathbf{0},\sigma^2_0/\delta\bm{I})$ is independent of $\bm{z}_\star$. Similar to the above derivations, $\tilde{\bm{w}}^1$ can be expressed as
\[
\tilde{\bm{w}}^1 = (\delta\bm{AA}^\UH-\bm{I})\bm{\xi}^0.
\]
Then,
\[
\begin{split}
\mathbb{E}[(\tilde{W}^0)^*\tilde{W}^1] &\approx \frac{1}{m}(\bm{w}^0)^\UH\bm{w}^1\\
&=\frac{1}{m}\Tr\left(  (\delta\bm{AA}^\UH-\bm{I})\bm{\xi}^0 (\tilde{\bm{w}}^0)^\UH\right)\\
&\overset{(a)}{\approx}\frac{1}{m}\Tr\left(  (\delta\bm{AA}^\UH-\bm{I})\right)\cdot\frac{1}{m}\Tr\left(\bm{\xi}^0 (\tilde{\bm{w}}^0)^\UH\right)\\
&=0,
\end{split}
\]
where step (a) is due to the heuristic assumption that $(\bm{w}^0,\bm{\xi}^0)$ are ``independent of'' $\delta\bm{AA}^\UH-\bm{I}$ (similar to \eqref{Eqn:xi2w}).

\subsection{Convergence of $\rho(W^t,W^{t+1})$}\label{App:Cov_convergence}
Suppose that the phase transition condition in Claim \ref{Claim:correlation} is satisfied. Specifically, $\psi_1(\bar{\mu})>\frac{\delta}{\delta-1}$, where $\bar{\mu}$ is defined in \eqref{Eqn:mu_star_def}. Let $\bar{\mu}\in(0,\bar{\mu}]$ be the unique solution to $\psi_1(\hat{\mu})=\frac{\delta}{\delta-1}$. Consider an \Alg algorithm with $\mu=\hat{\mu}$. Under this choice, it is straightforward to show that
\[
\alpha_t=\alpha_0,\quad\forall t\ge1,
\]
where $\{\alpha_t\}_{t\ge1}$ is generated according to \eqref{Eqn:SE_final_a}. Assume that $\alpha_0\neq0$. (We have assumed $\alpha_0\in\mathbb{R}$ in the previous section.) The recursion in \eqref{Eqn:W_recursive} becomes
\[
\begin{split}
&\mathbb{E}[(\tilde{W}^{t+1})^*\tilde{W}^{t+2}]=\frac{(\delta-1)\alpha_0^2}{\delta}\cdot\Big(\psi_3^2(\hat{\mu})-\psi_1^2(\hat{\mu})\Big)\\
&+(\delta-1)\left(\psi_2(\hat{\mu})-1\right)\cdot\mathbb{E}[(\tilde{W}^t)^*\tilde{W}^{t+1}].
\end{split}
\]
From Lemma \ref{App:proof_PT_sec}, we have $\psi_2(\hat{\mu})<\frac{\delta}{\delta-1}$, and so
\[
(\delta-1)\left(\psi_2(\hat{\mu})-1\right)<1.
\]
Further, Lemma \ref{Lem:auxiliary1} shows that $\psi_3^2(\hat{\mu})-\psi_1^2(\hat{\mu})>0$.
Hence, $\mathbb{E}[(\tilde{W}^{t+1})^*\tilde{W}^{t+2}]$ converges:
\BE\label{Eqn:Cor_E_infty}
\lim_{t\to\infty} \mathbb{E}[(\tilde{W}^{t+1})^*\tilde{W}^{t+2}] =\frac{\alpha_0^2}{\delta}\cdot\frac{\psi_3^2(\hat{\mu})-\psi_1^2(\hat{\mu})}{\frac{\delta}{\delta-1}-\psi_2(\hat{\mu})}=\frac{1}{\delta}\cdot\lim_{t\to\infty}\sigma_t^2,
\EE
where the second equality can be easily verified from \eqref{Eqn:SE_final_b}. Combining \eqref{Eqn:rho_w_tilde}, \eqref{Eqn:W_norm_prod}, and \eqref{Eqn:Cor_E_infty} leads to
\[
\begin{split}
\lim_{t\to\infty} \rho({W}^t,{W}^{t+1}) &=\lim_{t\to\infty} \rho(\tilde{W}^t,\tilde{W}^{t+1})\\
& =\frac{\lim_{t\to\infty}\mathbb{E}[(\tilde{W}^{t+1})^*\tilde{W}^{t+2}]}{\sigma^2/\delta}=1.
\end{split}
\]

\section{Connections between the extreme eigenvalues of $\mathbf{D}$ and $\mathbf{E}(\mu)$}

\subsection{Proof of Lemma \ref{Lem:leading_eigen}}\label{App:eigen_max}
Note that we assumed $\hat{\mu}\in(0,1]$ in Lemma \ref{Lem:leading_eigen}.
As mentioned in Remark \ref{Rem:empri_div}, we could replace the average trace $\G$ in $\bm{E}(\hat{\mu})$ by its asymptotic limit $\Gs=\mathbb{E}[G(|Z_\star|,\hat{\mu})]$, and consider the following matrix instead:
\[
\begin{split}
\bm{E}(\hat{\mu})&= \left(\delta\bm{AA}^{\UH}-\bm{I}\right)\left(\frac{ \bm{G}}{\Gs}-
\bm{I}\right)\\
&=\delta\bm{AA}^{\UH}\left(\frac{ \bm{G}}{\Gs}-
\bm{I}\right)-\left(\frac{ \bm{G}}{\Gs}-
\bm{I}\right),
\end{split}
\]
where throughout this appendix we denote
\[
\bm{G}\Mydef(1/\hat{\mu}\bm{I}-\bm{T})^{-1}.
\]
We next prove that if the spectral radius of $\bm{E}(\hat{\mu})$ is one, i.e.,
\[
|\lambda_1(\bm{E}(\hat{\mu})) |=1,
\]
then
\[
\lambda_1(\bm{D})\le\Lambda(\hat{\mu})=\frac{1}{\hat{\mu}}-\frac{\delta-1}{\delta}\cdot\frac{1}{\mathbb{E}[G(|Z_\star|,\hat{\mu})]} .
\]

The characteristic polynomial of $\bm{E}(\hat{\mu})$ is given by
\BE\label{Eqn:chara_poly}
\begin{split}
  f (t) &=  \det (t \tmmathbf{I} - \mb{E}(\hat{\mu}))\\
  &= \det \left( t \tmmathbf{I}  +\left(\frac{ \bm{G}}{\Gs}-
\bm{I}\right) -\delta\bm{AA}^{\UH}\left(\frac{ \bm{G}}{\Gs}-
\bm{I}\right)\right)\\
  & = \det \left( (t - 1)  \tmmathbf{I} + \frac{\tmmathbf{G}}{\Gs}-
  \delta\bm{AA}^{\UH} \left( \frac{\tmmathbf{G}}{\Gs} - \tmmathbf{I} \right) \ \right).
  \end{split}
\EE
Under the condition that $|\lambda_1(\bm{E}(\hat{\mu}) |=1$, the characteristic polynomial
$f (t)$ would have no root in $t \in (1, \infty)$:
\BE\label{Eqn:f_zero}
f (t) \neq 0, \quad \forall t > 1.
\EE
(Here we focus on real-valued $t$, which will be enough for our purpose.) Note that the diagonal matrix $(t - 1)  \tmmathbf{I} +
\frac{\tmmathbf{G}}{\Gs}$ is invertible for $t>1$ since $\bm{G}$ is a diagonal matrix with positive entries:
\BE
 G_{i,i} = \frac{1}{1/\hat{\mu} -  \T (y_i)} > 0.
 \EE
Further, $\Gs=\mathbb{E}[G(|Z_\star|,\hat{\mu})]$ is also positive.
Hence, $f(t)$ in \eqref{Eqn:chara_poly} can be rewritten as \eqref{Eqn:top_temporary} (see the top of the next page),
\begin{figure*}[t]
\BE\label{Eqn:top_temporary}
\begin{split}
f(t)&=\det\left((t-1)\bm{I}+\frac{\bm{G}}{\Gs}-\delta\bm{AA}^{\UH}\left(\frac{\bm{G}}{\Gs}-\bm{I}\right)\right)\\
&\overset{(a)}{=}\det\left((t-1)\bm{I}+\frac{\bm{G}}{\Gs}\right)\cdot
\det\left[\bm{I}-\delta\bm{AA}^{\UH}\left(\frac{\bm{G}}{\Gs}-\bm{I}\right)\left((t-1)\bm{I}+\frac{\bm{G}}{\Gs}\right)^{-1}\right]\\
&\overset{(b)}{=}\det\left((t-1)\bm{I}+\frac{\bm{G}}{\Gs}\right)\cdot\underbrace{\det\left[\bm{I}-\delta\bm{A}^{\UH}\left(\frac{\bm{G}}{\Gs}-\bm{I}\right)\left((t-1)\bm{I}+\frac{\bm{G}}{\Gs}\right)^{-1}\bm{A}\right]}_{g(t)},
\end{split}
\EE
\hrulefill
\vspace*{4pt}
\end{figure*}
where step (a) is from the identity $\det (\bm{PQ}) = \det (\bm{P}) \det
(\bm{Q})$ and step (b) from $\det (\tmmathbf{I} - \bm{PQ}) = \det
(\tmmathbf{I} - \tmmathbf{{QP}})$. Recall from \eqref{Eqn:f_zero} that $f (t) \neq 0$
for $t > 1$. Hence,
\BE
g (t) \neq 0, \quad t > 1.
\EE
Define
\BE\label{Eqn:B_def}
\tmmathbf{B} (t) \Mydef \delta \tmmathbf{A}^{\UH} \left( (t - 1)  \tmmathbf{I}
   + \frac{\tmmathbf{G}}{\Gs} \right)^{- 1} \left( \frac{\tmmathbf{G}}{\Gs}^{} -
   \tmmathbf{I} \right) \tmmathbf{A}, \quad t > 1.
   \EE
The condition $g (t) = \det (\tmmathbf{I} - \tmmathbf{B} (t))\neq0$ for $t > 1$ is equivalent to
\[ \lambda_i (\tmmathbf{B} (t)) \neq 1, \quad \forall i=1,\ldots,m,\quad\text{and}\quad t>1,
\]
where $\lambda_i (\tmmathbf{B} (t))$ denotes the $i^{\tmop{th}}$ largest
eigenvalue of $\tmmathbf{B} (t)$. It is straightforward to see that the entries of $\bm{B}(t)$ converge to zero as $t\to\infty$:
\BE\label{Eqn:B_eigen_neq1}
\lim_{t \rightarrow \infty} \tmmathbf{B} (t) = \tmmathbf{0} .
\EE
Hence, all the eigenvalues of $\tmmathbf{B} (t)$ converge to zero as $t \rightarrow \infty$.
At this point, we note that the eigenvalues of a matrix are continuous functions of it entries. (This observation has been used in \cite{Mondelli2017}, but for a different purpose.) Hence, as
$t$ varies in $[1,\infty]$, $\lambda_1 (\tmmathbf{B}
(t))$ can take any value in $[0, \lambda_1 (\tmmathbf{B} (1))]$. In other words, for any $C \in [0, \lambda_1 (\tmmathbf{B} (1))]$, there exists
  some $t_{\star} \in [1, \infty]$ such that $\lambda_1 (\tmmathbf{B}
  (t_{\star})) = C$. Hence, \eqref{Eqn:B_eigen_neq1} implies
\BE
 \lambda_1 (\tmmathbf{B} (t)) \neq 1, \quad\forall t > 1.
\EE
This means that the interval $[0, \lambda_1 (\tmmathbf{B} (1)))$ does not contain
$1$. Therefore, we must have
\BE\label{Eqn:B_1_lambda}
 \lambda_1 (\tmmathbf{B} (1)) \leqslant 1.
 \EE
From \eqref{Eqn:B_def} when $t = 1$, $\tmmathbf{B}
(t)$ simplifies into
\[
\begin{split}
  \tmmathbf{B} (1) &= \left.\left[ \delta \tmmathbf{A}^H \left( (t - 1)
  \tmmathbf{I} + \frac{\tmmathbf{G}}{\Gs} \right)^{- 1} \left(
  \frac{\tmmathbf{G}}{\Gs} - \tmmathbf{I} \right) \tmmathbf{A} \right]\right |_{t =
  1} \nobracket\\
  & =\delta \tmmathbf{} \tmmathbf{A}^{\UH} \left( \frac{\tmmathbf{G}}{\Gs}
  \right)^{- 1} \left( \frac{\tmmathbf{G}}{\Gs} - \tmmathbf{I} \right)
  \tmmathbf{A}\\
  & = \delta \tmmathbf{A}^{\UH} [\tmmathbf{I} - \Gs\tmmathbf{G}^{-
  1}] \tmmathbf{A}\\
  & =\delta\left(1-\frac{\bar{G}}{\hat{\mu}}\right)\bm{I}+\delta\bar{G}\bm{A}^{\UH}\bm{TA},
  \end{split}
\]
where the last step follows from the definition $\bm{G}=(1/\hat{\mu}\bm{I}-\bm{T})^{-1}$, and the assumption $\tmmathbf{A}^{\UH} \tmmathbf{A} = \tmmathbf{I}$. Hence, $\lambda_1 (\tmmathbf{B} (1)) \leqslant 1$ (see
\eqref{Eqn:B_1_lambda}) readily leads to
\[
\begin{split}
  \lambda_1 (\tmmathbf{A}^{\UH} \tmmathbf{T} \tmmathbf{A}) & \leqslant  \frac{1 -
  (\delta - \delta \Gs/\hat{\mu})}{\delta \Gs}\\
  & = \frac{1}{\hat{\mu}} -\frac{\delta - 1}{\delta} \cdot
  \frac{1}{\Gs}  =\Lambda(\hat{\mu}),
\end{split}
\]
where the last step is from \eqref{Eqn:Lambda_def}. Also, we assumed $1$ is an eigenvalue of $\bm{E}(\mu)$. By Lemma \ref{Lem:stationary}, this implies that $\Lambda(\hat{\mu})$ is an eigenvalue of $\bm{D}$. Summarizing, we have
\[
\lambda_1(\bm{A}^{\UH}\bm{TA})=\Lambda(\hat{\mu}).
\]
\subsection{Proof of Lemma \ref{Lem:minimum_eigen}}\label{App:eigen_min}

We only need to make a few minor changes to the proof in Appendix \ref{App:eigen_max}. When $1/\hat{\mu}\in(-\infty,T_{\min})$, we have
\[
G(y,\hat{\mu})=\frac{1}{\hat{\mu}^{-1} - \T (y)}<0,\quad \forall y\ge0,\ \frac{1}{\hat{\mu}}\in(-\infty,T_{\min}).
\]
Hence,
\[
\bar{G}=\mathbb{E}\left[\frac{1}{1/\hat{\mu}-\T(|Z_\star|)}\right]<0,\quad  \frac{1}{\hat{\mu}}\in(-\infty,T_{\min}).
\]
(On the other hand, $G(y,\hat{\mu})>0$ when $1/\hat{\mu}\in[1,\infty)$.)
Since $G(y,\hat{\mu})<0$ for arbitrary $y\ge0$, the diagonal matrix $\bm{G}/\Gs$ has positive entries and is invertible. Following the proof in Appendix \ref{App:eigen_max}, we still have
\BS\label{Eqn:App_D_min1}
\BE
\lambda_1(\bm{B}(1))\le1,
\EE
where
\BE\label{Eqn:App_D_min2}
\bm{B}(1)=(\delta-\delta\bar{G}/\hat{\mu})+\delta\bar{G}\bm{A}^{\UH}\bm{TA}.
\EE
\ES
Different from Appendix \ref{App:eigen_max}, here we have $\bar{G}<0$, and from \eqref{Eqn:App_D_min1} and \eqref{Eqn:App_D_min2} we have
\[
\lambda_n(\bm{A}^{\UH}\bm{TA})\ge \frac{1}{\hat{\mu}}  - \frac{\delta - 1}{\delta} \cdot
  \frac{1}{\Gs}=\Lambda(\hat{\mu}).
\]
Also, we assumed $1$ is an eigenvalue of $\bm{E}(\mu)$. By Lemma \ref{Lem:stationary}, this implies that $\Lambda(\hat{\mu})$ is an eigenvalue of $\bm{D}$. Summarizing, we have
\[
\lambda_n(\bm{A}^{\UH}\bm{TA})=\Lambda(\hat{\mu}).
\] 
\section{Auxiliary Lemmas}\label{App:aux}
In this appendix, we collect some results regarding the functions $\psi_1$, $\psi_2$ and $\psi_3$ defined in \eqref{Eqn:psi_definitions}.

\begin{lemma}[Auxiliary lemma]\label{Lem:auxiliary1}
For any $\mu\in[0,1]$, we have
\BE
\psi_3(\mu)\ge\psi_1(\mu),
\EE
Further, equality only holds at $\mu=0$.
\end{lemma}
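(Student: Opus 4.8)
The plan is to reduce the claimed inequality to a single application of Cauchy--Schwarz. For $\mu\in(0,1]$ we have $G(|Z_\star|,\mu)>0$, hence $\mathbb{E}[G]>0$ and both $\psi_1(\mu),\psi_3(\mu)\ge0$. Clearing the common positive denominator $\mathbb{E}[G]$ and squaring, the inequality $\psi_3(\mu)\ge\psi_1(\mu)$ is therefore equivalent to
\[
\mathbb{E}\!\left[\delta|Z_\star|^2G^2\right]\ge\left(\mathbb{E}\!\left[\delta|Z_\star|^2G\right]\right)^2 .
\]
The key arithmetic fact is that $\mathbb{E}[\delta|Z_\star|^2]=1$, since $Z_\star\sim\mathcal{CN}(0,1/\delta)$ gives $\mathbb{E}[|Z_\star|^2]=1/\delta$. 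Writing $W\Mydef\delta|Z_\star|^2\ge0$, the target becomes $\mathbb{E}[WG^2]\ge(\mathbb{E}[WG])^2$, which is exactly Cauchy--Schwarz for the pair $X\Mydef\sqrt{W}$, $Y\Mydef\sqrt{W}\,G$: one has $(\mathbb{E}[WG])^2=(\mathbb{E}[XY])^2\le\mathbb{E}[X^2]\,\mathbb{E}[Y^2]=\mathbb{E}[W]\,\mathbb{E}[WG^2]=\mathbb{E}[WG^2]$, using $\mathbb{E}[W]=1$ in the last step.

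For the equality statement I would use the equality case of Cauchy--Schwarz, namely that equality forces $Y$ to be an almost-sure constant multiple of $X$, i.e.\ $\sqrt{W}\,G=c\sqrt{W}$ a.s.\ for some constant $c$. Since $|Z_\star|>0$ almost surely, this is equivalent to $G(|Z_\star|,\mu)$ being a.s.\ constant. For $\mu\in(0,1]$ the map $t\mapsto 1/(1/\mu-t)$ is strictly increasing in $t$, so $G$ is a.s.\ constant if and only if $\mathcal{T}(|Z_\star|)$ is a.s.\ constant; as $\mathcal{T}$ is a genuinely non-constant processing function and $|Z_\star|$ has a density, this is excluded, and the inequality is strict for every $\mu\in(0,1]$.

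It remains to handle the boundary point $\mu=0$, where $G\to0$ and $\psi_1,\psi_3$ take the indeterminate form $0/0$. Here I would normalize by $\mu$, using $G/\mu=1/(1-\mu\,\mathcal{T}(|Z_\star|))\to1$ as $\mu\to0$; dividing both numerator and denominator of $\psi_1$ and $\psi_3$ by the relevant power of $\mu$ and passing to the limit (justified by dominated convergence via the boundedness of $\mathcal{T}$ and the finite moments of $|Z_\star|$) yields $\psi_1(0)=\mathbb{E}[W]=1$ and $\psi_3(0)=\sqrt{\mathbb{E}[W]}=1$, so equality does hold at $\mu=0$. The main obstacle is the equality analysis rather than the inequality itself: one must argue carefully that it is the non-degeneracy of $\mathcal{T}(|Z_\star|)$---and not any feature of $\mu$ on $(0,1]$---that produces the strict inequality, and separately verify the $0/0$ limit at $\mu=0$ with a clean dominated-convergence argument.
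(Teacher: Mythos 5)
Your proposal is correct and uses essentially the same argument as the paper: the same reduction to $\mathbb{E}[\delta|Z_\star|^2G^2]\ge\left(\mathbb{E}[\delta|Z_\star|^2G]\right)^2$ followed by the identical Cauchy--Schwarz split into $\sqrt{\delta}|Z_\star|$ and $\sqrt{\delta}|Z_\star|G$. In fact your treatment is more complete than the paper's, whose proof stops at the inequality and never addresses the equality claim or the $\mu=0$ endpoint; your observation that strictness requires $\mathcal{T}(|Z_\star|)$ to be non-degenerate (a constant $\mathcal{T}$ gives $\psi_1\equiv\psi_3\equiv1$) is a genuine, correct refinement of the stated lemma.
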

\begin{IEEEproof}
From the definitions in \eqref{Eqn:psi_definitions}, it's equivalent to prove the following:
\[
\mathbb{E}\left[\delta|{Z}_{\star}|^2G^2(Y,\mu)\right] \ge \left(\mathbb{E}\left[\delta|{Z}_{\star}|^2G(Y,\mu)\right]\right)^2.
\]
This follows from the Cauchy-Schwarz inequality:
\[
\begin{split}
\mathbb{E}\left[\delta |{Z}_{\star}|^2G(Y,\mu) \right]&=\mathbb{E}\left[ \delta |{Z}_{\star}|G(Y,\mu)\cdot|{Z}_{\star}| \right]\\
&\le\sqrt{\mathbb{E}\left[ \delta|{Z}_{\star}|^2G^2(Y,\mu) \right]\cdot\mathbb{E}\left[\delta |{Z}_{\star}|^2\right]  }\\
&=\sqrt{\mathbb{E}\left[ \delta|{Z}_{\star}|^2G^2(Y,\mu) \right] },
\end{split}
\]
where the last equality is due to $\mathbb{E}\left[ \delta|{Z}_{\star}|^2\right]=1$.
\end{IEEEproof}

\begin{lemma}[Chebyshev's association inequality \cite{Boucheron2013}]\label{Lem:association}
Let $f$ and $g$ be nondecreasing real-valued functions. If $A$ is a real-valued random variable and $B$ is a nonnegative random variable, then
\[
\mathbb{E}[B]\cdot\mathbb{E}[Bf(A)g(A)]\ge \mathbb{E}[Bf(A)]\cdot \mathbb{E}[Bg(A)].
\]
\end{lemma}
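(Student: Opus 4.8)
The plan is to prove the inequality by the classical symmetrization (``two independent copies'') argument, adapted to carry the nonnegative weight $B$ through the computation. First I would introduce an independent copy $(A',B')$ of the pair $(A,B)$, having the same joint law as $(A,B)$ and independent of it, and study the single quantity
\[
Q \Mydef \mathbb{E}\big[\, B\,B'\,\big(f(A)-f(A')\big)\big(g(A)-g(A')\big)\,\big].
\]
The first key step is to observe that the integrand is nonnegative pointwise: because $f$ and $g$ are \emph{both} nondecreasing functions of the \emph{same} argument, the factors $f(A)-f(A')$ and $g(A)-g(A')$ always share a common sign (both $\ge 0$ when $A\ge A'$, both $\le 0$ when $A\le A'$), so their product is $\ge 0$; multiplying by $B,B'\ge 0$ preserves this. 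Hence $Q\ge 0$.

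The second step is to expand $Q$ and evaluate the four cross terms using independence of the two copies together with the equality of their laws. Writing $(f(A)-f(A'))(g(A)-g(A'))=f(A)g(A)-f(A)g(A')-f(A')g(A)+f(A')g(A')$ and assigning $B,f(A),g(A)$ to the first copy and $B',f(A'),g(A')$ to the second, I would match
\[
\mathbb{E}[B B' f(A)g(A)] = \mathbb{E}[B]\,\mathbb{E}[Bf(A)g(A)],\qquad
\mathbb{E}[B B' f(A)g(A')] = \mathbb{E}[Bf(A)]\,\mathbb{E}[Bg(A)],
\]
and symmetrically for the two remaining terms, where I replace $\mathbb{E}[B'g(A')]$ by $\mathbb{E}[Bg(A)]$ etc.\ using that $(A',B')\overset{d}{=}(A,B)$. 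Collecting the terms gives the identity
\[
Q = 2\,\mathbb{E}[B]\,\mathbb{E}[Bf(A)g(A)] - 2\,\mathbb{E}[Bf(A)]\,\mathbb{E}[Bg(A)],
\]
and combining this with $Q\ge 0$ from the first step yields exactly
\[
\mathbb{E}[B]\cdot\mathbb{E}[Bf(A)g(A)]\ge \mathbb{E}[Bf(A)]\cdot \mathbb{E}[Bg(A)],
\]
as claimed.

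The step that requires the most care is the bookkeeping in the expansion: one must consistently attribute each factor to the correct copy so that the expectation factorizes under independence, and then invoke equality of laws to rewrite the second-copy moments as first-copy moments. The only genuinely structural ingredient beyond this routine algebra is the comonotonicity of $f$ and $g$ as functions of the same variable $A$, which is precisely what forces the two difference factors to have a common sign and is essential for the nonnegativity $Q\ge 0$. Throughout I would assume that the moments $\mathbb{E}[B]$, $\mathbb{E}[Bf(A)g(A)]$, $\mathbb{E}[Bf(A)]$ and $\mathbb{E}[Bg(A)]$ are finite, so that all the manipulations above are justified; in the intended application $B=\delta|Z_\star|^2$ and $f(A)=g(A)=1/(\mu^{-1}-A)$ with $A=\mathcal{T}$ bounded, so finiteness is immediate and $f,g$ are increasing for $\mu\in(0,1]$.
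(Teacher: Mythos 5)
Your proof is correct. Note that the paper does not actually prove this lemma --- it is stated as a quoted result from \cite{Boucheron2013} --- and your two-independent-copies symmetrization argument is precisely the standard proof given in that reference, so you have in effect supplied the omitted textbook argument rather than an alternative to anything in the paper. Both of your key steps are sound: the pointwise nonnegativity of $B B'\bigl(f(A)-f(A')\bigr)\bigl(g(A)-g(A')\bigr)$ follows exactly as you say from the comonotonicity of $f$ and $g$ in the same variable, and the expansion of the four cross terms under independence and equality of laws gives the identity $Q = 2\,\mathbb{E}[B]\,\mathbb{E}[Bf(A)g(A)] - 2\,\mathbb{E}[Bf(A)]\,\mathbb{E}[Bg(A)]$; your remark on finiteness of the moments is also the right caveat, and it indeed holds in the paper's applications since $\mathcal{T}$ is bounded there.
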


\begin{lemma}[Monotonicity of $\psi_1$ and $\psi_2$]\label{Lem:T_increase}
The function $\psi_2$ in \eqref{Eqn:psi_definitions} is strictly increasing on $(0,1)$. Further, if $\T(y)$ is an increasing function, then $\psi_1(\mu)$ is strictly increasing on $(0,1)$.
\end{lemma}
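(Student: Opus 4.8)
The plan is to prove both monotonicity statements by differentiating in $\mu$ and reducing the sign of the derivative to a moment inequality. The computational engine will be the identity $\partial_\mu G(|Z_\star|,\mu) = G^2/\mu^2$, obtained by differentiating $G=(1/\mu-\T)^{-1}$. Note that for $\mu\in(0,1)$ we have $1/\mu>1\ge\T$, so $0<G\le \mu/(1-\mu)$ is bounded on any compact subinterval of $(0,1)$; since $|Z_\star|$ has all moments finite, dominated convergence will justify differentiating $\mathbb{E}[G^k]$ and $\mathbb{E}[\delta|Z_\star|^2 G^k]$ under the expectation. In particular $G$ is strictly increasing in $\mu$.

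For $\psi_2$ I would differentiate $\log\psi_2=\log\mathbb{E}[G^2]-2\log\mathbb{E}[G]$. Using $\partial_\mu G=G^2/\mu^2$ gives
\[
\frac{\psi_2'(\mu)}{\psi_2(\mu)}=\frac{2}{\mu^2}\left(\frac{\mathbb{E}[G^3]}{\mathbb{E}[G^2]}-\frac{\mathbb{E}[G^2]}{\mathbb{E}[G]}\right),
\]
so $\psi_2'>0$ is equivalent to $\mathbb{E}[G]\,\mathbb{E}[G^3]>(\mathbb{E}[G^2])^2$. This is precisely the Cauchy--Schwarz inequality applied to $G^{1/2}$ and $G^{3/2}$, and crucially it requires no monotonicity of $\T$ --- which is why the $\psi_2$ claim can hold for an arbitrary $\T$. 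Equality in Cauchy--Schwarz forces $G$ to be a.s. constant, i.e. $\T(|Z_\star|)$ a.s. constant; for any non-constant processing function the inequality is strict, yielding strict monotonicity.

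For $\psi_1$, writing $B\Mydef\delta|Z_\star|^2\ge0$ and differentiating $\log\psi_1=\log\mathbb{E}[BG]-\log\mathbb{E}[G]$ yields
\[
\frac{\psi_1'(\mu)}{\psi_1(\mu)}=\frac{1}{\mu^2}\left(\frac{\mathbb{E}[BG^2]}{\mathbb{E}[BG]}-\frac{\mathbb{E}[G^2]}{\mathbb{E}[G]}\right),
\]
so $\psi_1'>0$ reduces to $\mathbb{E}[G]\,\mathbb{E}[BG^2]>\mathbb{E}[BG]\,\mathbb{E}[G^2]$. I would obtain this from the Chebyshev association inequality (Lemma \ref{Lem:association}) with nonnegative weight $B_{\mathrm{wt}}=G$ and the two factors $f(|Z_\star|)=\delta|Z_\star|^2$ and $g(|Z_\star|)=G(|Z_\star|,\mu)$. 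Both are nondecreasing in $|Z_\star|$: the first obviously, and the second because $\T$ is assumed increasing and $G$ is increasing in $\T$. The lemma then gives $\mathbb{E}[G]\,\mathbb{E}[G f g]\ge\mathbb{E}[G f]\,\mathbb{E}[G g]$, which is exactly the desired inequality; strictness follows since $f=\delta|Z_\star|^2$ is strictly increasing and $|Z_\star|$ is non-degenerate, so the covariance (under the $G$-reweighted law) of the two factors is strictly positive as soon as $g$ is non-constant.

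The main obstacle is conceptual bookkeeping rather than a hard estimate: one must recognize that the two parts require different tools. The $\psi_2$ bound must \emph{avoid} invoking monotonicity of $\T$ (none is assumed there), which rules out the association inequality and forces the Cauchy--Schwarz route; conversely, the $\psi_1$ bound genuinely needs $g=G$ nondecreasing, which is exactly where the hypothesis that $\T$ is increasing enters, supplied by the association inequality. The remaining technical care is (i) justifying the interchange of $\partial_\mu$ and $\mathbb{E}$ via the uniform boundedness of $G$ on compact $\mu$-subintervals noted above, and (ii) tracking the equality cases to upgrade ``$\ge$'' to strict ``$>$'', which is immediate from the continuous, full-support distribution of $|Z_\star|$.
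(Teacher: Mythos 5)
Your proof is correct and follows essentially the same route as the paper: differentiate using $\partial_\mu G = G^2/\mu^2$, reduce each claim to the moment inequalities $\mathbb{E}[G]\,\mathbb{E}[G^3] > \left(\mathbb{E}[G^2]\right)^2$ (for $\psi_2$) and $\mathbb{E}[G]\,\mathbb{E}[\delta|Z_\star|^2 G^2] > \mathbb{E}[\delta|Z_\star|^2 G]\,\mathbb{E}[G^2]$ (for $\psi_1$), and settle the $\psi_1$ case exactly as the paper does, via the Chebyshev association inequality with weight $G$ and factors $\delta|Z_\star|^2$ and $G$ (the latter nondecreasing precisely because $\T$ is increasing). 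The only cosmetic differences are your use of the logarithmic derivative and of Cauchy--Schwarz for the $\psi_2$ inequality, where the paper instead applies the association inequality with $A=B=G$ and $f=g=\mathrm{id}$ (which yields the identical bound and likewise needs no monotonicity of $\T$); your extra care with differentiation under the expectation and with equality cases (noting strictness fails for constant $\T$) is a minor strengthening over the paper's argument.
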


\begin{IEEEproof}
Let $Z_\star\sim\mathcal{CN}(0,1/\delta)$. For brevity, we denote $G(|Z_\star|,\mu)$ as $G$. Further, the derivative $G'(|Z_\star|,\mu)$ (or simply $G'$) is with respect to $\mu$. From the definition of $\psi_2$ in \eqref{Eqn:psi_definitions}, it is straightforward to show that
\BE
\begin{split}
\psi_2'(\mu) &= 2\cdot\frac{\mathbb{E}[GG']\cdot\mathbb{E}[G]-\mathbb{E}[G^2]\cdot\mathbb{E}[G']}{\left(\mathbb{E}[G]\right)^3}\\
&\overset{(a)}{=}\frac{2}{\mu^2}\cdot\frac{\mathbb{E}[G^3]\cdot\mathbb{E}[G]-\mathbb{E}\left([G^2]\right)^2}{\left(\mathbb{E}[G]\right)^3},
\end{split}
\EE
where step (a) follows from
\[
\begin{split}
G'(Y,\mu)=\frac{1}{(1-\mu\T(Y))^2}=\frac{G^2}{\mu^2}.
\end{split}
\]
Hence, to prove the monotonicity of $\psi_2$, it remains to prove
\[
\mathbb{E}[G^3]\cdot\mathbb{E}[G]>\mathbb{E}\left([G^2]\right)^2,\quad \forall \mu\in(0,1),
\]
which is a direct consequence of Lemma \ref{Lem:association} (with $A=B=G$, $f(A)=g(A)=A$).

We next prove that $\psi_1$ is also an increasing function under the additional assumption that $\T(y)$ is an increasing function of $y$.
The derivative of $\psi_1$ is given by (cf. \ref{Eqn:psi_definitions}):
\BS\label{Eqn:App_Lu_2}
\begin{align}
\psi_1'(\mu) &=\frac{\mathbb{E}\left[\delta|Z_{\star}|^2G'\right]\cdot\mathbb{E}\left[G\right]-\mathbb{E}\left[\delta|Z_{\star}|^2G\right]\cdot\mathbb{E}\left[G'\right]}{\left(\mathbb{E}[G]\right)^2}\\
&=\frac{1}{\mu^2}\cdot\frac{\mathbb{E}[\delta|Z_\star|^2G^2]\cdot\mathbb{E}[G]-\mathbb{E}[\delta|Z_\star|^2G]\cdot\mathbb{E}[G^2]}{\left(\mathbb{E}[G]\right)^2}
\end{align}
\ES
Hence, we only need to prove
\[
\mathbb{E}[\delta|Z_\star|^2G^2]\cdot\mathbb{E}[G]>\mathbb{E}[\delta|Z_\star|^2G]\cdot\mathbb{E}[G^2]
\]
under the condition that $\T(y)$ is an increasing function. Again, this inequality follows from Lemma \ref{Lem:association} by letting $B = G$, $A=|Z_\star|$, $f(A)=A^2$ and $g(A)=G(A,\mu)$. Note that in this case $G(|Z_\star|,\mu)>0$ is an increasing function is $|Z_\star|$ for any $\mu\in(0,1)$. Hence, both $f(\cdot)$ and $g(\cdot)$ are increasing functions and the conditions in Lemma \ref{Lem:association} are satisfied.
\end{IEEEproof}

\begin{lemma}\label{Lem:psi1_psi2}
When $\T(y)=\T_\opt(y)\Mydef1-\frac{1}{\delta y^2}$, we have
\BE\label{Eqn:App_Lu_4}
\psi_1(\mu)>\psi_2(\mu),\quad\text{for }\mu\in(0,1).
\EE
\end{lemma}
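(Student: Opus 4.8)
The plan is to reduce the desired strict inequality to a single Cauchy--Schwarz estimate (equivalently, an instance of the association inequality in Lemma~\ref{Lem:association}) by exploiting the special algebraic structure of $\T_\opt$. Write $W \Mydef \delta|Z_\star|^2$; since $Z_\star\sim\mathcal{CN}(0,1/\delta)$ we have $\E[W]=1$, and for $\T=\T_\opt$ the function $G$ takes the closed form $G = \mu W/\big((1-\mu)W+\mu\big)$ (cf.~\eqref{Eqn:Opt_3}), which is a strictly increasing, non-constant function of $W$ for $\mu\in(0,1)$. Multiplying \eqref{Eqn:App_Lu_4} through by $(\E[G])^2>0$ and using the definitions in \eqref{Eqn:psi_definitions}, the claim $\psi_1(\mu)>\psi_2(\mu)$ is equivalent to
\[
\E[WG]\,\E[G] > \E[G^2].
\]

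First I would record two identities valid for every $\mu\in(0,1)$, both obtained by rearranging the defining relation $G\big((1-\mu)W+\mu\big)=\mu W$, i.e.
\[
\mu W - (1-\mu)WG = \mu G.
\]
Taking expectations and using $\E[W]=1$ gives $\E[G]=1-\tfrac{1-\mu}{\mu}\E[WG]$; multiplying the same relation by $G$ first and then taking expectations gives $\E[G^2]=\E[WG]-\tfrac{1-\mu}{\mu}\E[WG^2]$. These are precisely the general-$\mu$ versions of the identities \eqref{Eqn:Opt_7a}--\eqref{Eqn:Opt_7b}.

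Substituting both identities into the difference of the two sides of the target inequality, the terms $\E[WG]$ cancel and I would obtain the clean expression
\[
\E[WG]\,\E[G]-\E[G^2]=\frac{1-\mu}{\mu}\Big(\E[WG^2]-\big(\E[WG]\big)^2\Big).
\]
Since $\tfrac{1-\mu}{\mu}>0$ on $(0,1)$, it remains only to show $\E[WG^2]>(\E[WG])^2$. This is immediate from Cauchy--Schwarz, $(\E[WG])^2=\big(\E[\sqrt W\cdot\sqrt W\,G]\big)^2\le \E[W]\,\E[WG^2]=\E[WG^2]$, where the last equality uses $\E[W]=1$; equivalently it is Lemma~\ref{Lem:association} with $B=W$, $A=W$ and $f(A)=g(A)=G$. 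The inequality is \emph{strict} because equality in Cauchy--Schwarz would force $G$ to be almost surely constant, contradicting that $G$ is strictly increasing in the non-degenerate variable $W$.

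The main (and essentially only) obstacle is spotting the two $\T_\opt$-specific identities: once the problem is rephrased in terms of $W$ and these identities are in hand, the proof collapses to a one-line Cauchy--Schwarz argument. A direct comparison of $\psi_1$ and $\psi_2$ without using the closed form of $G_\opt$ would instead require controlling $\E[WG]\,\E[G]-\E[G^2]$ with no obvious monotonicity structure, which appears considerably more delicate.
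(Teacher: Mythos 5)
Your proof is correct, and it takes a genuinely different route from the paper's. The paper proves the equivalent inequality $\mathbb{E}[\delta|Z_\star|^2 G]\,\mathbb{E}[G]>\mathbb{E}[G^2]\,\mathbb{E}[\delta|Z_\star|^2]$ by a single application of Chebyshev's association inequality (Lemma~\ref{Lem:association}) with the weight $B=G$, the variable $A=|Z_\star|$, and the pair $f(A)=\delta A^2/G(A,\mu)$, $g(A)=G(A,\mu)$; the role of $\T_\opt$ is to make $f(A)=\frac{1-\mu}{\mu}\delta A^2+1$ affine and increasing. You instead exploit the same structural fact (that $W/G$ is affine in $W=\delta|Z_\star|^2$) through the algebraic relation $\mu W-(1-\mu)WG=\mu G$, i.e.\ the general-$\mu$ version of \eqref{Eqn:Opt_5}--\eqref{Eqn:Opt_7} already used in Appendix~\ref{App:optimality}, and you convert the gap into the exact identity $\mathbb{E}[WG]\,\mathbb{E}[G]-\mathbb{E}[G^2]=\frac{1-\mu}{\mu}\big(\mathbb{E}[WG^2]-(\mathbb{E}[WG])^2\big)$, after which only Cauchy--Schwarz with weight $W$ is needed. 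Your version buys two things: an explicit formula quantifying the gap $\psi_1-\psi_2$ (up to the factor $(\mathbb{E}[G])^2$), and a clean treatment of \emph{strictness} via the equality case of Cauchy--Schwarz ($G$ would have to be a.s.\ constant, impossible since it is strictly increasing in the non-degenerate variable $W$). The paper's one-step argument is shorter, but as written it invokes an association inequality stated only with ``$\ge$'', so the strict inequality claimed in the lemma is technically left unjustified there --- a small gap your argument avoids.
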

\begin{IEEEproof}
From the definitions in \eqref{Eqn:psi_definitions}, \eqref{Eqn:App_Lu_4} is equivalent to
\BE\label{Eqn:App_Lu_5}
\mathbb{E}\left[\delta|{Z}_{\star}|^2G(Y,\mu)\right]\cdot\mathbb{E}[G(Y,\mu)]>\mathbb{E}[G^2(Y,\mu)],\quad\text{for }\mu\in(0,1),
\EE
where $Y=|Z_\star|$ and $Z_\star\sim\mathcal{CN}(0,1/\delta)$.
Since $\mathbb{E}[\delta|{Z}_{\star}|^2]=1$, we can write \eqref{Eqn:App_Lu_5} as
\BE\label{Eqn:App_Lu_6}
\mathbb{E}\left[\delta|{Z}_{\star}|^2G(Y,\mu)\right]\cdot\mathbb{E}[G(Y,\mu)]>\mathbb{E}[G^2(Y,\mu)]\cdot\mathbb{E}[\delta|{Z}_{\star}|^2].
\EE
We will use Lemma \ref{Lem:association} to prove \eqref{Eqn:App_Lu_6}. Specifically, we apply Lemma \ref{Lem:association} by specifying the random variables $A$ and $B$, and the functions $f$ and $g$ as follows:
\[
\begin{split}
A&=|Z_{\star}|,\\
B&=G(Y,\mu),\\
f(A)&=\delta A^2\cdot \frac{1}{G(A,\mu)},\\
g(A)&= G(A,\mu).
\end{split}
\]
It is easy to show that $B>0$ and $g(\cdot)$ is an increasing function; it only remains to prove that $f(\cdot)$ is an increasing function. From the definitions $G(y,\mu)=\left(1/\mu-\T(y)\right)^{-1}$ and $\T(y)=1-1/(\delta y^2)$, we have
\[
\begin{split}
f(A)&=\delta A^2\cdot \frac{1}{G(A,\mu)}\\
&=\delta A^2\cdot\left( \frac{1}{\mu}-\T(A) \right)\\
&=\delta A^2\cdot\left( \frac{1}{\mu}-\left( 1-\frac{1}{\delta A^2} \right) \right)\\
&=\frac{1-\mu}{\mu}\delta A^2+1,
\end{split}
\]
which is an increasing function of $A$ (which is defined on $\mathbb{R}_+$) when $\mu\in(0,1)$. This completes our proof.
\end{IEEEproof}

\bibliographystyle{IEEEtran}    
\bibliography{IEEEabrv,Phase_retrieval}         

\end{document}